\let\footnote=\endnote
\definecolor{cornellred}{rgb}{0.7, 0.11, 0.11}
 \def\bibsep{\smallskipamount}%
\newenvironment{myprocedure}[1][htb]
{  %Update algorithm name
\begin{algorithm}[#1]%
}{\end{algorithm}}
\newcommand{\ubar}[1]{\underaccent{\bar}{#1}}
\begin{document}
%%%%%%%%%%%%%%%%

% Outcomment only when entries are known. Otherwise leave as is and
%   default values will be used.
%\setcounter{page}{1}
%\VOLUME{00}%
%\NO{0}%
%\MONTH{Xxxxx}% (month or a similar seasonal id)
%\YEAR{0000}% e.g., 2005
%\FIRSTPAGE{000}%
%\LASTPAGE{000}%
%\SHORTYEAR{00}% shortened year (two-digit)
%\ISSUE{0000} %
%\LONGFIRSTPAGE{0001} %
%\DOI{10.1287/xxxx.0000.0000}%

\RUNAUTHOR{Niazadeh et al.}

\RUNTITLE{Fast Core Pricing for Rich Advertising Auctions}

\TITLE{Fast Core Pricing for Rich Advertising Auctions}

\ARTICLEAUTHORS{%

\AUTHOR{Rad Niazadeh}
\AFF{Chicago Booth School of Business, University of Chicago, \EMAIL{rad.niazadeh@chicagobooth.edu}}

\AUTHOR{Jason Hartline}
\AFF{Computer Science Department, Northwestern University, \EMAIL{hartline@eecs.northwestern.edu}}

\AUTHOR{Nicole Immorlica}
\AFF{Microsoft Research New England, \EMAIL{nicimm@microsoft.com}}

\AUTHOR{Mohammad Reza Khani}
\AFF{Amazon, \EMAIL{khani87@gmail.com}}

\AUTHOR{Brendan Lucier}
\AFF{Microsoft Research New England, \EMAIL{brlucier@microsoft.com}}
}

\ABSTRACT{
Standard ad auction formats do not immediately extend to settings where multiple size configurations and layouts are available to advertisers. In these settings, the sale of web advertising space increasingly resembles a combinatorial auction with complementarities, where truthful  auctions such as the Vickrey-Clarke-Groves (VCG) can yield unacceptably low revenue.  We therefore study core selecting auctions, which boost revenue by setting payments so that no group of agents, including the auctioneer, can jointly improve their utilities by switching to a different outcome. Our main result is a combinatorial algorithm that finds an approximate bidder optimal core point with almost linear number of calls to the welfare maximization oracle. Our algorithm is faster than previously-proposed heuristics in the literature and  has theoretical guarantees. We conclude that core pricing is implementable even for very time sensitive practical use cases such as realtime auctions for online advertising and can yield more revenue. We justify this
claim experimentally using the Microsoft Bing Ad Auction data, through which we show our core pricing algorithm generates almost 26\% more revenue than VCG on average, about 9\% more revenue than other core pricing rules known in the literature, and almost matches the revenue of the standard Generalized Second Price (GSP) auction.

}

% Sample
%\KEYWORDS{deterministic inventory theory; infinite linear programming duality;
%  existence of optimal policies; semi-Markov decision process; cyclic schedule}

% Fill in data. If unknown, outcomment the field
\KEYWORDS{Sponsored search auctions, Core pricing, VCG auction, GSP auction, Sale of ad space, Combinatorial auction} 
%%%%%%%%%%%%%%%%%%%%%%%%%%%%%%%%%%%%%%%%%%%%%%%%%%%%%%%%%%%%%%%%%%%%%%
\maketitle
\DoubleSpacedXI 

\section{Introduction}
% Ad-auction first paragraph

Auctions with combinatorial preferences are prevalent in practice.
One prominent example is the \emph{sale of online advertising space} in search engines, also known as the \emph{rich advertising auction}~\citep{cavallo2017sponsored}, where advertisers with ads of various shapes, configurations, and decoration options have combinatorial and potentially complementary preferences over the space of ads. A common scenario is when each advertiser has several possible ads with varying sizes that take up a different number of lines. The
platforms selling the ad space (with limited number of lines) uses auctions to determine which of these ``rich ads"
will appear where, and at what price.

 %As more advertising options become available, these ad auctions must likewise be extended to effectively allocate and price the possible outcomes. 

Standard solutions for a simple separable position environment -- which is  the common format for the sponsored search, e.g., see ~\citealp{edelman2007internet} -- cannot easily be extended to the sale of ad space setting. Auctions such as the Generalized Second Price (GSP) auction are typically ill-defined in combinatorial environments, as their payment rule is tied to the setting where the allocation problem is only ranking the winner ads. Such an auction can only be extended in ad-hoc ways to a general combinatorial auction --e.g., charging each advertiser the bid of the next advertiser in the order regardless of ad sizes-- which lack economic grounding.

% General core auction first paragraph
%Auctions with combinatorial preferences are prevalent in practice.  Prominent examples include the sale of wireless spectrum and the sale of online advertising space.  In the former, companies have combinatorial and potentially complementary preferences over wireless bands in various geographical regions.  In the later, companies with ads of various shapes and configurations have combinatorial and potentially complementary preferences over ad space on search engines and online content pages like the New York Times.  In both settings, sellers use auctions to determine the allocation and payments.  

We instead consider combinatorial auctions for the sale of ad space problem.  Among possible options, two are of particular note: (i) Vickrey-Clark-Groves (VCG) auction \citep{vickrey1961counterspeculation, clarke1971multipart,groves1973incentives} and (ii) core-selecting auctions~\citep{ausubel2006lovely,milgrom2007package}.  
These auctions are direct revelation mechanisms and select the optimal welfare allocation with respect to the reported bids, i.e., the one with the maximum total declared value for the bidders.  Optimizing welfare itself is a computationally hard problem in many cases of interest (e.g., see \citealp{nisan2007computationally,sandholm2002algorithm}).  However, even in settings where this difficulty can be resolved satisfactorily (e.g., via heuristics that have exponential runtime in the worst case but tend to solve practical instances quickly), we still need to address the problem of computing the payments.

In VCG autions, the payment of a bidder is the externality that
he imposes on other bidders by consuming the resources allocated
to him.  While rarely used in practice \citep{ausubel2006lovely}, this
auction is a useful point of comparison as it has several important
theoretical properties.  First, among welfare-maximizing auctions, it has the unique payment rule (up to additive offsets) that
incentivizes truthtelling as a dominant strategy.  Second, given access to an oracle which computes the optimal
welfare allocation, it is computationally efficient: it only takes
$n+1$ oracle calls to compute the allocation and payments, where $n$ is the number of bidders in the auction.

Unfortunately, in the presence of complementarities, the revenue generated by VCG can be quite low compared to the bidders' values, and the resulting outcome can seem unfair.  Consider, for example, a setting with two items, $A$ and $B$, and three bidders, $1$, $2$, and $3$.  Suppose bidder $1$ only wants item $A$ and has a value of $\$100$ for it.  Similarly, bidder $2$ only wants item $B$ and also has a value of $\$100$ for it.  Bidder $3$ has complementary preferences.  He only wants both $A$ and $B$ and has a value of $\$101$ for this bundle.  In this setting, the VCG auction gives item $A$ to bidder $1$ and item $B$ to bidder $2$ and charges each of them a price of $\$1$ for a total revenue of $\$2$.  This revenue is both low compared to the values, and also seemingly unfair from the point of bidder 3 who would be willing to pay quite a bit more than the winners.  

Core-selecting auctions attempt to address both the revenue and
fairness issues of VCG by achieving outcomes that are immune to renegotiations among coalitions, while retaining welfare optimality with respect to the reported bids.  Here, payments are set such that no group
of bidders, including the auctioneer, can simultaneously improve
outcomes (with respect to reported bids) by switching to different allocations and payments. In the
above example, if the bidders bid their true values, then any set of
payments such that bidders $1$ and $2$ jointly pay at least $\$101$
while each paying at most $\$100$ could be the outcome of a
core-selecting auction. Notably, given reported bids, all possible core payments form a polytope in $\mathbb{R}^n$ -- which we refer to as the \emph{core polytope}. This polytope has economic grounding in cooperative game theory~\citep{osborne1994course}.

Core-selecting auctions are not truthful in the sense of dominant strategies, but there are core selecting auctions that minimally sacrifice the incentives. In particular,  \emph{Pareto optimal} -- also known as \emph{bidder optimal} -- core points are of interest in this paper. In these auctions,  it is not possible to reduce any bidder's payment and still remain in the core.  In the above example, having each of bidders $1$ and $2$ pay $\$100$ is a core outcome, but is not bidder-optimal; having bidder $1$ pay $\$50$ and bidder $2$ pay $\$51$ \emph{is} bidder-optimal. As demonstrated in \citealp{day2007fair,day2008core}, picking a bidder optimal core point minimizes the maximum of agents' utilities from any deviations from truthful reporting, subject to selecting a core outcome. Moreover, these auctions have natural (full-information) equilibria that generate welfare-optimal outcomes; see also \citealp{day2012quadratic} for a discussion on desirable practical properties of bidder-optimal core points and their variations. As a result, they have been used in practice to sell wireless spectra~\citep{cramton2013spectrum} and more recently are proposed for use in selling online advertising~\citep{goel2015core}.

To be truly practical for large frequent auctions like sponsored search ad auction, an auction must be (i)~{highly computationally
  efficient}, (ii)~{simple enough for implementation purposes}, and
(iii)~{extendable to
different economic objectives}. As core-selecting auctions output
optimal-welfare allocations, the computation associated with finding
such allocations is unavoidable. However, in practice and in
particular for the application of sale of ad-space, the
problem is typically structured so that it is possible to compute
welfare-optimal (or near-optimal) allocations quickly for realistic
preferences. In fact, as we elaborate more later, bidding languages commonly used in practice for sponsored ad
space express preferences for which one can compute (near)
%here is 
%such that we can 
optimal outcomes. The main issue we tackle in this paper is the
additional difficulty of computing bidder-optimal core payments, given
a a satisfactory solution to the welfare optimization problem. 

To make this separation clear, we assume oracle access to a slightly more general version of the welfare optimization algorithm.
%Our algorithm proceeds by making multiple calls to the welfare-optimization orcale with slightly modified valuation profiles.  Unlike VCG, 
This oracle can handle  \emph{truncations} of reported bids,
in which each bidder's bid for each bundle is shifted by the same additive offset (for more details see Definition~\ref{def:win-det}). Notice that while this oracle might be as computationally tractable as the vanilla welfare-optimization oracle -- which is indeed the case for ad auctions with simple bidding languages; see Section~\ref{sec:experiment}--  they can potentially impose further computational challenges.  Nevertheless, this form of oracle is common in the core auction literature~\citep{ausubel2002ascending}. Prior work on core computation assume access
to such an oracle and compute core outcomes using either
heuristics
or computationally intensive convex programming/linear programming methods~\citep{day2008core,day2007fair,erdil2010new,bunz2015faster}.

% (with a particular input
%model).\footnote{We indeed need a satisfactory solution to a slightly
%  more general version of the welfare optimization problem, namely
%  welfare optimization for \emph{truncated} valuation profiles, as
%  will be discussed later.} Prior work on core payments assumes access
%to such an oracle and computes core outcomes using either
%heuristics~\cite{day2008core,day2007fair,erdil2010new,bunz2015faster}
%or computationally intensive convex programming
%methods~\cite{day2012quadratic,day2007fair}.

%\nsi{What I meant here when I wrote this is that the bidding language is such that we can compute optimal outcomes, but it's a bit of a hack because the bidding language might not be structured such that we get optimal welfare allocations...  I don't know, should we delete this paranthetical comment?  we actually weren't even able to solve the welfare optimization problem for sale of space even though the bidding language is not fully general}.

\subsection{Main results} 
We consider both the theoretical problem of designing a fast algorithm to find a bidder-optimal core point in a general combinatorial auction (Sections~\ref{sec:water-fill}), and also the experimental problem of deploying our algorithm in the highly time sensitive sponsored search ad auction application with rich ads (Section~\ref{sec:experiment}). For the latter, we use Microsoft Bing ad auction models and data.
\vspace{1mm}

\noindent{\textbf{{Theoretical results: Fast core pricing rule.}}}
Our main result is a fast and simple deterministic algorithm for computing bidder-optimal core payments, given access to an oracle that finds a welfare-optimal allocation for (truncated)  profiles of buyer valuations. In contrast to prior work that find specific bidder optimal core points~\citep{day2007fair, day2012quadratic}, we only find a feasible bidder optimal point; in return, our algorithm provably makes almost linear calls to the oracle.
%\footnote{~Technically, we only require that the oracle returns the
%values of the bidders in a welfare-optimal outcome, and not the
%allocation itself.}
The algorithm also reveals some specific geometric structures of the core payments polytope.

\medskip
\textbf{Theorem (informal):} \emph{Given oracle access to a welfare-optimal algorithm (for truncated values),  there is a deterministic algorithm  computing an $\epsilon$-approximate bidder-optimal core point with $O(n \log (n/\epsilon))$ oracle evaluations and in time $O(n^2\log(n/\epsilon))$, where $n$ is the number of bidders.}
%Specifically, given a profile of valuations declared by the bidders, we assume access to an oracle that can find the welfare-optimal allocation for any truncation of the valuations; that is, 
\medskip

As discussed earlier, the set of core payments (or equivalently utilities) is a polytope, determined by the
(exponentially many) constraints that no subset of bidders can
simultaneously improve their utilities.  Intuitively, the algorithm
proceeds by exploring this polytope through making multiple calls to the oracle.  Starting from an arbitrary point
in the core, the algorithm attempts to increase bidder utilities,
which corresponds geometrically to following a positively-oriented ray
until hitting a facet of the polytope.  It then determines whether
there exists a subset of the bidders whose utilities can be increased
while remaining in the core, and if so continues to follow an
appropriate ray.  This process repeats until a bidder-optimal core
point is reached.

%Intuition for how the algorithm works: first geometric intuition, then a description in terms of water-filling and finding the first-violated core constraints.  Logarithmic factors due to binary search; linear because each iteration freezes some participant's bid.  

This geometric intuition corresponds to a water-filling algorithm, in
which the utilities of agents are simultaneously increased and frozen
as constraints become tight.  Implementing this approach requires an
efficient test for inclusion in the core polytope, as well as a method
for determining which bidders are involved in a tight core constraint.
It turns out that both questions can be expressed as queries to the
welfare-optimization oracle, with appropriately truncated valuation
profiles.  This result involves an analysis related to the geometry of the core
constraints.
%with additional guarantees about which violated constraints it discovers.
Given these tests, finding a tight core constraint can be done using
binary search (which is what generates the logarithmic factors in our
runtime).  The algorithm terminates after at most $n$ iterations,
since each iteration will freeze the payment of at least one bidder. Notice that this result is naturally related to the well-known
equivalence of optimization and separation, and indeed can be
interpreted as exploiting the geometry of the core to
implement and employ a specially-tailored separation oracle.
%\nsi{I changed first-violated which sounds weird to ``tight'' which, I know, isn't strictly speaking true, but as this is the intro, I thought it was fine.}

%\nsi{Added following comparison to VCG:} 
Compared to the VCG auction, our algorithm finds payments nearly as
quickly (specifically, with only a logarithmic factor more oracle
calls), and produces higher revenue and a more fair outcome on the
same profile of bids.  Notably, by varying the
direction of binary search in the core polytope, our algorithm can be
parameterized to favor different core outcomes.  For example, we can
maximize the minimum utility enjoyed by any winner (subject to being
in the core), or we can attempt to equalize, across all winners, the
ratio between their utility and the utility they would obtain in the
VCG outcome. We briefly explain this degree of flexibility in Section~\ref{sec:discussion}.

The main drawback of our proposed core pricing algorithm over VCG is the lack of truthfulness (similar to GSP). Computational guarantees are under worst-case and hence do not degrade with
strategic play. For revenue (or even welfare) objective, a more principled way is considering outcomes at equilibria of the auctions. As we discuss in Section~\ref{sec:discussion} and further in Section~\ref{sec:incentive-core} of the supplementary material, our
auction has a natural full information $\epsilon$-Nash equilibrium in which players
play truncation strategies, i.e., shade all values by an additive
constant.  This equilibrium, which has been studied in the literature on core selecting auctions~\citep{milgrom2007package,day2007fair} is welfare-optimal and has higher revenue
than VCG's dominant strategy equilibrium.  Natural bidder dynamics can converge to this equilibrium, specially for the sale of ad space problem, as we explain in Section~\ref{sec:incentive-core} of the supplementary material. We also shed insights on a few future directions to reason about the revenue at equilibrium, e.g., using the computational approach introduced in \citealp{bosshard2017computing,lubin2009quantifying,bunz2018designingEC} to approximate the Bayes Nash Equilibrium (BNE) of core selecting auctions.

%future
%on an alternative way of interpreting our results despite the incentive issues in Section~\ref{sec:incentive-core}, which is based on the computational  approach introduced in \citealp{bosshard2017computing,lubin2009quantifying,bunz2018designingEC} to approximate the Bayes Nash Equilibrium (BNE) of core selecting auctions.
%
% The
%computational properties are worst-case and hence do not degrade with
%strategic play.  But what about the welfare, revenue, and fairness
%properties?  As we discuss in Section~\ref{sec:discussion} (and later in Appendix~\ref{sec:incentivecore}), our
%auction has a natural full information $\epsilon$-Nash equilibrium in which players
%play truncation strategies (i.e., shade all values by an additive
%constant).  This equilibrium, which has been studied the literature on core selecting auctions~\citep{milgrom2007package,day2007fair} is welfare-optimal and has higher revenue
%than VCG's dominant strategy equilibrium. We elaborate on the reasons for why we believe natural dynamics run by bidders can converge to this equilibrium, specially for the sale of ad space problem, in Appendix~\ref{sec:incentivecore}. 
%
%We also elaborate briefly on an alternative way of interpreting our results despite the incentive issues in Section~\ref{sec:incentive-core}, which is based on the computational  approach introduced in \citealp{bosshard2017computing,lubin2015new,lubin2009quantifying,bunz2018designing,bunz2018designingEC} to approximate the Bayes Nash Equilibrium (BNE) of core selecting auctions.

\vspace{1mm}

%reduce each winner's VCG utility by the same multiplicative amount.
\noindent{\textbf{{Experimental results: Sale of ad space problem.}}} We conduct a numerical study on using our core selecting auction for the sale of ad space problem.
%Online ads with multiple configurations (i.e., decorations) are becoming the most prominent way of displaying advertisements on the Web.  The problem of pricing ads with multiple configurations is an active field of investigation for search engines' ranking, allocation, and pricing teams. 
%
We test the performance of our proposed core auction using the data collected from Microsoft Bing ad auction for text ads.  Our dataset consists of around $20,000$ auctions, each auction containing at most $23$ advertisers, with configurations ranging from the basic three lines up to fourteen lines.
%, with full decorations such as popular tabs on the landing page of the ad, twitter links, etc. 
%As mentioned earlier, designing ad auctions is more complicated in this combinatorial setting, and standard solutions are either ill-defined (such as the Generalized Second Price (GSP) auction) or generate low revenue (such as VCG). 
%One solution is to use core selecting auctions that are implementable at equilibrium~\cite{goel2015core}. However, the main drawback of core-selecting auctions in the literature is that they are either moderately fast heuristics (still much slower than quasi-linear oracle complexity) with no theoretical guarantees for convergence~\cite{day2007fair,day2012quadratic,erdil2010new,bunz2015faster}, or are complex and slow convex programming algorithms. 
%
%Having a simple and fast enough algorithm to output core prices with desirable equilibrium properties opens the possibilities of applying core pricing in realtime for the sale of ad space. Our proposed payment  makes at most $\tilde{O}(n)$ oracle calls to the truncated maximum welfare allocation, and in constrast to previous core-selecting pricing algorithms has theoretical guarantees. 
%Note also that in such a setting one may be able to run the oracle efficiently,  or one can replace the optimal allocation with an approximate Maximal-In-Range allocation (see \cite{dobzinski2007mechanisms}), yet get the same guarantees of a core point but just for the core polytope restricted to that range.
%
We compare the performance of our core auction against VCG, minimum revenue core auction of \cite{day2007fair}, quadratic payment rule of \citealp{day2012quadratic}, and a linear programming based method that finds a minimum revenue point in the core by running the volumetric-barrier based cutting plane method of \citealp{vaidya1989new,vaidya1996new}. We also compare with two ad-hoc variants of GSP: (i)  GSP that generates the optimal allocation and charges each ad a price equal to the bid of the next ad in the ordering, and (ii) GSP that uses a greedy allocation -- ignoring ad sizes -- with the same payment rule.  

All of the above algorithms except the GSP auction with greedy allocation require an allocation oracle that finds the optimal welfare (potentially with truncated valuation profiles).  We chose to implement such an optimal oracle.  We show this oracle has an implementation that runs in time polynomial in the number of ads and amount of ad space. See Section~\ref{sec:ad-space-allocation} for more details. 

Our numerical study indicates the existence of complementarities in our setting, which is induced by variable length ads and the fact that we have a fixed amount of space for ads to be shown. See Remark~\ref{remark:complement} for an experimental justification, and Example~\ref{example:complement} for a theoretical justification. As a result, VCG leads to significantly lower revenue than all of core auctions. Considering the reported bids as a proxy for true valuations and analyzing the advertisers at the full information Nash equilibrium of bidder optimal core auctions (see Section~\ref{sec:bid-collection} and Section~\ref{sec:incentive-core} in the supplementary material for more details), our experimental results suggest that our core auction obtains $26\%$ more revenue than VCG, and minimum revenue core auctions obtain $15\%$ more revenue than VCG. Moreover, our core auction almost matches the revenue of GSP and obtains $5\%$ improvement over GSP with greedy allocation.

The sale of ad space problem is highly time sensitive (as Bing runs around 9.6 billion auctions in a month, which is on average $3.7$ auctions every millisecond), and both running time and number of query calls to an optimal welfare allocation oracle matter when measuring the performance of an auction in this domain. We use VCG as a benchmark to compare the running time and query complexity of different methods. Our numerical experiments suggest that our fast core pricing rule yields an improvement in terms of speed (at least $6$-$10$ times faster) over the other core pricing methods we test~\citep{day2007fair,day2012quadratic} and the LP based method for finding a minimum revenue core point using Vaidya's algorithm~\citep{vaidya1989new,vaidya1996new}. Moreover, the running time of our core pricing algorithm is no more than $7.5$ times the running time of VCG on average based on our numerical experiments, which we believe is still in an acceptable speed range for the sale of ad space application. As expected, both ad-hoc variants of GSP have faster running time compared to all core pricing methods and VCG (as they need at most one call to the optimal welfare allocation oracle), but they lack any economic groundings. 

In terms of query complexity, our core pricing algorithm makes far fewer calls to this oracle compared to other core pricing methods. Our numerical experiments suggest that the number of oracle calls of both of the core pricing methods in \cite{day2007fair} and \cite{day2012quadratic} are almost $2.1$ times of our algorithm, and Vaidya's algorithm makes almost $5.9$ times number of calls to the oracle compared to our algorithm. At the same time, our algorithm makes only $3.1$ times the number of calls of the VCG auction (which theoretically speaking makes exactly $n+1$ calls to the oracle, where $n$ is the number of bidders). We believe this query complexity positions our algorithm in an acceptable range to be used for the sale of ad space application.

%%i.e.\ the knapsack algorithm.  
%%using a knapsack algorithm.
%%While in the worst case this oracle has a high running time, the instances in our experiments were solved quickly.  For example, 
%In our experiments, for ad space with 20 lines, our algorithm, which makes $\tilde{O}(n)$ calls to this oracle, ran in under 55 milliseconds per auction, compared to 11 milliseconds for VCG (which makes $n+1$ calls to the oracle) and 3 milliseconds for GSP (which makes a single call to the oracle). While being slower than GSP and VCG, our auction still is faster than blackbox linear programming or running sophisticated iterative heuristics, and it is still in the acceptable speed range for ad-auctions. 
%%Our code is publicly available for reference.\footnote{Unfortunately, the data cannot be shared per Bing's policy of protecting advertisers' data.}
%% BL: Removed this, as it's not publicly available yet (just a supplementary material)

%\textbf{Todo: }  Comparison to VCG, computation time, incentives, fairness.  Discussions about equilibrium properties of bidder optimal points, fairness w.r.t VCG if we use the VCG-pursuing variant of our algorithm, the requirement for oracle model, and the fact that our algorithm is parameterized by a direction at each iteration (and therefore can potentially find the entire Pareto frontier). [\textbf{BJL: } this can partially be a stub, forward-pointing to the discussion section].

\subsection{Further Related work}
\label{sec:related-work}
The \emph{ascending proxy auction} was proposed by~\citealp{ausubel2002ascending} as an alternative to VCG to resolve practical  issues such as low revenue. This ascending auction terminates at a bidder optimal core outcome. This line of research was further developed by considering the general notion of core-selecting package auctions~\citep{day2007fair,day2008core,day2012quadratic}. These auctions are used extensively in the public sector where it is of great concern that auctions are efficient and renegotiation-proofs (e.g., see \citealp{ausubel1999optimality}).
%In particular~\cite{day2007fair,day2008core,lubin2009quantifying} demonstrate that any bidder-optimal core payoff vector induces a semisincere (a.k.a. truncated) strategy that is a full-information Nash equilibrium, and unless bidder valuations satisfy gross-substitute the generated revenue at the equilibrium is strictly more than VCG revenue~\cite{day2008core}. 

Bidder optimal core outcomes are implementable at equilibrium, but are not truthful.  Payment rules that pick particular points in the core have been proposed in the literature to mitigate this issue. 
%, while satisfying economic objectives such as fairness, alternative core payment rules have been proposed in the literature. 
For example, the difference between a final payment and the VCG payment represents a measure of ``residual incentive to misreport''. \citealp{day2007fair} therefore proposed the minimum-revenue point and \citealp{day2012quadratic} proposed the closest-to-VCG point, and they showed these are the bidder optimal core points that minimize $\ell_1$-norm and $\ell_2$-norm of this difference respectively. An alternative practical core payment rule proposed by \citealp{erdil2010new} considers robustness with respect to the submitted bids. More recently, computational approaches to approximate the BNE of different core payments have been proposed~\citep{bosshard2017computing,lubin2009quantifying,lubin2015new,bunz2015faster,bunz2018designingEC,bunz2018designing}, which was further used for an algorithmic search for ``good" core-selecting rules~\citep{bunz2018designingEC,bunz2018designing}

%As discussed above, computing core points is at least as hard as the winner determination problem, which is effectively a separation oracle for the core polytope~\cite{day2012quadratic}. 
%One can therefore optimize convex objectives over core polytope using the Ellipsoid algorithm~\cite{grotschel1981ellipsoid}, but this is rather slow in practice.  Various heuristic algorithms have been proposed for linear objectives of revenue (i.e., the \emph{Core Constraint Generation algorithm})~\cite{day2007fair,bunz2015faster}, or the quadratic objective of distance to VCG~\cite{day2012quadratic,erdil2010new}. 
%%These work led to fast algorithms in practice for computing different bidder optimal core payment rules.

The winner determination problem for truncated values is effectively a separation oracle for the core polytope. Therefore, given access to this separation oracle, one can  optimize convex objectives (e.g., revenue or $\ell_2$-distance from VCG) exactly over the core polytope in polynomial time using the Ellipsoid algorithm~\citep{grotschel1981ellipsoid}. However, this approach is rather slow in practice. Core pricing rules of \citealp{day2007fair} and \citealp{day2012quadratic} take a  different approach: they  keep track of a small relaxation to core polytope that iteratively becomes tighter, and at each iteration solve a linear programming or quadratic programming (based on the objective function) to eventually find a feasible core point. To update the relaxation, these methods rely on a \emph{core constraint generation} family of algorithms~\citep{day2007fair,day2012quadratic,bunz2015faster}, simplest of which send queries to the core separation oracle to find the most violated core constraint and add it to the current relaxation of the core to make it tighter. 

Given the separation oracle for the core polytope, another approach is using ``cutting plane methods" to utilize this oracle efficiently in order to find a minimum revenue core point much faster than the Ellipsoid method. One such option is Vaidya's algorithm~\citep{vaidya1989new,vaidya1996new}, which is essentially an interior point method using the volumetric barrier (see \citealp{bubeck2015convex} for details). Moreover, it finds an $\epsilon$-close point to a minimum revenue core point with oracle complexity of $O\left((n/\epsilon)\log n\right)$ and extra computation of $O\left(n^4/\epsilon\right)$. Alternatively, we can use the more recent fast cutting-plane methods for the feasibility linear programs of \citealp{lee2015faster} in a black-box fashion, which then finds an $\epsilon$-close minimum revenue core point with high probability, using $O\left(n^{(1+\epsilon)} \log^{O(1)}(n/\epsilon)\right)$ oracle evaluations in expectation and additional time $O\left(n^3\log^{O(1)}(n/\epsilon)\right)$. Our algorithm has several advantages over this general purpose approach in terms of running time, oracle complexity, usage of randomness, and practicality. See Section~\ref{sec:discussion} for a comprehensive theoretical comparison, and Section~\ref{sec:practical-considerations} in the supplementary material for a list of practical obstacles to use this approach in our sale of ad space application.

Numerous work have proposed core auctions specifically to address known drawbacks of VCG auctions.
The Bayesian setting was studied by~\citealp{ausubel2010core}, and more recently by~\citealp{sano2012non} and~\citealp{goeree2016impossibility}, providing theoretical and experimental evidence that the revenues and efficiency from core-selecting auctions improve as correlations among bidders' values increase, while the revenues from the VCG auction worsen.  Another issue with VCG is the lack of revenue monotonicity: adding bidders or increasing bids can potentially decrease the seller's revenue~\citep{goel2014revenue,lamy2010core,rastegari2011revenue,beck2009revenue}. 
\citealp{beck2009revenue} and \citealp{lamy2010core} proposed revenue-monotone core selection. 

%In this paper we find an $\epsilon$-close point to a bidder optimal core point, in a setting where the core is guaranteed to exist and this closeness to the edges of the core is an artifact of our algorithm. This should be distinguished from the concept of the ``least core" in cooperative games that approximates a core that does not exist in a more general cooperative game~\citep{schmeidler1969nucleolus}.

Sponsored search has been a central research topic in the past decade~\citep{edelman2007internet,aggarwal2006truthful,wilkens2016mechanism,wilkens2017gsp}. The closest in the literature to our sale of ad space application is the work of~\cite{cavallo2017sponsored} that considered the algorithmic question of finding allocation and pricing for rich ads, but did not study core pricing specifically.

%Many applications of core selecting auctions have been reported in different applied economic settings, such as FCC spectrum auctions as the most prominent example( see \citet{cramton2013spectrum} or \citet{milgrom2007package} for a survey), and more recently in online advertising auctions~(\citet{goel2015core})

%rules that are computationally hard to find, but yet interesting from a practical point of view. On a separate note,  
%Many applications of core selecting auctions have been reported in different applied economic settings, such as FCC spectrum auctions as the most prominent example( see \citet{cramton2013spectrum} or \citet{milgrom2007package} for a survey), and more recently in online advertising auctions~(\citet{goel2015core})

%Notes: 
%\begin{itemize}
%\item will the knee-jerk reaction of SODA refs be that we should use reserves to pump up revenue?  should we talk about how this requires a prior and is not efficient?  
%\item also, people think GSP when they hear ad auction.  should we mention how this is not appropriate in combinatorial settings, and even ill-defined?
%\item if Taghi has anything vaguely related to this, we'd better cite it and give it due respect or he'll pan our paper. 
%\item also should cite the VCG paper that computes payments with one oracle call, i think renato co-authored it.
%\end{itemize}
\section{Preliminaries}
We describe our algorithm and theoretical results in a general model of combinatorial auctions first, then return to a restricted model tailored to the ad auction setting in Section~\ref{sec:experiment}.
Consider a combinatorial auction with $n$ bidders and $m$ items. Let $N\triangleq \{1,2,\ldots,n\}$ be the set of bidders.  The auction asks each bidder to declare a valuation function, which assigns a value to each subset of the items.  We will write $b_i$ for the valuation function submitted by agent $i$.  We assume that valuation functions are normalized so that $b_i(\emptyset) = 0$, and all values are in $[0,1]$.  An allocation is an assignment of item bundles to agents, $\{x_i\}_{i\in N}$, such that $x_i\in 2^{[m]}$ for each $i$ and $x_i\cap x_j=\emptyset$ for all $i \neq j$. 

Given the bid functions $\{b_i\}_{i \in N}$, where $b_i:2^{[m]}\rightarrow \mathbb{R}$ maps each bundle of items to a real number,  the auction will return an \emph{outcome}. An outcome is consisting of an allocation $\{x_i\}_{i \in N}$ and a payment $p_i \geq 0$ for each bidder. Since our focus is the computational problem of finding core payments, and not buyer incentives, we do not differentiate between true and declared valuations in our notation.  With this convention, $\pi_i$ denotes the resulting utility of bidder $i$, so that $\pi_i = b_i(x_i) - p_i$.  We write $\pi_0$ for the seller's revenue.
%\subsection{Core-selecting auctions: notations and basics}
Throughout this paper, we assume that the auction's allocation rule does not allocate sets of zero value. We say that a feasible allocation $\{x^*_i\}_{i\in N}$ is welfare-maximizing if and only if $\{x^*_i\}_{i\in N}\in\underset{x: x_i\cap x_j=\emptyset, i\neq j}{\textrm{argmax}}\sum_{i\in N }b_i(x_i)$. For any feasible allocation $\{x_i\}_{i\in N}$, let $W(\{x_i\}_{i\in N})\triangleq\{i\in N: x_{i}\neq \emptyset\}$ be the corresponding winners of this allocation.   Let $w(S,\{b_i(\cdot)\}_{i\in N})$ be the maximum welfare of coalition $S\subseteq N$ with respect to the bids $\{b_i\}$. That is, $w(S,\{b_i(\cdot)\}_{i\in N})$ is the maximum welfare of any allocation that assigns items only to agents in $S$. Unless noted otherwise, we write $w(S)\triangleq w(S,\{b_i(\cdot)\}_{i\in N})$ for notational convenience. 
%Note that since $b_i(\emptyset)=0$ for all $i$, if $i\in N\setminus W(\{x_i\})$, then $b_i(x_i)=0$). Let $\forall i,\forall x: b_i(x)\leq 1$.
\begin{definition}[\textbf{Winner Set}]
A subset $W\subseteq N$ of bidders is a \emph{winner set with respect to bids $\{b_i(\cdot)\}_{i\in N}$} if and only if there exists a feasible allocation $\{x^*_i\}_{i\in N}$ such that 
\begin{equation}
\{x^*_i\}_{i\in N}\in\underset{x: x_i\cap x_j=\emptyset, i\neq j}{\operatorname{argmax}}\sum_{i\in N }b_i(x_i)
\end{equation}
and  $W=\{i\in N: x^*_{i}\neq \emptyset\}$. Let $\mathcal{W}(\{b_i(\cdot)\}_{i\in N})$ be the set of winner sets with respect to $\{b_i(\cdot)\}_{i\in N}$.
\end{definition}

\begin{definition}[\textbf{Core}]
\label{def:core}
A vector of non-negative utilities $\{\pi_i\}_{i\in N\cup\{0\}}$ is said to be \emph{in the core with respect to the submitted bids $\{b_i(\cdot)\}_{i\in N}$} if no blocking coalition exists, that is, no group of bidders plus the seller can deviate to simultaneously improve their outcomes, including the seller's revenue:
\begin{equation}
\label{eq:core}
\forall	S\subseteq N: \pi_0\geq w(S)-\sum_{i\in S}\pi_i.
\end{equation}
%recalling that $\pi_0$ denotes the seller's revenue. 
Similarly, an outcome $\{(x_i,p_i)\}_{i\in N}$ is said to be \emph{in the core} if its corresponding vector of utilities $\{\pi_i\}_{i\in N\cup\{0\}}$ is in the core, where $\pi_i=b_i(x_i)-p_i$ for $ i\in N$ and $\pi_0=\sum_{i\in N}p_i$.
\end{definition}
It is easy to see that any core outcome $\{(x_i,p_i)\}_{i\in N}$  is welfare-maximizing, as for coalition $N$ we have $\sum_{i\in N}b_i(x_i)=\pi_0+\sum_{i\in N}\pi_i\geq w(N)$. Therefore, one can rewrite Equation~\ref{eq:core} as
\begin{equation}
\label{eq:core-alter}
 \forall S\subseteq N: w(N)-\sum_{i\in N}\pi_i\geq w(S)-\sum_{i\in S} \pi_i~.
\end{equation} 
%When clear from context, we will sometimes abuse notation slightly and use a set of bidders $S$ to denote the associated core constraint \eqref{eq:core-alter}.

We have defined the core with respect to utilities and with respect to outcomes.  The following lemma shows that given a core point in utility space, it is possible to reconstruct core allocations and payments.  This motivates us to focus on the problem of computing core points in utility space. 
%how one can describe a core point in utility space as an outcome in the core, and therefore it is enough to focus on utility space. 
\begin{lemma}
Given a core point $\{\pi_i\}_{i\in N}$ and any maximum welfare allocation $\{x^*_i\}_{i\in N}$,  let $p^*_i=b_i(x^*_i)-\pi_i$ for all $i\in N$. Then $\{(x^*_i,p^*_i)\}_{i\in N}$ will be an outcome in the core.   
\end{lemma}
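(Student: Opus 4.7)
The plan is to verify the two things needed for $\{(x^*_i, p_i)\}_{i\in N}$ to qualify as an outcome in the core: first, that each payment $p_i$ is non-negative (so that the definition of an outcome is satisfied), and second, that the induced utility vector, together with the induced seller revenue $\pi_0' := \sum_{i\in N} p_i$, satisfies the blocking-coalition inequalities in Definition~\ref{def:core}.

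For non-negativity of payments, I would work through the alternate form of the core given in Equation~\ref{eq:core-alter}, which applies here because any core point is welfare-maximizing and thus $\pi_0 = w(N) - \sum_{i \in N} \pi_i$. Instantiating this inequality with the coalition $S = N \setminus \{i\}$ gives $\pi_i \le w(N) - w(N \setminus \{i\})$. The right-hand side is upper-bounded by $b_i(x^*_i)$ because the restricted allocation $\{x^*_j\}_{j \ne i}$ is feasible for the coalition $N \setminus \{i\}$, so $w(N \setminus \{i\}) \ge \sum_{j \ne i} b_j(x^*_j) = w(N) - b_i(x^*_i)$. Combining, $\pi_i \le b_i(x^*_i)$, whence $p_i = b_i(x^*_i) - \pi_i \ge 0$.

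For the core inequalities on the induced outcome, I would note that by construction the bidder utilities remain exactly $\pi_i$, while $\pi_0' = \sum_{i\in N} p_i = \sum_{i\in N} b_i(x^*_i) - \sum_{i\in N} \pi_i = w(N) - \sum_{i\in N}\pi_i$, using welfare-maximality of $\{x^*_i\}$. Then for every $S \subseteq N$, the required inequality $\pi_0' \ge w(S) - \sum_{i\in S} \pi_i$ reduces, after substitution, to exactly Equation~\ref{eq:core-alter} applied to $S$, which is guaranteed by the hypothesis that $\{\pi_i\}_{i \in N}$ comes from a core point. Non-negativity of $\pi_0'$ follows from non-negativity of each $p_i$.

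I do not foresee a serious obstacle here: the content of the lemma is really a book-keeping check that a bidder-utility vector in the core, which a priori lives in a polytope defined by an exponential family of inequalities, can be realized by concrete prices against any welfare-maximizing allocation. The only step that requires a small argument rather than pure substitution is the non-negativity of $p_i$, where one must recognize that the $S = N \setminus \{i\}$ constraint in the core is precisely what caps $\pi_i$ by the marginal contribution of bidder $i$, which in turn is at most $b_i(x^*_i)$.
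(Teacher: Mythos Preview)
Your proof is correct and follows essentially the same route as the paper's: both reduce the lemma to showing $p_i \ge 0$ by instantiating the core constraint \eqref{eq:core-alter} at $S = N\setminus\{i\}$ and then bounding $w(N) - w(N\setminus\{i\}) \le b_i(x^*_i)$ via feasibility of the restricted allocation. The only difference is cosmetic: the paper dispatches the second half (that the induced utilities satisfy the core inequalities) in one line by appealing to Definition~\ref{def:core}, whereas you spell out explicitly that $\pi_0' = w(N) - \sum_{i}\pi_i$ and that the required inequality is exactly \eqref{eq:core-alter}.
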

\proof{Proof.}
By Definition~\ref{def:core}, as long as payments are non-negative, we have the outcome $\{(x^*_i,p^*_i)\}_{i\in N}$ to be in the core. So, we only need to show for every $i\in N$,  $\pi_i\leq b_i(x^*_i)$. By looking at the core constraint for coalition $N\setminus\{i\}$, we have:
\begin{align*}
w&(N)-\sum_{j\in N}\pi_j\geq w(N\setminus\{i\})-\sum_{j\in N\setminus\{i\}}\pi_j\Rightarrow\nonumber\\
 \pi_i&\leq w(N) - w(N\setminus\{i\})=\sum_{j\in N}b_j(x^*_j)-w(N\setminus\{i\})\nonumber\\
 &\leq \sum_{j\in N}b_j(x^*_j)-\sum_{j\in N\setminus\{i\}}b_j(x^*_j)=b_i(x^*_i)~,
\end{align*} 
where the last inequality holds as $\{x^*_j\}_{j\in N\setminus\{i\}}$ is a feasible allocation for bidders $N\setminus\{i\}$. \hfill\Halmos
\endproof

Our goal is to compute bidder optimal core points, i.e., points in the core that are not dominated by any other point in the core with respect to bidder utilities. We relax this condition by defining $\epsilon$-bidder optimal core points, i.e., core points that are $\epsilon$-close to being bidder Pareto optimal. 
\begin{definition}
\label{def:epsbidderoptimal}
 A vector of utilities $\{\pi_i\}_{i\in N}$ is said to be \emph{$\epsilon$-bidder optimal} for $\epsilon>0$, if 
\begin{itemize}
\item $\{\pi_i\}_{i\in N}$ is in the core with respect to the submitted bids $\{b_i(\cdot)\}_{i\in N}$, and
\item There exists no other core point $\{\pi'_i\}_{i\in N}$ such that for every bidder $i\in N$,  $\pi'_i\geq \pi_i$, and for at least one bidder $j$, $\pi'_j>\pi_j+\epsilon$.
\end{itemize}
Similarly, an outcome $\{(x_i,p_i)\}_{i\in N}$ is called $\epsilon$-bidder optimal if its utility vector is $\epsilon$-bidder optimal. 
\end{definition}

Observe that the set of core points forms a polytope in utility space, described by the (exponentially many) constraints of the form in inequality~\eqref{eq:core-alter}.
To understand the structure of this polytope, it is instructive to consider the core constraint in Definition~\ref{def:core} whose right-hand side takes the maximum value. Therefore, we define the notion of \emph{maximum binding core constraints.}
\begin{definition}
\label{def:mostbinding}
Given a vector of utilities $\{\pi_i\}_{i\in N}$, a subset $S\subseteq N$ is a \emph{Maximum Binding Core Constraint (MBCC)} if and only if $\displaystyle S\in \underset{S'\subseteq N}{\operatorname{argmax}} ~w(S') -\sum_{i\in S'}\pi_i$.

\end{definition}
\begin{definition}
\label{definition:ep-tight} 
Given a vector of utilities $\{\pi_i\}_{i\in N}$, a subset $S\subseteq N$ 
is an \emph{$\epsilon$-tight core constraint} (or a tight core constraint when $\epsilon=0$) if and only if
\begin{equation}
 w(S) -\sum_{i\in S}\pi_i\leq w(N)-\sum_{i\in N}\pi_i\leq w(S) -\sum_{i\in S}\pi_i+\epsilon~.
\end{equation}
\end{definition}

%% REMOVED AD AUCTIONS EXAMPLE
%

\section{The Water-Filling Algorithm}
\label{sec:water-fill}
In brief, our algorithm is a water-filling method that starts from an arbitrary point in the core (e.g., pay-your-bid, where all agents receive utility $0$) and then at each iteration does the following: 
\begin{itemize}
%\vspace{-1mm}
\item \emph{Finding a feasible direction}: It first finds a subset of bidders such that if we increase all of their utilities uniformly by a small amount, we still remain in the core.
%\vspace{-1mm}
\item \emph{Uniform utility increase}: It then increases the utilities for those bidders uniformly, until it hits a facet of the core (approximately).
%\vspace{-1mm}
\item \emph{Checking termination condition/tepeat:} Finally, it checks whether there exists any remaining subset of bidders who can increase their utilities  and still remain in the core. If so, the algorithm iterates. Otherwise it terminates.
%\vspace{-1mm}
\end{itemize} 
 
 See Figure~\ref{fig:algorithm} for an overview of the algorithm, and its geometric interpretation. 
 \begin{figure}[htb]
\begin{center}
figures \includegraphics[scale=0.7]{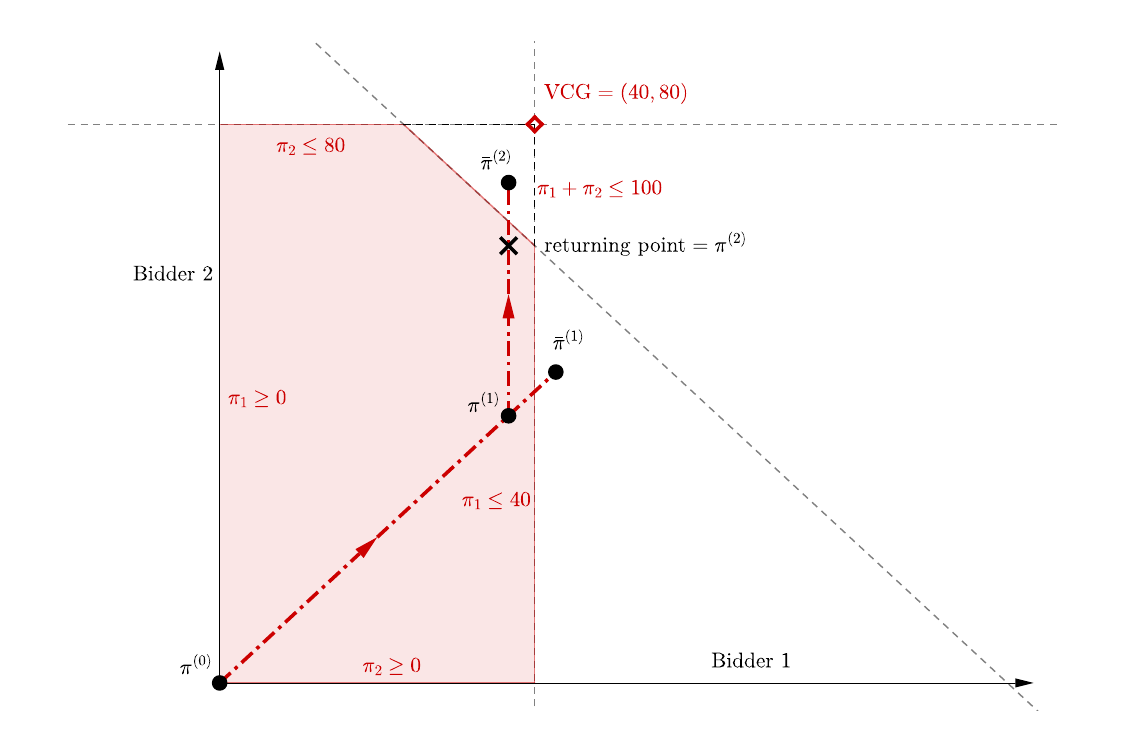}
\caption{Water-filling algorithm for the following example in~\citealp{day2007fair}. $2$ items A and B, $5$ bidders; bids are as follow (each bidder submits one bid): $b_1(A)=60$, $b_2(B)=100$, $b_3(AB)=60$, $b_4(A)=20$, and $b_5(B)=20$. The filled area is the core-polytope, and the arrow shows the path that algorithm follows. Note that $S_0=\{1,2,3,4,5\}$, $T_0=\{1,2\}$, $S_1=\{1,2\}$, $T_1=\{2,4\}$, $S_2=\{2\}$, $T_2=\{3\}$ and $S_3=\emptyset$.}
\label{fig:algorithm}
\end{center}
\end{figure} 
% \bjl{I would omit most of the following, and simply say here that we assume access to an oracle for determining the set of winners and their payoffs, as discussed in the introduction.  Any arguments about why this is reasonable should be in the introduction.  (I think it's enough to argue that winner determination being solvable is a prerequisite for wanting to use core payments in the first place.)}

%Note that this computation model, which will be explained in more details in Section~\ref{sec:windet}, is the widely acceptable method of computation in the literature of core selecting package auctions~\cite{}~\footnote{In fact, core selecting auctions are considered as generalizations to the celebrated VCG auction, and therefore this justifies why a winner determination oracle is required for efficient implementation.}, and in fact it has been demonstrated that
%finding core outcomes is indeed of equivalent complexity as winner determination in a combinatorial auction (See \cite{}). In Section~\ref{sec:alg}, by employing the winner determination oracle efficiently, we show how to implement the three steps of our algorithm efficiently at each iteration. In Section~\ref{sec:proof} we demonstrate that our algorithm finds an $\epsilon$-bidder optimal core point for any $\epsilon >0$, with quasi-linear in $n$ and $\frac{1}{\epsilon}$ (in fact, $O(n\log(n/\epsilon)$) number of oracle calls.

\subsection{Winner Determination Oracle Model}
\label{sec:windet}
In this paper, we focus on the \emph{winner determination oracle model} as our model of computation. More accurately, suppose we have access to a winner determination oracle under sincere strategies, where a sincere strategy is a truncation of $\{b_i(.)\}$ by offsets $\{\pi_i\}$, that is, $\{\max(b_i(.)-\pi_i,0)\}$. The oracle accepts submitted bids and the required truncations as input, and then simulates truncated bids to find a maximum welfare allocation. To make this more concrete, we define the oracle $\texttt{WIN-ORAC}$.
\begin{definition}
\label{def:win-det}
 Let $\texttt{WIN-ORAC}$ be a black-box oracle with this input-output relation:
\begin{itemize}

\item \texttt{Input:} submitted bids $\{b_i(.)\}_{i\in N}$ and required truncations $\{\pi_i\}_{i\in N}.$
\item \texttt{Output}: max welfare with truncated bids $:= w(N,\{\max(b_i(.)-\pi_i,0)\}_{i\in N})$, and a winning set, i.e., $S\in \mathcal{W}(\{\max(b_i(.)-\pi_i,0)\}_{i\in N})$ (or \texttt{NULL} if no such set exists).
\end{itemize}
\end{definition}

%BJL: Removed the following, since it seems out of place here.
%As we will show in Section~\ref{sec:proof}, the algorithm terminates in at most $\lvert N\rvert$ number of iterations and at each iteration it only requires $\tilde{O}(1)$ many query calls to the oracle $\texttt{WIN-ORAC}$.

\subsection{The Algorithm}
\label{sec:alg}
   Here is the full description of our algorithm (Algorithm~\ref{alg:bidderopt}). It uses the subroutine \texttt{Core-Search} (Procedure~\ref{alg:binary search}) at each iteration, which is basically binary-search to find the next feasible subset of coordinates/bidders quickly, and it only requires $\tilde{O}(1)$ number of calls to the oracle $\texttt{WIN-ORAC}$.
  
  \begin{algorithm}[htb]
	\caption{Water-filling for finding an $\epsilon$-bidder optimal core point}
	 \label{alg:bidderopt}
\KwIn{submitted bids $\{b_i(.)\}$, set of bidders $N$, and $\epsilon>0$}
	 \vspace{1mm}
	 
 		\textbf{initialize} $\pi^{(0)}_i=0$ and $\bar{\pi}^{(0)}_i=0$ for all $i\in N$
 		 ~~\tcp{utilities in pay-your-bid auction.}
 	
 		 $t\leftarrow 0$, $S_0\leftarrow N$
 		 
 		 \While{$S_t\neq \emptyset$}{
 		 $T_t$ $\gets$ any set in $\mathcal{W}(\{\max(b_i(.)-\bar{\pi}^{(t)}_i,0)\})$ \tcp{needs a query call to \texttt{WIN-ORAC}.}
 		 
 		 $S_{t+1}\gets S_t\cap T_t$
 		 
 		 \If{$S_{t+1}\neq\emptyset$}{
 		 Run procedure \texttt{Core-Search}$(\{b_i(.)\}, \pi^{(t)},S_{t+1},\epsilon)$ to return $\{\bar{\pi_i}\}_{i\in N}$ \& $\{\ubar{\pi}_i\}_{i\in N}$
 		 
 		 $\pi^{(t+1)}_i\leftarrow \ubar{\pi}_i~,~ i\in N$
 		 
 		 $\bar{\pi}^{(t+1)}_i\leftarrow \bar{\pi}_i~,~ i\in N$
 		 
 		 $t\gets t+1$	
 		 }
 		 
 		 \textbf{return} $\pi^{(t)}$
 		 
 		 }
\end{algorithm}

%   \begin{algorithm}[ht]
%  \caption{\textbf{(Water-filling for finding an $\epsilon$-bidder optimal core point)}
%      \label{alg:bidderopt}}
%  \begin{algorithmic}[1]
%  \State \textbf{parameters:} $\epsilon>0$.
%	\State \textbf{input:} submitted bids $\{b_i(.)\}$, set of bidders $N$. 
%	\State \textbf{initialize:} $\pi^{(0)}_i=0$ and $\bar{\pi}^{(0)}_i=0$ for all $i\in N$. \Comment utility vector of the pay-your-bid auction.
%	\State $t\leftarrow 0$, $S_0\leftarrow N$.
%	\While{$S_t\neq \emptyset$}
%	\Let{$T_t$}{any set in $\mathcal{W}(\{\max(b_i(.)-\bar{\pi}^{(t)}_i,0)\})$}\Comment needs a query call to \texttt{WIN-ORAC}.
%	\Let{$S_{t+1}$}{$S_t\cap T_t$}
%	\If{$S_{t+1}\neq\emptyset$}
%	
%	\State Run procedure \texttt{Core-Search}$(\{b_i(.)\}, \pi^{(t)},S_{t+1})$ to return $\{\bar{\pi_i}\}_{i\in N}$ \& $\{\ubar{\pi}_i\}_{i\in N}$. 
%	\State $\pi^{(t+1)}_i\leftarrow \ubar{\pi}_i~,~ i\in N$.
%	\State $\bar{\pi}^{(t+1)}_i\leftarrow \bar{\pi}_i~,~ i\in N$
%	\Let{$t$}{$t+1$.}	
%	\EndIf
%	\EndWhile
%	\State {\textbf{return} $\pi^{(t)}$}
%		
%  \end{algorithmic}
%\end{algorithm}

\begin{myprocedure}
\caption{\texttt{Core-Search} for water-filling algorithm}
 \label{alg:binary search}
 \KwIn{submitted bids $\{b_i(.)\}_{i\in N}$, core point $\pi$, subset of bidders $S\subseteq N$, and $\epsilon>0$}
 \vspace{1mm}
 
 \textbf{initialize} $\Delta_l=0$ and $\Delta_h=1$.

 \While{$\Delta_h-\Delta_l>\frac{\epsilon}{\lvert S\rvert}$}{
 
  $\Delta\leftarrow\frac{\Delta_l+\Delta_h}{2}$~~~~~~\tcp{do binary-search to find $\Delta_l$ and $\Delta_h$.}
  
Let $\tilde{\pi}_i=\pi_i+\Delta$ for $i\in S$, $\tilde{\pi}_i=\pi_i$ for $i\in N\setminus S$, and $\tilde{\pi}_0=w(N)-\sum_{i\in N}\tilde{\pi}_i$

\uIf{$\tilde{\pi}_0\geq w\left(N,\{\max(b_i(.)-\tilde{\pi}_i,0)\}\right)$}{
\tcc{requires one query call to oracle \texttt{WIN-ORAC}.}

$\Delta_l\leftarrow\Delta$.
}
\Else{
$\Delta_h\leftarrow\Delta$
}
}
\vspace{-5mm}
  \textbf{return} $\{\bar{\pi_i}\}_{i\in N},\{\ubar{\pi}_i\}_{i\in N}$, where $\begin{cases}
    \bar{\pi_i}=\pi_i+\Delta_h~\textrm{and}~\ubar{\pi}_i=\pi_i+\Delta_l, & \text{if $i\in S$}.\\
    \bar{\pi_i}=\pi_i~\textrm{and}~\ubar{\pi}_i=\pi_i, & \text{otherwise}.
 \end{cases}$
\end{myprocedure}

%\begin{algorithm}[ht]
%  \caption{\textbf{(Procedure \texttt{Core-Search} for water-filling algorithm)}
%      \label{alg:binary search}}
%  \begin{algorithmic}[1]
%  \State \textbf{parameters:} $\epsilon>0$.
%	\State \textbf{input:} submitted bids $\{b_i(.)\}_{i\in N}$, core point $\pi$ and subset of bidders $S\subseteq N$.
%	\State \textbf{initialize:} $\Delta_l=0$ and $\Delta_h=1$. 
%		\While {$\Delta_h-\Delta_l>\frac{\epsilon}{\lvert S\rvert}$}\Comment{do binary-search to find $\Delta_l$ and $\Delta_h$.}
%		 \State $\Delta\leftarrow\frac{\Delta_l+\Delta_h}{2}$.
%		 \State Let $\tilde{\pi}_i=\pi_i+\Delta$ for $i\in S$, $\tilde{\pi}_i=\pi_i$ for $i\in N\setminus S$, and $\tilde{\pi}_0=w(N)-\sum_{i\in N}\tilde{\pi}_i$. \label{alg:binarysearch:ray}
%		 \If {$\tilde{\pi}_0\geq w\left(N,\{\max(b_i(.)-\tilde{\pi}_i,0)\}\right)$} \Comment requires one query call to oracle \texttt{WIN-ORAC}.
%		 \State $\Delta_l\leftarrow\Delta$.
%		 \Else~ $\Delta_h\leftarrow\Delta$. 
%		 \EndIf
%
%		 \EndWhile
%		\State \textbf{return} $\{\bar{\pi_i}\}_{i\in N},\{\ubar{\pi}_i\}_{i\in N}$, where $\begin{cases}
%    \bar{\pi_i}=\pi_i+\Delta_h~\textrm{and}~\ubar{\pi}_i=\pi_i+\Delta_l, & \text{if $i\in S$}.\\
%    \bar{\pi_i}=\pi_i~\textrm{and}~\ubar{\pi}_i=\pi_i, & \text{otherwise}.
%  \end{cases}$
%  \end{algorithmic}
%\end{algorithm}

%\bjl{After listing the code, it would be helpful to walk the reader through it.  I recommend removing many of the comments from the pseudocode, including the list of inequalities in line 9, and replace them with a paragraph that describes what is going on.  See below for an attempt.}

The iterations of the algorithm are indexed by $t$.  The algorithm maintains a current core point $\pi^{(t)}$ (initially the origin) and a collection of active bidders, $S_t$, whose utilities can potentially be increased (initially all bidders).  On each iteration, the algorithm updates the set of active bidders $S_t$ by finding and removing a set $S_t\setminus T_t$ of bidders whose utilities cannot be increased without violating a core constraint (Finding the next feasible set might take several iterations, without changing the point $\pi^{(t)}$, for reasons that will become clear in the proof).  It then applies binary search along the ray of points consisting of uniform increases to the utilities of all agents in $S_{t+1}$, described in subroutine $\texttt{Core-Search}$.  This finds a pair of points $\{\ubar{\pi}_i\}$ and $\{\bar{\pi}_i\}$, where $\ubar{\pi}$ is in the core, $\bar{\pi}$ is outside the core, and the points are within $\epsilon$ of each other in $\ell_1$ distance, i.e.
\begin{align*}
&w(N)-\sum_{i\in N}\ubar{\pi}_i\triangleq\ubar{\pi}_0\geq w\left(N,\{\max(b_i(.)-\ubar{\pi}_i,0)\}\right)~,~~\triangleright~\textrm{feasible core point.}\\
&w(N)-\sum_{i\in N}\bar{\pi}_i\triangleq\bar{\pi}_0 < w\left(N,\{\max(b_i(.)-\bar{\pi}_i,0)\}\right)~,~~\triangleright~\textrm{infeasible (out of core).}\\
&\sum_{i\in N}\bar{\pi}_i-\sum_{i\in N}\ubar{\pi}_i\leq \epsilon~.
\end{align*}	
 The algorithm uses $\ubar{\pi}$ as its updated core point, and uses $\bar{\pi}$ to update the set of active bidders in the subsequent iteration, by finding the set of winners for sincere bids $\{\max(b_i(.)-\bar{\pi}_i,0)\}$, i.e. $T_{t+1}$.  Once all bidders are frozen, the algorithm returns the current core point. An illustration of each iteration and how sets $S_t$ and $T_t$ are set at each iteration can be seen in Figure~\ref{fig:algorithm}, which describes simulating our algorithm on the example in~\cite{day2007fair}.

\subsection{Proof of Correctness and Running Time}
\label{sec:proof}
The main idea behind the proof of correctness of the algorithm is the following simple observation. Suppose $\pi$ is a point in the core and $S$ is a tight core constraint with respect to $\pi$. Note that there always exists at least one tight core constraint, as the constraint for coalition $N$ is always tight. Now, one is allowed to increase $\pi_i$ by a small amount and still  have a core point as long as bidder $i$ is participating in \emph{every} tight core constraint $S\subseteq N$. This is true because the change in the left-hand side and right-hand of Equation~\ref{eq:core-alter} will be the same for all currently tight constraints, which are the only candidates for violation after the small change. Inspired by this observation, the  algorithm starts from a point in the core and increases utilities of nested subsets of bidders uniformly at each iteration, until  no bidder in the intersection of all tight (and almost tight) core-constraints exists.

Given this observation, here is the intuition behind the correctness of the algorithm. As mentioned earlier, Algorithm~\ref{alg:bidderopt} keeps track of $\{\pi_i^{(t)}\}_{i\in N}$ (inside the core), and $\{\bar{\pi}_i^{(t)}\}_{i\in N}$ (outside of the core), while these two points are always $\epsilon$-close in $\ell_1$-norm distance. As a result, they help the algorithm to find the next subset of bidders $S_{t+1}\subset S_{t}$, with the property that the algorithm can potentially increase their corresponding utilities uniformly. Furthermore, for a fixed run of the algorithm consider sequence $\emptyset=\mathcal{G}_0\subset\mathcal{G}_1\subset\mathcal{G}_2\subset,\ldots$, where $\mathcal{G}_{t}\triangleq\{T_1,\ldots,T_{t-1}\}$. By closeness of the two points, it can be shown that $\mathcal{G}_t$ is always a collection of $\epsilon$-tight core constraints with respect to $\pi^{(t)}$. This sequence acts as a \emph{certificate of correctness} for the algorithm: at each iteration $t$ the algorithm ensures that bidders in $S_{t+1}=\bigcap_{t'=0}^{t}T_t$ are indeed the subset of bidders appearing in \emph{every} constraint of $\mathcal{G}_{t+1}$, and hence those which could have increased their utility. So, at termination it ensures that $S_{t+1}$ is empty, and therefore there exists no bidder that appears in all $\epsilon$-tight core constraints.

More concretely, we prove the correctness of the algorithm in two steps. We first show upon termination the algorithm outputs a point in the core that is $\epsilon$-bidder optimal, and then we show that the algorithm terminates in at most $\lvert N \rvert=n$ iterations. To do so, we start by proving the following lemma, which is crucial in understanding how our algorithm works. Notably, the first bullet of this lemma is inspired by similar results in~\citealp{day2007fair}.
\begin{lemma}
\label{lemma:most-binding}
Fix a vector of utilities $\{\pi_i\}_{i\in N}$. Let $S\subseteq N$. The following are true:
\begin{itemize}
\item $S$ is a maximum binding constraint if it is also the set of winning bidders for some maximum welfare allocation under bids $\{\max(b_i(.)-\pi_i,0)\}_{i\in N}$.
\item Suppose we have $\forall i\in S: \pi_i>0$. Then $S$ is a maximum binding constraint only if it is the set of winning bidders for some maximum welfare allocation under bids $\{\max(b_i(.)-\pi_i,0)\}_{i\in N}$.
\end{itemize}
\end{lemma}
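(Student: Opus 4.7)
The plan is to deduce both directions from a single identity: under the truncated valuations $\tilde{b}_i(\cdot) = \max(b_i(\cdot)-\pi_i,0)$,
\[
w\bigl(N,\{\tilde{b}_i\}_{i\in N}\bigr) \;=\; \max_{S'\subseteq N}\Big[\,w(S') - \sum_{i\in S'}\pi_i\,\Big].
\]
This equation is exactly the bridge between the output of \texttt{WIN-ORAC} and the right-hand side of a core constraint: the truncated-welfare value equals the MBCC value, and any winner set under truncated bids is a natural candidate for the arg-max on the right. The $(\geq)$ direction is proved by taking any optimal allocation $\{y_i^{S'}\}_{i\in S'}$ witnessing $w(S')$ and extending it by the empty bundle off $S'$: its truncated welfare is at least $\sum_{i\in S'}(b_i(y_i^{S'})-\pi_i)=w(S')-\sum_{i\in S'}\pi_i$. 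The $(\leq)$ direction uses a max-welfare truncated allocation $\{x_i^*\}$ with winner set $S^*$: since the allocation rule never awards zero-value bundles, $\tilde{b}_i(x_i^*)>0$ for $i\in S^*$, so $\tilde{b}_i(x_i^*)=b_i(x_i^*)-\pi_i$, and the truncated welfare equals $\sum_{i\in S^*}(b_i(x_i^*)-\pi_i)\leq w(S^*)-\sum_{i\in S^*}\pi_i$.

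For the first bullet (sufficiency), suppose $S$ is the winner set of some max-welfare truncated allocation $\{x_i^*\}$. Applying the $(\leq)$ calculation above to this specific allocation gives $w(N,\{\tilde{b}_i\})\leq w(S)-\sum_{i\in S}\pi_i$, while the identity forces $w(N,\{\tilde{b}_i\})=\max_{S'}[w(S')-\sum_{i\in S'}\pi_i]$. Chaining these inequalities yields $w(S)-\sum_{i\in S}\pi_i \geq w(S')-\sum_{i\in S'}\pi_i$ for every $S'$, so $S$ is an MBCC.

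For the second bullet (necessity under $\pi_i>0$), let $S$ be an MBCC and let $\{x_i^S\}_{i\in S}$ be any allocation achieving $w(S)$ (extended by $x_j^S=\emptyset$ off $S$). The central step is to show $b_i(x_i^S)\geq \pi_i$ for each $i\in S$. The MBCC property against the perturbation $S\setminus\{i\}$ gives
\[
w(S)-w(S\setminus\{i\}) \;\geq\; \pi_i,
\]
while removing $i$'s bundle from $\{x_j^S\}_{j\in S}$ yields a feasible allocation for $S\setminus\{i\}$ of value at least $w(S)-b_i(x_i^S)$, so $b_i(x_i^S)\geq w(S)-w(S\setminus\{i\})\geq \pi_i$. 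The hypothesis $\pi_i>0$ then forces $b_i(x_i^S)>0$, hence $x_i^S\neq\emptyset$ (as $b_i(\emptyset)=0$), so the winner set of this extended allocation is exactly $S$. Finally, the inequality $b_i(x_i^S)\geq \pi_i$ lets us drop the $\max(\cdot,0)$ and evaluate the truncated welfare of $\{x_i^S\}$ as $\sum_{i\in S}(b_i(x_i^S)-\pi_i)=w(S)-\sum_{i\in S}\pi_i$, which by the identity and the MBCC property equals $w(N,\{\tilde{b}_i\})$; so $\{x_i^S\}$ is truncation-optimal with winner set $S$.

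The main obstacle is the second bullet and specifically why $\pi_i>0$ is needed: without it, one could have $b_i(x_i^S)=\pi_i=0$, leaving $x_i^S$ free to be the empty bundle and allowing $S$ to strictly contain the winner set of every truncated-optimal allocation. The strict positivity of $\pi_i$ is precisely what rules out this degenerate case and lets us promote the allocation for coalition $S$ to a witness that $S$ is a winner set under truncated bids.
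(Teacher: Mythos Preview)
Your proof is correct and follows essentially the same route as the paper: both arguments reduce the MBCC condition to optimality of a suitable allocation under the truncated bids, via the same chain of inequalities linking $w(S)-\sum_{i\in S}\pi_i$ and the truncated welfare. The one noteworthy difference is in the second bullet: the paper argues $x_i^S\neq\emptyset$ by contradiction (if some $i\in S$ got $\emptyset$, then $S\setminus\{i\}$ would beat $S$ as a binding constraint since $\pi_i>0$), whereas you prove the sharper pointwise bound $b_i(x_i^S)\geq w(S)-w(S\setminus\{i\})\geq \pi_i>0$, which simultaneously gives non-emptiness and lets you drop the $\max(\cdot,0)$ in the truncated-welfare computation without a separate step. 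This is a mild improvement in presentation but not a genuinely different approach.
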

\proof{Proof.}
Let $S'$ be the set of winners for a maximum welfare allocation  $x'$ under truncated bids $\{\max(b_i(.)-\pi_i,0)\}_{i\in N}$. Note for all $i\in S'$, $b_i(x'_i)> \pi_i$ (otherwise, allocation $x'$ gives items for free to some bidder $j$, because  $x'_j\neq \emptyset ~\&~\max(b_j(x'_j)-\pi_j,0)=0$ as $b_j(x'_j)\leq\pi_j$). Let $S$ be a maximum binding constraint and $x^{S}$ be the maximum welfare allocation restricted to bidders $S$. We have
\begin{align}
w(S)-\sum_{i\in S}\pi_i&=\sum_{i\in S}(b_i(x^{S}_i)-\pi_i)\leq\sum_{i\in N}\max(b_i(x^{S}_i)-\pi_i,0)\nonumber\\
&\leq \sum_{i\in N}\max(b_i(x'_i)-\pi_i,0)=\sum_{i\in S'}(b_i(x'_i)-\pi_i)\leq w(S')-\sum_{i\in S'}\pi_i~.
\end{align}
Therefore $S'$ is also a maximum binding constraint.

Next, suppose $S$ is a maximum binding constraint. Let $x^S$ be the allocation that maximizes the welfare restricted to bidders in $S$. Note that in such an allocation, all the bidders in $S$ will be winners, i.e. $x^S_i\neq \emptyset$ for $i\in S$ (because otherwise there exists $S'\subset S$ such that $w(S)=w(S')$, and therefore $w(S)-\sum_{i\in S}\pi_i<w(S')-\sum_{i\in S'}\pi_i$ which is a contradiction). Now let $x'$ be a maximum welfare allocation for truncated bids, and let $S'$ be the set of winners under such an allocation. Then:
\begin{align}
\sum_{i\in N}\max(b_i(x'_i)-\pi_i,0)&=\sum_{i\in S'}(b_i(x'_i)-\pi_i)\leq w(S')-\sum_{i\in S'}\pi_i\leq w(S)-\sum_{i\in S}\pi_i\nonumber\\
&=\sum_{i\in S}(b_i(x^S_i)-\pi_i)\leq\sum_{i\in S}{\max(b_i(x^S_i)-\pi_i,0)}=\sum_{i\in N}{\max(b_i(x^S_i)-\pi_i,0)}~.
\end{align}
Therefore $x^S$ is an optimal allocation under $\{\max(b_i(.)-\pi_i,0)\}_{i\in N}$, and $S$ is its winner set. \hfill\Halmos
\endproof

   Now, using Lemma~\ref{lemma:most-binding}, we can prove the following invariants of our algorithm: $\pi^{(t)}$ is a core point for each $t$, and $\{\mathcal{G}_t\}$ is a collection of $\epsilon$-tight core constraints with respect to $\pi^{(t)}$.
\begin{proposition} 
\label{lemma:invariant}
Given submitted bids $\{b_i(.)\}_{i\in N}$ and $\epsilon>0$, there exists a finite sequence of collections $\emptyset=\mathcal{G}_0\subset\mathcal{G}_1\subset\ldots\subset \mathcal{G}_{T+1}$, such that the following invariants hold at each iteration $t$ of  Algorithm~\ref{alg:bidderopt}:
\begin{enumerate}

\item[(1)] $\{\pi^{(t)}_i\}_{i\in N}$ is always in the  core, and $\{\bar{\pi}^{(t)}_i\}_{i\in N}$ is outside of the core for $t\geq 1$.

\item[(2)] $S_{t}$ is the subset of bidders that are simultaneously participating in all core constraints included in the collection $\mathcal{G}_{t}\triangleq \{T_0,\ldots,T_{t-1}\}$. Moreover, for $t\geq 1$, $S_{t}\setminus T_{t}\neq \emptyset$ and $\mathcal{G}_t\neq \mathcal{G}_{t+1}$.

\item [(3)] $\mathcal{G}_{t+1}$ is a collection of $\epsilon$-tight core constraints with respect to $\pi^{(t)}$ (as in Definition~\ref{definition:ep-tight}).
\end{enumerate}
\end{proposition}
\begin{proof}{Proof of Part (1):} To prove this part, we use induction. For the base case $t=0$, all-zero vector $\pi^{(0)}$ is always in the core as $w(N)\geq w(S)$ for all $S\subseteq N$. Now suppose $\pi^{(t-1)}$ is in the core. If $\pi^{(t)}=\pi^{(t-1)}$ we are done. So let $\pi^{(t)}\neq \pi^{(t-1)}$. As a result, $\pi^{(t)}=\ubar{\pi}$ in the binary search phase at iteration $t-1$. Therefore, due to termination condition of \texttt{Core-Search},  for all $S\subseteq N$ we have:
\begin{equation*}
w(N)-\sum_{i\in N}\pi^{(t)}_i=\ubar{\pi}_0\geq w\left(N,\{\max(b_i(.)-\pi^{(t)}_i,0)\}\right)\geq w(S)-\sum_{i\in S}\pi^{(t)}_i~,
\end{equation*}
where the last inequity holds because of Lemma~\ref{lemma:most-binding}. So, $\pi^{(t)}$ is a core point. Also, for $t\geq 1$ the point $\bar{\pi}^{(t)}$ is outside of the core, since once binary search stops at iteration $t-1$, for some $S\subseteq N$ we have:
\begin{equation*}
w(N)-\sum_{i\in N}\bar{\pi}^{(t)}_i=\bar{\pi}_0< w\left(N,\{\max(b_i(.)-\bar{\pi}^{(t)}_i,0)\}\right)= w(S)-\sum_{i\in S}\bar{\pi}^{(t)}_i\tag*{\Halmos}
\end{equation*}
\end{proof}
\begin{proof}{Proof of Part (2):} First of all, note that $S_t$ is the subset of bidders that are simultaneously participating in all of $\{T_{t'}\}_{t=1}^{t-1}$, simply because $S_t=\bigcap_{t'=0}^{t-1}T_{t'}$ due to the update rule of $S_t$. Moreover, at iteration $t-1$ (for $t\geq 1$) the algorithm starts from a feasible core point $\pi^{(t-1)}$ and uniformly increases the utilities only for bidders in $S_{t}$, until it reaches to a point $\bar{\pi}^{(t)}$ that is outside of the core. Note that no constraint $S\supseteq S_{t}$ will get violated during this process, simply because the changes in the left hand side and right hand side of these constraints (refer to Definition~\ref{def:core} and inequality~\eqref{eq:core-alter}) are equal, following the fact that $S_{t}=\bigcap_{t'=0}^{t-1}T_{t'}$. In particular, no constraint in $\mathcal{G}_t$ will get violated by $\bar{\pi}^{(t)}$. At iteration $t$,  $T_{t}$ is set to one of the most binding core constraint with respect to $\bar{\pi}^{(t)}$ (due to Lemma~\ref{lemma:most-binding}), and therefore it should be a violated core constraint, because $\bar{\pi}^{(t)}$ is outside of the core. Combining the above arguments, $S_t\setminus T_t\neq \emptyset$. So, $T_t\notin \mathcal{G}_t$ and therefore $\mathcal{G}_{t}\neq\mathcal{G}_{t+1}$.\hfill\Halmos
\end{proof}
\begin{proof}{Proof of Part (3):} To prove this part, we again use induction. For the base case $t=1$, $\mathcal{G}_1=\{T_0\}$ is the set of winning bidders of maximum welfare allocation under original bids $\{b_i(.)\}_{i\in N}$. Therefore, due to Lemma~\ref{lemma:most-binding}, it is also a maximum binding constraint for all-zero vector $\pi^{(0)}$. Moreover, coalition $N$ is always a tight core constraint (as in Definition~\ref{definition:ep-tight}),  and therefore any maximum binding core constraint is also a tight core constraint. Now suppose $\mathcal{G}_t=\{T_0,\ldots,T_{t-1}\}$ is a collection of $\epsilon$-tight core constraints with respect to $\pi^{(t-1)}$. Note that if a constraint is $\epsilon$-tight core constraint at some iteration, it will always remain $\epsilon$-tight, as utilities never decrease and $\pi^{(t)}$ is also in the core. At iteration $t$, $\pi^{(t)}=\ubar{\pi}$ and $\bar{\pi}^{(t)}=\bar{\pi}$, where these parameters are set in binary search phase at the previous iteration $t-1$. Moreover, $T_t$ will be a violated constraint under $\bar{\pi}$ (as in the proof of Part~(2)). Therefore, 
\begin{equation*}
w(N)-\sum_{i\in N}\pi^{(t)}_i\overset{(1)}{\leq} w(N)-\sum_{i\in N}\bar{\pi}^{(t)}_i+\epsilon\overset{(2)}{<} w(T_t)-\sum_{i\in T_t}\bar{\pi}_i+\epsilon\overset{(3)}{\leq} w(T_t)-\sum_{i\in T_t}\pi^{(t)}_i+\epsilon~,
\end{equation*}
where inequality $(1)$ holds as $\sum_{i\in N}\bar{\pi}_i-\sum_{i\in N}\ubar{\pi}_i\leq \epsilon$, inequality $(2)$ holds as $T_t$ is a violated core constraint for $\bar{\pi}^{(t)}=\bar{\pi}$, and inequity $(3)$ holds as for all $i\in N$, $\pi^{(t)}_i=\ubar{\pi}_i\leq \bar{\pi}_i$. So, $T_t$ is also an $\epsilon$-tight core constraint for $\pi^{(t)}$ and $\mathcal{G}_{t+1}$ is a collection of $\epsilon$-tight core constraints with respect to $\pi^{(t)}$. \hfill\Halmos
\end{proof}

\begin{lemma}
\label{lemma:termination}
The Algorithm~\ref{alg:bidderopt} terminates in at most $\lvert N\rvert=n$ iterations.
\end{lemma}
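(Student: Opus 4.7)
The plan is to read the termination bound off directly from Lemma~\ref{lemma:invariant}, Part~(2). That part of the invariant establishes the strict containment $S_{t+1} \subsetneq S_t$ at every iteration $t \geq 1$: because $S_{t+1} = S_t \cap T_t$ by the update rule of Algorithm~\ref{alg:bidderopt}, the condition $S_t \setminus T_t \neq \emptyset$ immediately yields $|S_{t+1}| \leq |S_t| - 1$. The driving reason behind this strict decrement, already established in the proof of Lemma~\ref{lemma:invariant}, is that $T_t$ is obtained as a maximum binding core constraint at $\bar{\pi}^{(t)}$ (via Lemma~\ref{lemma:most-binding}), and must therefore be a violated constraint; since no superset of $S_t$ can be violated at $\bar{\pi}^{(t)}$ (all such constraints change at the same rate under a uniform increase to coordinates in $S_t$), the newly produced $T_t$ cannot contain $S_t$.

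From this, the bound on the number of iterations is a simple counting argument. Starting from $|S_0| = n$ and using $|S_1| \leq n$ trivially, a straightforward induction using $|S_{t+1}| \leq |S_t| - 1$ for $t \geq 1$ gives $|S_t| \leq n - t + 1$ for all $t \geq 1$. Hence by iteration $t = n$ at the latest, the active set $S_{t+1}$ is forced to be empty, at which point the while-loop condition $S_t \neq \emptyset$ fails and the algorithm returns the current core point $\pi^{(t)}$. Thus Algorithm~\ref{alg:bidderopt} completes at most $n$ iterations.

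I don't anticipate a real obstacle here, because the substantive content has already been absorbed into Lemma~\ref{lemma:invariant}: showing that the uniform-increase ray of the binary-search step cannot violate any constraint indexed by a superset of $S_t$, and that it must eventually hit a constraint indexed by a proper subset. Given that fact, termination in $n$ iterations reduces to the observation that a nested family of subsets of an $n$-element ground set whose sizes strictly decrease must stabilize at $\emptyset$ within $n$ steps.
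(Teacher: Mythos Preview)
Your proposal is correct and follows essentially the same approach as the paper: both invoke Part~(2) of Lemma~\ref{lemma:invariant} to obtain $S_{t+1}\subsetneq S_t$ for $t\geq 1$, then use a direct counting argument starting from $S_0=N$ to conclude termination within $n$ iterations. Your version is slightly more explicit about the inductive size bound $|S_t|\leq n-t+1$, but the underlying argument is identical.
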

\proof{Proof.}
By using Proposition~\ref{lemma:invariant}, Algorithm~\ref{alg:bidderopt} terminates, as $\mathcal{G}_t\neq\mathcal{G}_{t+1}$ for $t\geq 1$ and there are only finitely many such collections. Moreover, following the fact that $S_t\setminus T_t\neq \emptyset$, one can conclude $S_{t+1}\subsetneq S_{t}$. $S_0=N$, and hence the algorithm terminates in at most $\lvert N\rvert$ iterations. \hfill\Halmos
\endproof

\begin{theorem}
\label{thm:main}
Algorithm~\ref{alg:bidderopt} returns an $\epsilon$-bidder optimal core point. Moreover, it requires $O(n \log(n/\epsilon))$ evaluations of the oracle \texttt{WIN-ORAC} and an additional time $O(n^2 \log(n/\epsilon))$.
\end{theorem}
\proof{Proof.}
Combining Proposition~\ref{lemma:invariant} and Lemma~\ref{lemma:termination}, after at most $\lvert N\rvert=n$ iterations the algorithm terminates. At termination time $T$, $S_{T+1}=\emptyset$ is the subset of bidders that are participating simultaneously in all of the core constraints in the collection $\mathcal{G}_{T+1}$. Also, $\mathcal{G}_{T+1}$ is a collection of $\epsilon$-tight core constraints with respect to $\pi^{(T)}$. Combining those, we conclude that the intersection of all $\epsilon$-tight core constraints with respect to $\pi^{(T)}$ is empty. Now, $\pi^{(T)}$ is in the core and if you increase one of its coordinates, lets say $j$, by more than $\epsilon$ then we know there exists at least one $\epsilon$-tight core constraint $S$ with respect to $\pi^{(T)}$ such that $j\notin S$ and therefore by this change this constraint will be violated. So, by Definition~\ref{def:epsbidderoptimal}, $\pi^{(T)}$ is $\epsilon$-bidder optimal. Moreover, at each iteration $t$ the algorithm uses one query call to \texttt{WIN-ORAC} to find $S_{t+1}$, at most $\log(\frac{\lvert S_{t+1}\rvert}{\epsilon})$ query calls to do the binary search, and at most $\lvert S_{t+1}\rvert\log(\frac{\lvert S_{t+1}\rvert}{\epsilon})$ extra additions during the binary search. So in total it only needs
\begin{align}
\textrm{Total $\#$ of oracle calls}&\leq n+\sum_{t=0}^{T}\log\left(\frac{\lvert S_{t+1}\rvert}{\epsilon}\right)\leq n+\sum_{t=1}^{n}\log\left(\frac{t}{\epsilon}\right)\nonumber\\
&=n+\log\left(\frac{n!}{\epsilon^n}\right)=O(n \log(n/\epsilon))\nonumber~.
\end{align}
\begin{equation}
\textrm{Extra time needed}\leq \sum_{t=0}^{T}\lvert S_{t+1}\rvert\log\left(\frac{\lvert S_{t+1}\rvert}{\epsilon}\right)\leq \sum_{t=1}^{n}t\log\left(\frac{t}{\epsilon}\right)=O(n^2 \log(n/\epsilon))\nonumber~.\tag*{\Halmos}
\end{equation}

\endproof

\subsection{On the Virtues of Our Bidder Optimal Core selection Rule}
\label{sec:discussion}
\noindent{\textbf{Water-filling versus cutting plane methods.}} As mentioned earlier in Section~\ref{sec:related-work}, given access to the separation oracle for the core polytope (which is essentially the procedure described in Definition~\ref{def:win-det}), one can use Vaidya's algorithm~\citep{vaidya1989new,vaidya1996new} or even the faster cutting plane methods for (feasibility) linear programmings such as \citealp{lee2015faster} to find a minimum revenue core point. As a recap, Vaidya's algorithm is deterministic and finds an  $\epsilon$-close minimum revenue core point with oracle complexity of $O\left((n/\epsilon) \log n\right)$ and extra computation of $O\left(n^4/\epsilon\right)$. The algorithm in \citealp{lee2015faster} is a randomized algorithm, and finds an $\epsilon$-close minimum revenue core point with high probability using $O\left(n^{(1+\epsilon)} \log^{O(1)}(n/\epsilon)\right)$ oracle evaluations in expectation and additional time $O\left(n^3\log^{O(1)}(n/\epsilon)\right)$. Relative to this (more general) approach, our water-filing algorithm (Algorithm~\ref{alg:bidderopt}) ({i})~requires asymptotically fewer oracle calls (i.e., $O(n\log(n/\epsilon)$ query calls) and less additional time (i.e., $O\left(n^2\log(n/\epsilon)\right)$ time); see Theorem~\ref{thm:main}, (ii) has an improved oracle complexity as a function of the precision parameter $\epsilon$, which hugely matters in the application domain we are interested in (i.e., the sale of ad space problem), (iii)~is deterministic, which is advantageous for the incentive and equilibrium properties of bidder optimal core points, (iv)~is a simple combinatorial algorithm which makes it more interpretable and easy-to-understand, and finally (v)~is easier to implement and  less sensitive to the choice of the precision parameter (and in general has less number of parameters to tune compared to the aforementioned cutting plane methods); see Section~\ref{sec:practical-considerations} in the supplementary material for a list of practical issues that we observed in our implementation of Vaidya's algorithm. These points highlight the advantages of our proposed core pricing algorithm, and in particular its practicality to be used in the sponsored search auction.

\noindent{\textbf{Water-filling versus VCG.}} VCG is a combinatorial auction in which reporting true values is a dominant strategy equilibrium. Computing payments in this auction can be done with $O(n)$ query calls to an optimal welfare oracle. Note also that by relaxing the incentive constraints to hold in expectation, approximate VCG payments can be computed implicitly by just one call to the allocation oracle in single dimensional~\citep{babaioff2010truthful}  or combinatorial~\citep{wilkens2015single} settings. In comparison, our water-filling algorithm (Algorithm~\ref{alg:bidderopt}) induces a truncated strategy (i.e., $i\in N: b_i(\cdot)=\max(v_i(\cdot)-\pi_i,0)$) that is a full-information $\epsilon$-Nash equilibrium, as its payment rule is an $\epsilon$-bidder optimal core payment~\citep{day2007fair}. Moreover, as we showed in Theorem~\ref{thm:main}, computing payments only requires $\tilde{O}(n)$ query calls to truncated maximum welfare allocation oracle. Furthermore, our payments produce more revenue at equilibrium~\citep{day2008core} and no coalition (subset of all bidders) can form a mutually beneficial renegotiation among themselves~\citep{day2008core} (to compare the revenue, one can consider the coalition $N\setminus\{i\}$ and note that  $\pi_i\leq w(N)-w(N\setminus\{i\})$, which is indeed the VCG utility of player $i$. Therefore, total generated revenue is lower-bounded by the VCG revenue).

\noindent{ \textbf{Parametrizing the path in water-filling.}} The choice of direction for water-filling at each iteration $t$ of Algorithm~\ref{alg:bidderopt} is flexible, as long as it is only restricted to increasing utilities for bidders in set $S_t$. Therefore it can potentially implement different core payment rules tailored for different economic objectives.  E.g., uniform water-filling will guarantee the approximate-equity of utilities subject to being a bidder optimal core point. Another desired objective is fairness with respect to VCG. For this objective, we can consider a variant of our algorithm that at each iteration performs \texttt{Core-Search} along the ray that connects $\pi^{(t)}$ to VCG. This ``VCG-pursuit" heuristic finds a bidder optimal point that attempts to minimize the angle with the ray connecting the origin with the VCG outcome, and therefore it heuristically implements an equilibrium in which winning bidders receive (almost) the same fraction of their utilities as in VCG. See Figure~\ref{fig:vcg-pursuit} for an example.

Notably,  our algorithm is not the only one in the literature that poses ``degrees of flexibility" to be adapted for different applications and economic goals. For example, \citealp{day2007fair} also propose a refinement of their algorithm when the set of bidder optimal payments is large. In that sense, their approach has a degree of flexibility. Another example is the computational search approach in \citealp{bunz2018designing,bunz2018designingEC} that provides a wide variety of core selecting rules based on different target goals (which can be fairness, revenue, or efficiency at an approximate Bayes Nash equilibrium). 

\begin{figure}[htb]
\begin{center}
 \includegraphics[scale=0.7]{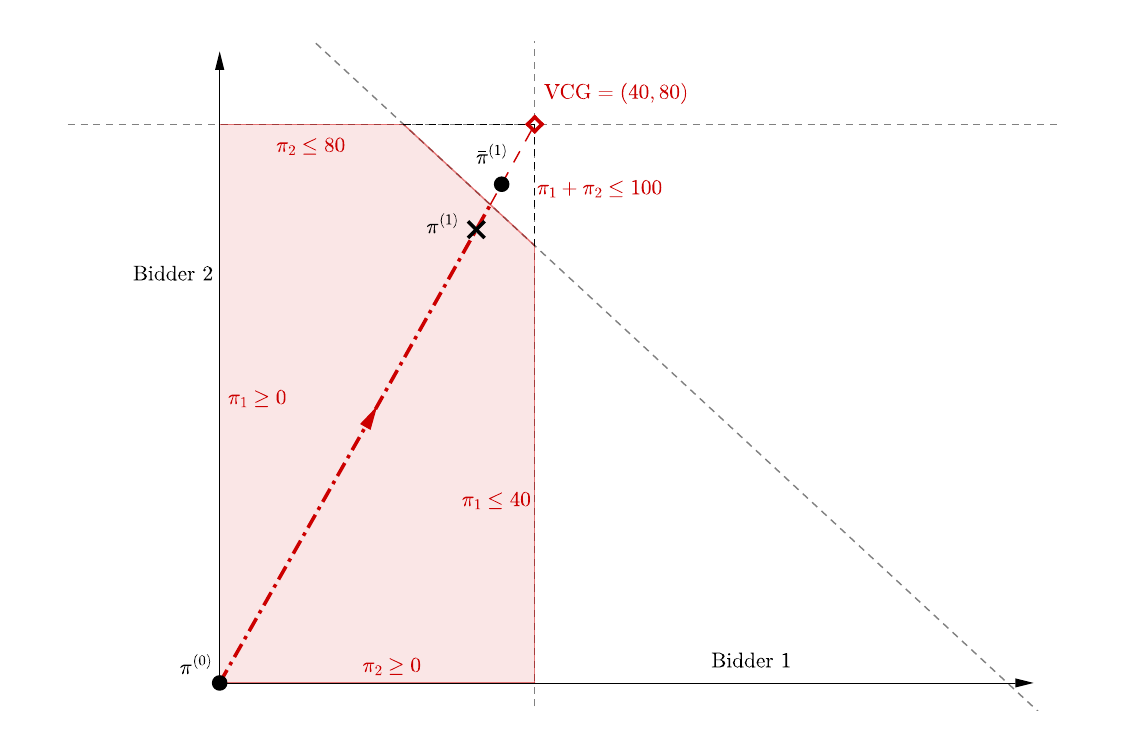}
\caption{VCG-Pursuit. In this variation of our water-filling algorithm at every iteration the algorithm tries to move along a ray that connects the current point to the VCG point (which can be outside of the core). }
\label{fig:vcg-pursuit}
\end{center}
\end{figure}
%This criterion can be considered the fairness of our outcome comparing to VCG outcome.
%\vspace{-3mm}

\section{Sale of Ad Space: a Numerical Study Based on Microsoft Bing Data}
\label{sec:experiment}
In this section, we experimentally study our proposed algorithm  (Algorithm~\ref{alg:bidderopt}) for sale of ad space by simulating it on ad auction bidding data. Our bidding data is collected through the Microsoft Bing search engine and its sponsored search auction platform. See Section~\ref{sec:bid-collection} in the supplementary material for more details on data collection and limitations on estimating true valuations. Using this dataset, we estimate and compare revenue, fairness  (this metric will be detailed later), running time, and number of calls to the (truncated) winner determination oracle for our algorithm and various benchmark auctions.
%. and core pricing algorithms. For a detailed  list of these auctions and algorithms, see Appendix~\ref{sec:different-auctions}.

% We begin by describing the allocation problem and the bidding language for the sale of ad space problem in Section~\ref{sec:ad-space-allocation}. Notably, our problem is different from the fully substitute separable CTR model and suffers from complementarities because of its knapsack nature (which we also observe in our revenue experiments. See Section~\ref{sec:our-experiments} for more details). We then describe a simple dynamic programming that gives the optimal welfare algorithm in this model, which plays the role of the winner determination oracle.  We then explain our experimental results in details in Section~\ref{sec:our-experiments} by comparing our core-selecting auction with standard ad-auction solutions (such as VCG and GSP), as well as various core pricing rules and algorithms. Finally, we go over all the relevant practical considerations and experimental limitations of our approach in Section~\ref{sec:practical-issues}.

\subsection{Sale of Ad Space Allocation Problem}
\label{sec:ad-space-allocation}
We begin by describing the allocation problem and the bidding language for the sale of ad space problem, also known as rich ads auction (as in the title of this paper). 
In this problem, advertisers bid on search queries, called \emph{keywords}.  For each keyword that an advertiser bids on, she provides a collection of basic ads and, potentially, enhanced configurations (e.g., decorations that extend the size of an ad).  When a user searches for a keyword,  
a maximum amount of space to devote to ads (i.e., a specific number of lines) is exogenously determined, and an auction is used to determine which ads and in what order to show in that space.
The auction also specifies a per-click payment for each winning advertiser, denoted by \emph{cost per click (CPC)}, which will be charged to the advertiser if a user clicks on her winning ad in the auction.
%If the user clicks on an ad, 
%the advertiser is charged a payment
%Bing evaluates an amount,  called the \emph{Cost Per Click (CPC)}, to charge the advertiser. 
%The advertiser's value for the assigned space is the ad's probability of click, denoted by \emph{p-click}, times its bid (see Remark~\ref{rem:first} for a discussion).  
The total expected revenue %collected by Bing 
is then estimated by a winning ad's \emph{probability of click} ($p_{\text{click}}$) times the CPC of its corresponding advertiser, when summed over all of the winning ads.  Click rates are modeled and estimated by the search platform. Notably, this problem is different from the fully substitute separable model and suffers from \emph{complementarities} because of its knapsack nature (which are observable in our experiments, as we will see later ).

We use the Bing click prediction estimates in our experiment, which are based on the keyword, user, and few other contextual parameters. While we do not directly observe an advertiser's true value for an advertising assignment, we will estimate it as the probability of click, $p_{\text{click}}$, times the advertiser's bid amount $b$ on that ad.  In this sense, our experiments are 
%applying allocation rules that maximize the \emph{declared} welfare of advertisers in our dataset, and we are 
essentially using the declared bid as a \emph{proxy} for true values (not assuming that the auction is truthful), which is a common practice in the sponsored search auction industry. See Section~\ref{sec:bid-collection} in the supplementary material for a comprehensive discussion.

The goal of winner determination allocation problem is to determine which ads to show in the available space (and which advertisers to pick as winners) to maximize the (declared) welfare.  This allocation is subject to various feasibility constraints.  In particular, we impose that 1) each advertiser can place at most one ad, 2) the total number of ads is bounded by a constant, and 3) the total number of lines must not exceed the available space.  In practice the platform may face other constraints as well, such as measures of user and advertiser quality; we omit those for simplicity.

\begin{definition}[Winner Determination Allocation Problem]
There is a slate of space, divided into $k$ lines (i.e, each is a single line of text).  A feasible allocation provides each advertiser with a consecutive block of lines.  Each advertiser $i=\{1,2,\ldots,n\}$ has a set of possible ads $\mathcal{D}_i$ with different lengths (which we call \emph{decorations}). She has a bid $b_i(d)$ and a probability of click $p_{i}(d)$ for each decoration $d\in\mathcal{D}_i$.   The total number of ads is $m\triangleq\sum_{i=1}^n\lvert\mathcal{D}_i\rvert$. The problem is then to select an ordered set of at most $h$ ads, one ad per advertiser, so that the total number of lines is no greater than $k$ and the total expected (declared) welfare of winning ads is maximized, where the expected (declared) value of a winning ad $(i,d)$ is equal to $p_i(d)\times b_i(d)$.
\end{definition}

For such advertiser preferences, we note that the welfare-maximizing allocation can be computed efficiently, i.e., in polynomial running time in the number of ads $m$ and the number of lines $k$, via dynamic programming (DP). This is also the case even when valuations are truncated as in Definition~\ref{def:win-det}, meaning that we give a different input to the algorithm where each advertiser has a truncation amount, and we essentially bring down the bids for all of her decorations by the that same truncation amount.  This  optimal algorithm is essentially a simple dynamic programming for the knapsack problem with two modifications (1) one additional dimension to limit the number of assigned items, (2) having different classes of items (one class for each advertiser), and being allowed to pick only one item from each class.  With only the latter condition, this problem is an instance of the classic multiple-choice knapsack problem~\citep{sinha1979multiple} and our DP is very similar to the solution for this problem. See Section~\ref{apx:dp} in the supplementary material (in particular Algorithm~\ref{alg:dp}) for details of our DP. See also Section~\ref{rem:second} in the supplementary material for a discussion on using faster approximation alternatives.

\subsection{Our experiments}
\label{sec:our-experiments}

We implemented the following list of auctions/algorithms, and simulated  them on the collected Bing bidding data in several experiments (the code for our implementation is provided as supplementary materials): Vickrey-Clarke-Groves auction (\texttt{VCG}), GSP with optimal welfare allocation (\texttt{GSP with Optimal}), GSP with greedy allocation (\texttt{GSP with Greedy}), minimum revenue core payment rule of \cite{day2007fair} (\texttt{Min Rev Core}), quadratic core payment rule of \cite{day2012quadratic} (\texttt{Quad Core}), Vaidya's cutting plane method (\texttt{Vaidya Min Rev}) for minimum revenue core, and our algorithm (\texttt{Fast Core}). See Section~\ref{sec:different-auctions} in the supplement for details of each auction/pricing rule.

Our numerical experiments are based on a dataset of $19,996$ auctions chosen uniformly at random from all of the Bing search advertising auctions that ran on January 23rd, 2018. The average number of different ads participating in each auction is $107$ and the average number of different advertisers is $7$. Therefore, the average number of configurations/decorations per advertiser is approximately $15$. The maximum number of ads that can be allocated in each auction is at most $4$. The maximum number of line counts that can be assigned, however, is variable across auction instances. We run a separate experiment on all of the auctions with the same line count. We consider line counts equal to 
$25$, $30$, $35$, $40$ and $45$. We report the corresponding results for each one of these cases. In all of these numerical experiments, we set $\epsilon=0.01$ (a parameter needed in Algorithm~\ref{alg:bidderopt}), and tried our best to optimize the parameters needed by other algorithms through trial-and-error for a fair comparison.

\paragraph{\textbf{Revenue.}}

Table~\ref{tab:revenue} shows the average expected revenue of the various algorithms we have studied in this paper, for different line counts. Figure~\ref{fig:revenue-40} shows how the revenue of each algorithm changes as a function of the total number of ads (or decorations) of all the advertisers attending the auction for a fixed line count=$40$. Due to Bing's policy of protecting advertisers data, we are only allowed to report the normalized revenue, i.e.,  normalized by the average VCG revenue for each line count in Table~\ref{tab:revenue}, and normalized by the average VCG revenue among all the auctions with the same line count of $40$ in Figure~\ref{fig:revenue-40}. Note that (1) we are \emph{not} normalizing by VCG revenue for each number of ads separately, and (2) All three of \texttt{Min Rev Core}, \texttt{Quad Core} and \texttt{Vaidya Min Rev} are minimum revenue core points (which are also bidder optimal), and hence obtain the same revenue. On the contrary, \texttt{Fast Core} only finds a bidder optimal point, which is not necessarily minimum revenue.

\begin{table}[htb]
\begin{center}
\footnotesize
\begin{tabular}{ |c||c|c|c|c|c|  }
	\hline
Line count & \texttt{VCG} & \texttt{GSP Optimal} & \texttt{GSP Greedy} &\shortstack{\\ \texttt{Minimum Revenue Core}\\ \citep{day2007fair};\\\citep{day2012quadratic};\\ \citep{vaidya1996new,vaidya1989new}}& \texttt{Fast Core}\\
	\hline
	\hline
25&	1.00&	1.252&	1.151 &1.149 &1.264\\
30&	1.00&	1.267&	1.239 &1.148 &1.265\\
35&	1.00&	1.282&	1.234 &1.150&1.265\\
40&	1.00&	1.283&	1.250 &1.154 &1.269\\
45&	1.00&	1.301&	1.287 &1.153 &1.269\\

	\hline
\end{tabular}

\end{center}
\vspace{2mm}
\caption{Average revenue per auction normalized by the revenue of VCG for each line count}	
\label{tab:revenue}
\end{table}

\begin{figure}[htb]
\hspace*{-2cm}
\includegraphics[width=6.5in]{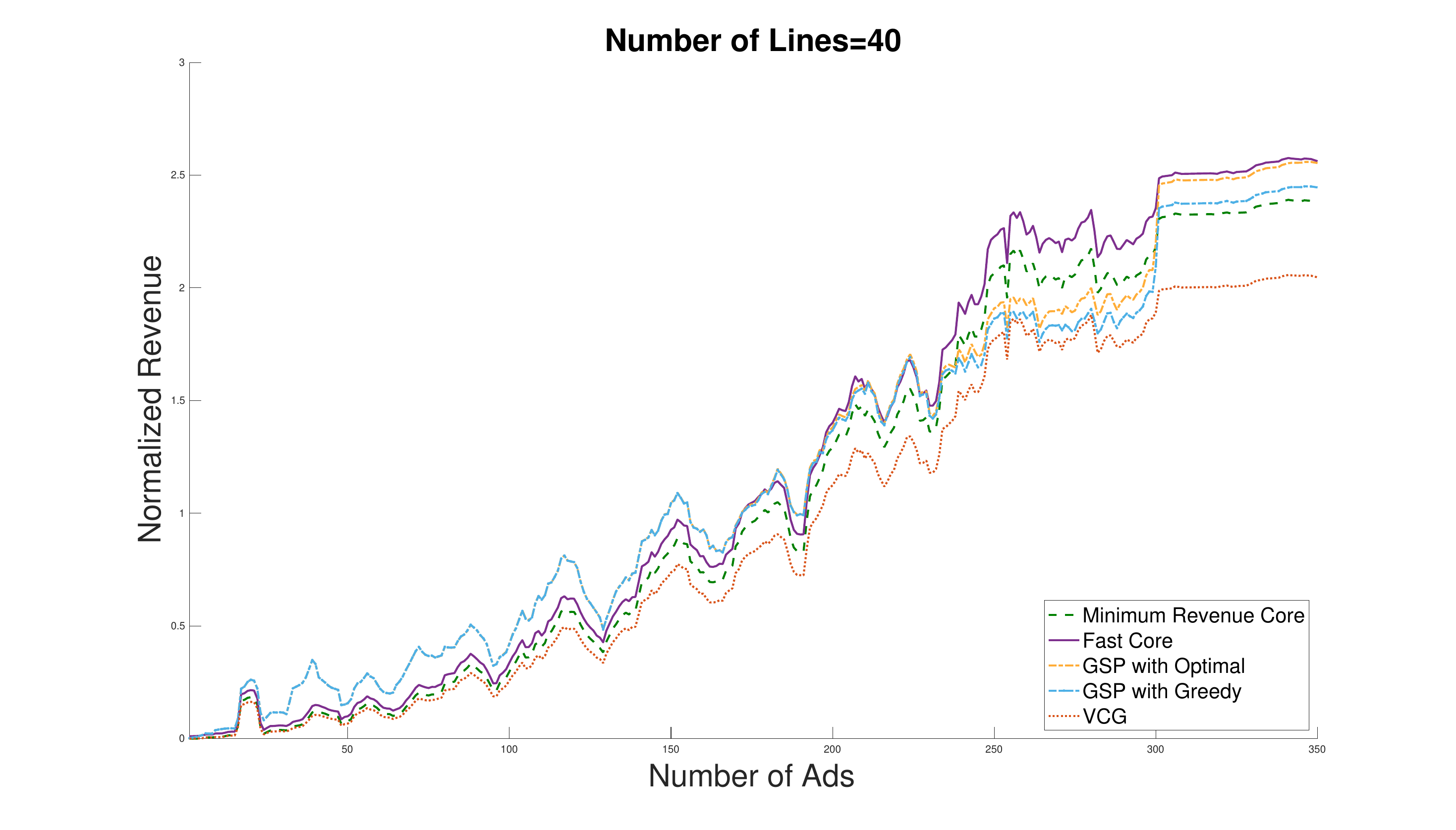}
\centering
\caption{Normalized revenues versus total number of ads for line count=40.\label{fig:revenue-40}}
\end{figure}

Table \ref{tab:revenue} shows that selecting a core outcome can significantly boost revenue compared to VCG. For different line counts, our core pricing algorithm attains $\approx$26\% improvement over VCG, and all other core pricing algorithm (which find a minimum revenue core point) obtain $\approx$15\% over VCG. Also, our core pricing algorithm obtains around 5\% improvement over \texttt{GSP Greedy}, and almost matches the average revenue of \texttt{GSP Optimal} (with less than 1\% loss on average). Notably, the minimum revenue core algorithms (i.e., \texttt{Min Rev Core}, \texttt{Quad Core} and \texttt{Vaidya Min Rev}) obtain less revenue on average than both versions of GSP for all different line counts. Finally, as seen in Figure~\ref{fig:revenue-40}, the ratio between the revenue of our core pricing algorithm and all other methods increases as the number of ads increases, e.g., our fast core pricing algorithm generates at least 6\% more revenue compared to \texttt{GSP Optimal} if number of ads is at least $300$ for line count $40$. See Section~\ref{appendix:experiments} in the supplement for similar figures for other line counts (Figures~\ref{fig:revenue-25}, \ref{fig:revenue-30}, \ref{fig:revenue-35}, and \ref{fig:revenue-45}).

\begin{remark} 
\label{remark:complement}
Having ads with different decorations creates complementarity in valuations, and our experimental results suggest that these complementarities are salient: there is a gap between the revenue of VCG and all of the core outcomes we study (which are all bidder Pareto optimal core points). Therefore, the shape of our core is not a hypercube, and VCG is not in the core (see \citealp{day2008core} for a proof). Also, our experiments show that the bidder Pareto optimal frontier of the core contains points that are \emph{not} necessarily minimum revenue (as there is a revenue gap between minimum revenue core points found by different methods and our bidder optimal core point). 
\end{remark}
For completeness, consider the following example that theoretically proves VCG is not necessarily in the core in the sale of ad space problem, which verifies the experimental observation in Remark~\ref{remark:complement}.
\begin{example} 
\label{example:complement}
Consider an instance of the sale of ad space problem with $k=9$ lines and $h=2$ slots. Suppose all the ads have $p_\textrm{click}=0.5$. There are $5$ advertisers. $A_1$ has $2$ decorations: one with bid $10$ and number of lines $3$, and one with bid $20$ and number of lines $6$. $A_2$ has one decoration with bid $31$ and number of lines $8$. $A_3$ has one decoration with bid $15$ and number of lines $5$. $A_4$ has one decoration with bid $11$ and number of lines $3$. Finally, $A_5$ has one decoration with bid $17$ and number of lines $4$. In the optimal welfare allocation, bidders $A_3$ and $A_5$ win (with total generated welfare of $16$). Moreover, VCG prices of these bidders (after multiplying with $p_\textrm{click}$) are $7$ and $8$ respectively. Interestingly, this point is not in the core, as simple calculations show that the core polytope is the set of all the prices satisfying $p_3+p_5\geq 15.5$, $p_3\in[7,15]$, $p_5\in[8,15.5]$.
\end{example}
\begin{remark}
As mentioned earlier, our experiments use declared values as proxies for true valuations. Under this treatment, our revenue report should basically be interpreted as total payments of our bidder optimal core point in the core with respect to true valuations. Note that core auctions are not truthful, but bidder optimal core points (similar to those in this paper) admit a natural full information Nash equilibrium, so that revenue under this outcome is equal to total payments in the core with respect to true valuations~\citep{day2008core}. Notably, the non-truthfulness of core auction impose these limitations in our experiments. We provide more details on the impact of incentives on short/long term revenue, other interpretations of our results, and future avenues for research to overcome these limitations in Section~\ref{sec:incentive-core} in the supplement.
\end{remark}
\paragraph{\textbf{Running time.}}
Table \ref{tab:time} shows the average normalized running times of different auctions for various line counts, where we normalized with respect to the average running time of VCG. We also report the average running time of VCG in milliseconds in Table \ref{tab:time}. Figure~\ref{fig:runtime-40} shows how normalized running times change as the number of ads increases for the line count $40$. Here, the normalization is with respect to the average running time of VCG for each specific number of ads. See Section~\ref{appendix:experiments} in the supplement for similar figures for other line counts (Figures~\ref{fig:runtime-25}, \ref{fig:runtime-30}, \ref{fig:runtime-35}, and \ref{fig:runtime-45}).

\begin{table}[htb]
\footnotesize
\begin{center}
\hspace*{-1cm}
\begin{tabular}{ |c||c|c|c|c|c|c|c|  }
	\hline
Line count & \shortstack{\\ \texttt{VCG}\\ running time} & \shortstack{\\ \texttt{GSP}\\\texttt{Optimal}}& \shortstack{\\ \texttt{GSP}\\ \texttt{Greedy} }&\shortstack{\\ \texttt{Mini Rev Core}\\ \citep{day2007fair}}&\shortstack{\\ \texttt{Quad Core}\\ \citep{day2012quadratic}}&\shortstack{\\ \texttt{Vaidya Min Rev}\\ \citep{vaidya1996new,vaidya1989new}}& \shortstack{\\ \texttt{Fast}\\\texttt{Core}}\\
	\hline
	\hline
25&	 5.264 (ms)&	0.50&	0.05 &47.25 &86.47 &59.00 &7.53\\
30&	6.552 (ms)&	0.49&	0.05 &44.15 &77.15 &53.19&7.44\\
35&	7.465 (ms)&	0.49&	0.05 &44.67&75.16&49.25 &7.45\\
40&	8.648 (ms)&	0.49&	0.05 &42.22 &68.99&50.19 &7.42\\
45&	10.163 (ms)&	0.48&	0.05 &40.13 &64.49&47.66 &7.39\\

	\hline
\end{tabular}
\end{center}
\vspace{2mm}
\caption{Average running time per auction normalized by the average running time of VCG for each line count}	
\label{tab:time}
\end{table}

\begin{figure}[htb]
\hspace*{-2cm}
\includegraphics[width=6.5in]{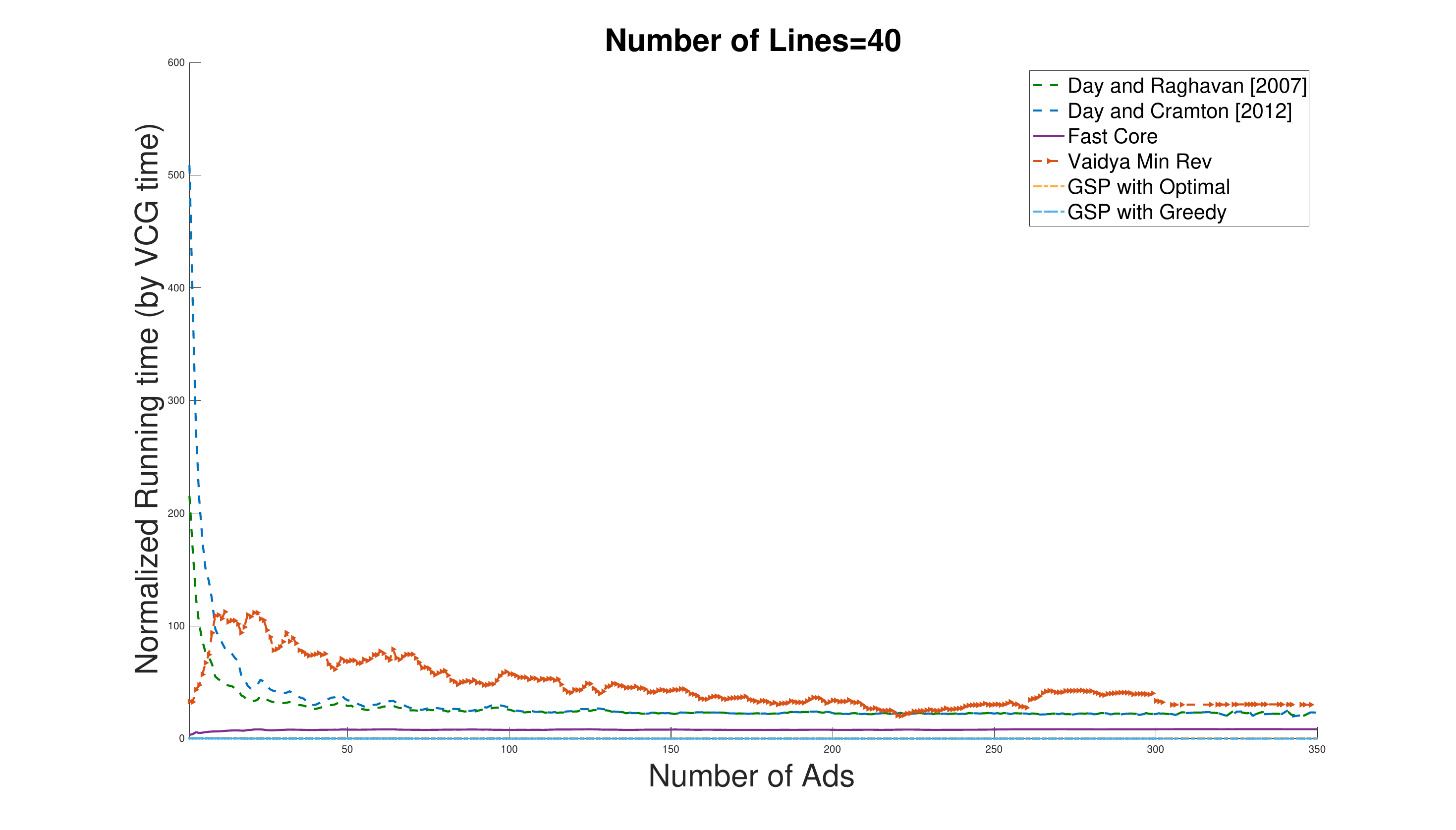}
\centering
\caption{Normalized running times versus total number of ads for line count=40.\label{fig:runtime-40}}
\end{figure}

As it can be seen from both Table~\ref{tab:time} and Figure~\ref{fig:runtime-40}, the running time of our core pricing algorithm is no more than $\approx7.5$ times the VCG running time, which is still in an acceptable range for sponsored search auction of Bing. As expected, both variants of GSP have much smaller running time compared to all other auctions (as \texttt{GSP Optimal} only calls the winner determination oracle once, and \texttt{GSP Greedy} does not even need to call it once). The running time of all other core pricing algorithms is at least $6-10$ times the running time of our core pricing algorithm. In particular, as seen in Figure~\ref{fig:runtime-40}, all three of \citealp{day2007fair}, \citealp{day2012quadratic} and Vaidya's algorithm have larger running time when the number of ads is small (smaller than $10$), and they obtain a more comparable running time as the number of ads increase (still, slower than our core pricing algorithm by a factor of $5-8$). Still, our experiments never identified a scenario in which Vaidya or any other core pricing rule could beat our algorithm in terms of running time.

\begin{remark} It should be noted that the convergence of LP-based (or quadratic programming based) heuristics, as well as interior point methods such as Vaidya to solve the minimum revenue core LP, is highly sensitive to selected internal parameters of these algorithm (e.g., the initialization point, the termination condition, or the barrier thresholds in Vaidya), as our experiments suggest. On the contrary, our core pricing algorithm only needs an assignment for $\epsilon>0$, and our experiments indicate that its performance is not very much sensitive to the choice of this parameter.
\end{remark}
\begin{remark} We observed somewhat frequent unstable behavior when running Vaidya's algorithm in our application, e.g., in some auctions by slightly changing the barrier thresholds or the initial point the algorithm had a very slow convergence. See Section~\ref{sec:practical-considerations} in the supplement for a comprehensive discussion.
\end{remark}

\paragraph{\textbf{Query complexity.}} To have a more comprehensive comparison between the speed of different algorithms, and to compare how efficiently they utilize the winner determination oracle, we also compare the query complexity of various core pricing algorithms. Table~\ref{tab:query} shows the average query complexity, i.e., number of calls to the oracle, of different core pricing algorithms and how they are compared with the VCG auction as a benchmark (which is basically equal to the average number of advertisers winning the auction). Figure~\ref{fig:query-40} shows how the query complexity of different core pricing algorithms vary as the number of ads increases for the line count $40$. See Section~\ref{appendix:experiments} in the supplement for similar figures for other line counts (Figures~\ref{fig:query-25}, \ref{fig:query-30}, \ref{fig:query-35}, and \ref{fig:query-45}). 

\begin{table}[htb]
\footnotesize
\begin{center}
\hspace*{-1cm}
\begin{tabular}{ |c||c|c|c|c|c|  }
	\hline
Line count & \texttt{VCG}&\shortstack{\\ \texttt{Mini Rev Core}\\ \citep{day2007fair}}&\shortstack{\\ \texttt{Quad Core}\\ \citep{day2012quadratic}}&\shortstack{\\ \texttt{Vaidya Min Rev}\\ \citep{vaidya1996new,vaidya1989new}}& \shortstack{\\ \texttt{Fast}\\\texttt{Core}}\\
	\hline
	\hline
25&	 3.740&	19.750&	20.513 &75.584 &11.648 \\
30&	3.757&23.824&	24.574 &69.478 &11.741 \\
35&	3.763&23.797&	24.572&64.393&11.741\\
40&	3.764&23.669&	24.439 &65.944 &11.754\\
45&	3.765&	23.639&	24.400 &62.867 &11.753\\

	\hline
\end{tabular}
\end{center}
\vspace{2mm}
\caption{Average query complexity for each line count}	
\label{tab:query}
\end{table}

\begin{figure}[htb]
\hspace*{-2cm}
\includegraphics[width=6.5in]{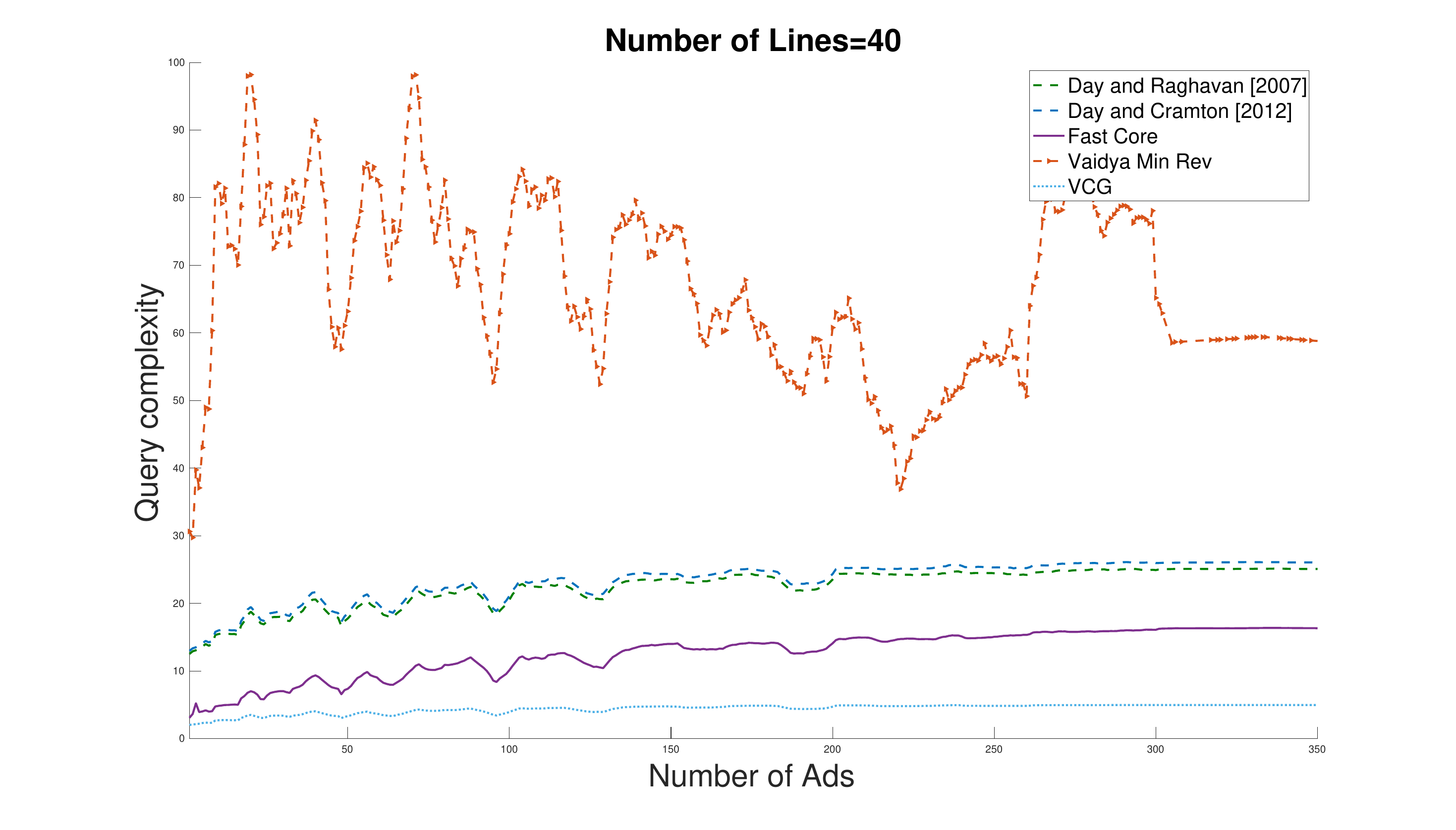}
\centering
\caption{Query complexity versus total number of ads for line count=40.\label{fig:query-40}}
\end{figure}
As it is clear from Table~\ref{tab:query}, our core pricing algorithm makes drastically smaller number of calls to the oracle compared to other core pricing methods: core pricing rules of \citealp{day2007fair} and \citealp{day2012quadratic} make $\approx 2.1$ times, and Vaidya's algorithm makes $\approx 5.9$ times number of calls on average to the winner determination oracle compared to Algorithm~\ref{alg:bidderopt}. Moreover, our core pricing algorithm makes $\approx 3.1$ times more number of calls than the VCG baseline (which still is an acceptable range for our application). Finally, as can be seen in Figure~\ref{fig:query-40}, Vaidya's algorithm has a rather high variance in its number of query calls compared to other methods, which is another factor that makes it unstable to be used for our application. See Section~\ref{sec:practical-considerations} in the supplement for more details.
\paragraph{\textbf{Fairness.}}Another measure we study is the fairness to the advertisers. For a given auction, our notion of fairness is the ratio of the maximum utility to the minimum utility  among the winning advertisers, i.e., $\frac{\max_{\textrm{ad} \in \mathcal{A}} p(\textrm{ad})\left(b(\textrm{ad})- \text{CPC}(\textrm{ad})\right)}{\min_{\textrm{ad} \in \mathcal{A}} p(\textrm{ad})\left(b(\textrm{ad}) - \text{CPC}(\textrm{ad})\right)}$, where $\mathcal{A}$ is the set of all of the winning ads. Figure~\ref{fig:fairness-40} shows how the fairness ratio of different algorithms change as the number of ads increases for the line count $40$. See Section~\ref{appendix:experiments} in the supplement for other line counts (Figures~\ref{fig:fairness-25}, \ref{fig:fairness-30}, \ref{fig:fairness-35}, and \ref{fig:fairness-45}). 

\begin{figure}[htb]
\hspace*{-2cm}
\includegraphics[width=6.5in]{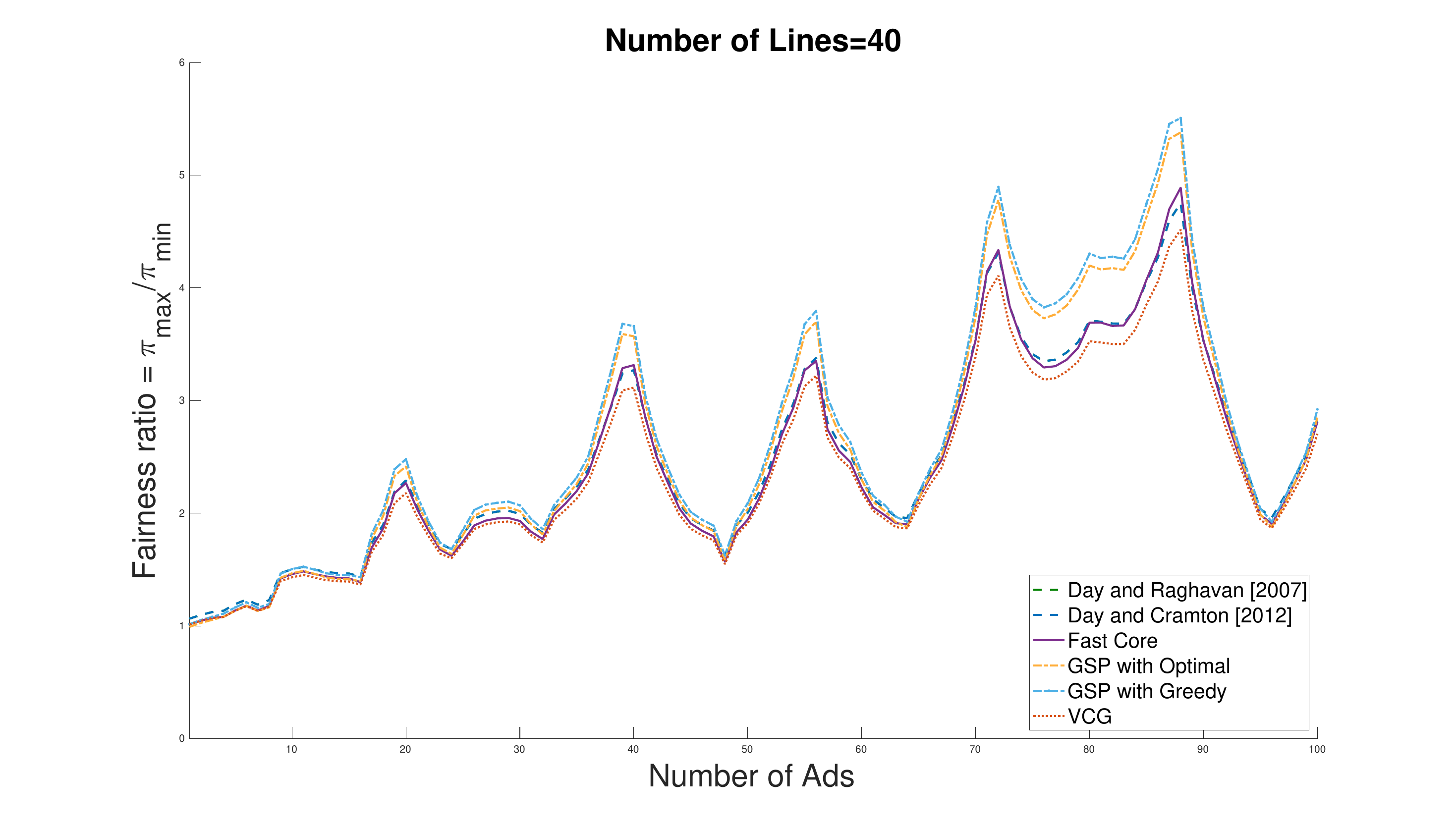}
\centering
\caption{Fairness ratio versus total number of ads for line count=40.\label{fig:fairness-40}}
\end{figure}

 As can be seen in Figure~\ref{fig:fairness-40}, VCG has the maximum fairness (i.e., the fairness ratio is closer to $1$) among other methods. All of the core selecting rules (including ours) have better fairness ratios compared to the two variants of GSP. One interpretation is that when we uniformly increase utilities of  advertisers in Algorithm~\ref{alg:bidderopt}, we tend to fairly divide the extra surplus between eligible advertisers while not violating the core constraints. Note that a core constraint, however, might cause a specific advertiser to pay a large amount compared to the VCG prices, and hence caps the utility of this advertiser. Consequently,  the core tends to lie between VCG and GSP in terms of fairness.

\section{ Conclusion}

We proposed a fast algorithm, with provable correctness and running time, that finds a bidder optimal core point with almost linear calls to the winner determination oracle for truncated valuations. Our running time advantage compared to prior work is both in terms of query complexity of calling this oracle and extra computations needed. We demonstrated  our algorithm in the time-sensitive sale of the ad space application through  numerical experiments on Microsoft Bing auction data. Our results suggested a considerable revenue improvement over VCG by our core pricing rule, but only with respect to reported valuations. We also studied the fairness of our core pricing rule compared to other methods using a simple measure, and our experimental study showed our algorithm retains acceptable fairness properties.  See Section~\ref{apx:openproblems} in the supplement for future directions and open problems.

%To this end, we provided a model that is based on what Bing does for its sale of ad space problem, and studied the winner determination (or equivalently optimal welfare allocation) 

%problem in this setting. As a side contribution, we showed how a knapsack-style dynamic programming solves the allocation problem optimally. Given this allocation oracle, we implemented various core pricing rules and algorithms studied in the literature~ \citep{day2007fair,day2012quadratic,vaidya1989new,vaidya1996new}. Our experimental results suggested a considerable revenue improvement over VCG by our core pricing rule, and a somewhat considerable revenue improvement over other core pricing rules and algorithms that we studied. At the same time, our algorithm is theoretically faster than other methods known in the literature to compute core points given a winner determination oracle for truncated valuations. Our running time advantage is both in terms of query complexity of calling this oracle and extra computations needed. Both of these facts are validated by our numerical experiments in this paper. We also studied the fairness of our core pricing rule compared to other methods using a simple measure, and our experimental study showed our algorithm retains acceptable fairness properties.

\section*{Acknowledgment}
The authors would like to thank Bob Day, Larry Ausubel, Paul Milgrom, Sven Seuken, and Microsoft Bing Ad Auction team for their insightful comments. We would also like to thank the anonymous referees and the associate editor for extraordinarily helpful comments during the revision process. The second author was supported in part by NSF CCF 1618502.

% Appendix here
% Options are (1) APPENDIX (with or without general title) or
%             (2) APPENDICES (if it has more than one unrelated sections)
% Outcomment the appropriate case if necessary
%
% \begin{APPENDIX}{<Title of the Appendix>}
% \end{APPENDIX}
%
%   or
%
% Acknowledgments here
%\ACKNOWLEDGMENT{.}
%%
%\theendnotes

% References here (outcomment the appropriate case)

% CASE 1: BiBTeX used to constantly update the references
%   (while the paper is being written).
%\bibliographystyle{informs2014} % outcomment this and next line in Case 1
%\bibliography{<your bib file(s)>} % if more than one, comma separated

% CASE 2: BiBTeX used to generate mypaper.bbl (to be further fine tuned)
%\input{mypaper.bbl} % outcomment this line in Case 2

%If you don't use BiBTex, you can manually itemize references as shown below.

%\bibliographystyle{informs2014}
\setlength{\bibsep}{0.0pt}
\bibliographystyle{plainnat}
\OneAndAHalfSpacedXI
{\footnotesize
\bibliography{refs}}

\clearpage
 \begin{APPENDICES}
% \section{<Title of Section A>}
% \section{<Title of Section B>}
% etc
\section{Dynamic programming for sale of ad space}
\label{apx:dp}
Our simple dynamic programming is explained in Algorithm~\ref{alg:dp}. We assume (without loss) that the $m$ ads are indexed so that all ads from the same advertiser lie in a contiguous range.  That is, no ad of one advertiser comes between two ads from another advertiser.  Then the subproblem for our dynamic program is $\texttt{sub}(i, h', k')$, which gives the optimal welfare while using ads from index $i$ to $m$, allocating at most $h'$ ads from distinct advertisers and using at most $k'$ lines. The solution to the full allocation problem is then $\texttt{sub}(1, h, k)$.  We use $l\left(\text{ad}_i\right)$ as a notation for the number of lines for $\text{ad}_i$ and $W\left(\text{ad}_i\right) $ for the expected (declared) value of $\text{ad}_i$ to its associated advertiser. 

\begin{algorithm}
\caption{(Dynamic program for winner determination)}
\label{alg:dp}

\KwIn{ ads $\{\text{ad}_i\}_{i\in m}$, index $I$, number of ads to be allocated $h'$ and maximum number of available lines $k'$}
\vspace{1mm}

\textbf{initialize} $\text{bestWF} \leftarrow 0$. 

\uIf{$I > m$ or $h' < 1$ or $k' < 1$}{

\textbf{return} 0
}\Else{
\For{$i = I$ \textbf{to} $m$}
{Let $j$ be next ad from different advertiser than $i$.

\If{$\text{bestWF} < W\left(\text{ad}_i\right) + \text{sub}\left(j, h'-1, k' - l\left(\text{ad}_i\right)\right)$}
{
 $\text{bestWF} \leftarrow W\left(\text{ad}_i\right) + \text{sub}\left(j, h'-1, k' - l\left(\text{ad}_i\right)\right)$
}
Let $\tilde{\pi}_i=\pi_i+\Delta$ for $i\in S$, $\tilde{\pi}_i=\pi_i$ for $i\in N\setminus S$, and $\tilde{\pi}_0=w(N)-\sum_{i\in N}\tilde{\pi}_i$.
}
 \textbf{return} $\text{bestWF}$
}

\end{algorithm}

%\begin{algorithm}[ht]
%	\caption{\textbf{(Dynamic program for winner determination)}\label{alg:dp}
%		}
%	\begin{algorithmic}[1]
%		\State \textbf{input:} ads $\{\text{ad}_i\}_{i\in m}$, index $I$, number of ads to be allocated $h'$ and maximum number of available lines $k'$.
%		\State \textbf{initialize:} $\text{bestWF} \leftarrow 0$. 
%		\If {$I > m$ or $h' < 1$ or $k' < 1$ }
%		\State \textbf{return} 0
%		\EndIf
%		\For {$i = I$ \textbf{to} $m$}
%		\State Let $j$ be next ad from different advertiser than $i$.
%		\If {$\text{bestWF} < W\left(\text{ad}_i\right) + \text{sub}\left(j, h'-1, k' - l\left(\text{ad}_i\right)\right)$}
%		\State $\text{bestWF} \leftarrow W\left(\text{ad}_i\right) + \text{sub}\left(j, h'-1, k' - l\left(\text{ad}_i\right)\right)$
%		\EndIf
%%		\State Let $\tilde{\pi}_i=\pi_i+\Delta$ for $i\in S$, $\tilde{\pi}_i=\pi_i$ for $i\in N\setminus S$, and $\tilde{\pi}_0=w(N)-\sum_{i\in N}\tilde{\pi}_i$.
%		\EndFor
%		\State \textbf{return} $\text{bestWF}$
%	\end{algorithmic}
%\end{algorithm}

\section{Benchmark core pricing algorithms and other auctions}
\label{sec:different-auctions}
Here is a detailed list of payment rules/auctions we implemented in this paper: 
\renewcommand{\labelenumi}{\textbf{\roman{enumi}}.}
\begin{enumerate}
\item \underline{The Vickrey-Clarke-Groves auction (\texttt{VCG})}: See \citealp{vickrey1961counterspeculation};\citealp{clarke1971multipart};\citealp{groves1973incentives}.
\item \underline{The Generalized Second Price auction with optimal welfare allocation (\texttt{GSP with Optimal})}: this auction uses the optimal welfare allocation. For payments, similar to traditional GSP, it prices each ad according to the $p_{\text{click}}$ times bid of the subsequent ad (the last ad will be priced by the best ad that was not assigned and can be fit within the line count limit). 
\item \underline{The Generalized Second Price auction with greedy allocation (\texttt{GSP with Greedy})}: for the allocation, it greedily allocates ads based on $p_{\text{click}}$ times bids. For payments, similar to GSP, uses the next best ad for pricing. Greedy GSP has faster runtime, since it does need not to call the winner determination oracle, but has worse revenue performance relative to Optimal GSP (as we will see in our experimental results).
\item \underline{ Minimum revenue core payment rule (\texttt{Min Rev Core}; \citealp{day2007fair})}: this is the first and simplest heuristic algorithm that finds a minimum revenue core point, given access to an oracle for the winner determination problem (with truncated values). This algorithm is based on a heuristic called Core Constraint Generation (CCG). The simple version in \citealp{day2007fair} starts from an initial small LP  for minimizing revenue (which is basically the hyper-cube when payments are above VCG prices and below bids), and in each iteration finds the most violated core constraint by the current point (by sending a query to the winner determination oracle of Definition~\ref{def:win-det}) and adds this constraint to the LP. It then re-solves the LP to find the next point, and iterates until it finds a feasible core point. We use Matlab's large-scale LP solver based on  the interior-point methods~\citep{zhang1998solving} for the LP-solving part of this algorithm.
\item \underline{Quadratic core payment rule (\texttt{Quad Core}; \citealp{day2012quadratic})}: this algorithm finds the closest minimum revenue core point to VCG prices. It is again a heuristic algorithm that first uses CCG and finds the minimum revenue core point at each iteration similar to \citealp{day2007fair}. Then, by fixing this revenue,  it searches for another point in the current feasible polytope of core candidates (with one additional constraint for fixing the revenue) that has minimum $\ell_2$-distance to VCG. This search is done using convex quadratic programming. The algorithm iterates over this procedure until it finds a feasible core point. We use Matlab's large-scale LP solver based on the interior-point methods~\citep{zhang1998solving} for the linear programming part, and Matlab's large scale interior-point method for solving convex quadratic programming~\citep{mehrotra1992implementation,gould2004preprocessing} for the quadratic programming part of each iteration.
\item \underline{Vaidya's cutting plane method (\texttt{Vaidya Min Rev}; \citealp{vaidya1989new, vaidya1996new})}: one approach to find a minimum revenue core point in polynomial-time is to solve the LP for the minimum revenue directly, as we have access to a separation oracle for the core polytope. As a reminder, this oracle is essentially the allocation algorithm for the winner determination problem with truncated values, as in Definition \ref{def:win-det}. Vaidya's volumetric cutting plane method is a fast algorithm that makes efficient use of the separation oracle. If $n$ is the number of bidders, it has oracle complexity $O(n \log(n))$ and computational complexity $O(n^4)$. For details on different steps of this algorithm and its analysis see \citealp{bubeck2015convex}. We implemented Vaidya's algorithm by following the steps  in Section 2.3 of \citealp{bubeck2015convex} to be used in our numerical experiments.
\item \underline{Our bidder optimal core pricing algorithm (\texttt{Fast Core})}: we implemented our core pricing rule following the steps of Algorithm~\ref{alg:bidderopt}, which uses  Algorithm~\ref{alg:binary search} as a subroutine. 
\end{enumerate}

\begin{remark}
Note that we studied above auctions \emph{without} reserve prices. Importantly, tuned reserve prices are commonly used to boost revenue.  However, as reserves can be applied to all seven of these auctions, we choose not to include them in our experiment in order to focus on the impact of the auction pricing rule. 
\end{remark}
\begin{remark}
In fact, the computational complexity of Vaidya can even be improved further, and the recent breakthrough by \citealp{lee2015faster} shows that it can essentially (up to logarithmic factors) be brought down to $O(n^3)$. As we will elaborate in Section~\ref{sec:practical-considerations}, we already faced several practical complications and obstacles to implement Vaidya for our application (mostly due to the sensitivity of its performance to various parameters of the algorithm to make use of volumetric barrier). Hence, we left  implementing the algorithm of \citealp{lee2015faster} and adapting it for our application as a future direction. 
\end{remark}

\section{Practical considerations and limitations in our experimental study}
\label{sec:practical-issues}
Here we note a few important practical notes that apply to our numerical results of this section, together with potential interpretations or road-maps on how to deal with them based on various work in the literature. Further and deeper discussion on these points is beyond the scope of this work and we leave them as interesting open directions for future work.
\subsection{Bid collection vs. true valuations}
\label{sec:bid-collection}

During bid collection, the sponsored search platform was running its own native auction (which roughly speaking is a variant of the generalized second price auction with optimized reserve prices). These bids by no means are guaranteed to be truthful bids,  neither are collected through a controlled experiments in which bidders are aware of changing the auction to a core selecting auction or any other auction that we are simulating in this section.
Moreover, in online advertising, it is difficult to evaluate an advertiser's true valuation of a click from the submitted bids,  because of several fundamental reasons: (1) Importantly, advertisers usually run sophisticated learning algorithms to bid based on their past experiences with the platform,  which leads to complicated bid shading mechanics, (2)  Indeed, the advertiser itself might not know the true valuations for the clicks received from the search engine, as the quality of clicks differs and is dependent on the user, publisher, and other contexts, (3) Advertisers might not be utility maximizers, for example they might be maximizing clicks given a budget, or might be valuing the conversion (when a defined transaction such as a sale or subscribing happens from the user after the click) or just a visit to their page, and hence their behavior diverge from the classic quasi-linear rational models in microeconomics, and finally (4) Even if the appropriate metrics were known, the platform typically does not get to observe them accurately. Understanding and modeling true advertisers' behavior is still an important open problem for ad auction research and industry, and is beyond the scope of this paper~\citep{edelman2007strategic,xu2013predicting,sayedi2018real}. 

Nevertheless, in the absence of knowing the exact behavior of advertisers and the possibility of running a controlled exclusive experiment, the bidding numbers are our best estimates for the advertisers' true values for clicks. Therefore, when using bidding data to evaluate our algorithms, we implicitly assumed truthful bidding and took each advertiser's bid as a proxy for their value.  We made this choice in part due to necessity, and in part because it is a common practice in search advertising auction industry. As a minor note, there are also strong empirical evidences for modeling advertisers as ROI (Return of Investment) constrained value maximizers (e.g., see \citealp{aggarwal2006truthful,wilkens2017gsp}), under which  GSP is provably a truthful auction~\citep{wilkens2017gsp}. Yet, these results heavily rely on myopic rational behavior for advertisers, which might not be the case in practice.

\subsection{ Non-truthfulness in core auctions and short/long-term incentive issues }
\label{sec:incentive-core}
One important property that core selecting auctions lack is truthfulness. So, even though in our numerical experiments the reported bids are acceptable proxies for the true valuations, it is critical to understand the behavior of bidders, both in short-term and long-term, and how they respond to the non-truthfulness of the auction. This response can have implications on the revenue of the auction. For example, when bidders bid strategically or run a learning algorithm to respond, the platform's revenue could be hurt. We propose the following interpretations and methodologies to study this phenomena. Digging deeper in some of these methodologies is beyond the scope of this work and we leave as future research directions:
\begin{itemize}
\item Running a bidder optimal core selecting auction imposes a natural full information Nash equilibrium, where bidders truncate their values by the utility they get from the bidder optimal core pricing based on the the core with respect to the \emph{true} valuations~\citep{day2008core}. Moreover, the revenue of the seller at this equilibrium is equal the revenue of the bidder optimal core pricing with respect to the true valuations. As we are running bidder optimal core selecting auctions and advertises have not responded to this auction (so, bids are proxies of true valuations), one can therefore interpret our experimental results as evaluating the \emph{short-term} impact of employing core pricing, modulo the assumption that advertisers are playing the aforementioned full information Nash equilibrium above.
\item Assuming that advertisers play the full information Nash equilibrium can be problematic in sponsored search auctions, as these markets are highly uncertain and information about competitors is incomplete. In such an environment, an advertiser might run a learning algorithm or some other dynamic responding mechanism to bid. However, there are evidences that bidder optimal core auctions ``guide the advertisers" who run natural dynamics, in a way that advertisers converge fast to this full information Nash equilibrium, without actually needing complete information. To see this, consider the following two-step process: in the first step, advertisers bid truthfully (as they have no information about other competitors) and each winner obtains some utility. Now, as the auction is a bidder optimal core auction, these utilities will reveal the full information Nash equilibrium mentioned earlier. In fact, each bidder only needs to play a truncated strategy that shades the bid of each package by the utility obtained in the first step (which is a somehow natural bid shading algorithm), and in this way the full information Nash equilibrium will be played in the second step. One can also think of other generalizations of the mentioned two-step dynamic, where the bidders only shade their bid by truncating based on a fraction of the utility of the previous round. We conjecture this simple dynamics, as well as other dynamics such as iterative best response, converge to the full information Nash equilibrium in the sale of ad space problem, and we leave investigating further as an open question (interestingly, there are examples showing that for a general combinatorial auction this dynamics does not converge, and at the same time there is a simple proof showing that for simple settings such as single-minded combinatorial auctions this dynamics converges to the full information Nash equilibrium).
\item Another possible approach to interpret the the long-term effects of the incentive issues of core auctions is to focus on an incomplete information equilibrium  concept such as  Bayes Nash Equilibrium (BNE). While characterizing closed-form equilibria is intractable (and probably impossible), one can use the recent progress on computational methods to approximate the BNE of core selecting auctions~\citep{bosshard2017computing,lubin2009quantifying,lubin2015new,bunz2015faster,bunz2018designingEC,bunz2018designing}, in order to obtain an educated guess ball-park characterization of how far we might expect the revenue to be from the simulation results in practice (basically after platform uses this auction and advertisers respond to it in a way that they approximately land in the aforementioned BNE). \citealp{bunz2018designing,bunz2018designingEC} used the computational search approach to help with better understanding the quadratic core payment rule, as well as designing an automatic search for ``better" core payment rules. This paper cannot be used in a blackbox fashion to assess our core pricing rule (as one needs to run the computational search approach for our core pricing rule. Also they have considered different domains than ours and revenue is very domain dependent). Yet, it can be used to obtain very rough estimates of how much revenue reduction we might face at the (approximate) BNE. By looking at the revenue column of Tables~1-7 in \citealp{bunz2018designing,bunz2018designingEC}, it seems the revenue ratio of quadratic payment rule over VCG is a number between $0.88$-$1.44$ among 7 different domains (which is averaged at $11.27\%$ improvement). Our numerical experiments suggest an improvement around $15\%$ (see Table~\ref{tab:revenue}). One might try to repeat the same approach, but tailored to our core selection algorithm and our domain, and refine the $26.5\%$ improvement over VCG that our experimental results suggest for the revenue of our core pricing algorithm. We leave this refined experimental study as an interesting future direction. 

%Therefore, if everything scales linearly, our core pricing algorithm should obtain around $19.9\%$ improvement over VCG at BNE, compared to the $26.5\%$ improvement that our experimental results suggest.

%Therefore, roughly speaking, we expect quadratic payment rule at BNE obtains about $11.27\%$ improvement over VCG from these results. 
\end{itemize}

\subsection{Using approximate winner determination for faster auctions}
\label{rem:second}
We described a dynamic program for generating a welfare-optimal slate of ads, and showed that it was feasible within the time constraints imposed by a production advertising platform for a realistic number of ads and lines.  However, if the number of advertisers or lines increases, and/or there is a change on the assumptions that can be made on buyer utilities, it may turn out that the dynamic programming solution is not always feasible in practice.  If not, we note that one can replace the optimal allocation with an approximate Maximal-In-Range allocation (see \citealp{dobzinski2007mechanisms}) and get the same result as in this paper, but for the core polytope restricted to that range.  In other words, if one can compute the optimal allocation within a restricted range of allocations (e.g., those that allocate at most $4$ ads, or that only show ads of the same size, etc.), then our algorithm -- using such a restricted welfare maximization oracle -- will generate core payments for the auction that restricts outcomes to lie in that range.

\subsection{Comments on implementing cutting plane methods}
\label{sec:practical-considerations}
We observed the following issues while implementing Vaidya's algorithm~\citep{vaidya1989new,vaidya1996new} for the sale of ad space application (we use notations used in Section 2.3 of \citealp{bubeck2015convex}), which we believe makes this algorithm (and other cutting plane methods with similar operations) impractical for our purpose:
\begin{enumerate}
\item Vaidya’s algorithm hugely depends on the precision handling, where precision controls ``how much the final point is close to a minimum revenue core point". In fact, it is very susceptible to overflow as it needs to compute the logarithm function for different ``slack" terms, and overflow easily happens when these slack terms approach to zero. Hence, in practice, the precision has to be very lax in order for the algorithm to converge.
\item Vaidya’s algorithm convergence hugely depends on different parametric choices (e.g., when to break the algorithm based on the volume of the search region, choice of parameter $\beta$ in each iteration, etc.). So, even with a lax precision it takes a long time to converge in some cases.
\end{enumerate}

In implementing Vaidya’s algorithm, we note that as we increase the precision of the solution slightly, its running time degrades significantly. Our experiments show that the running time scales by a factor comparable to $O(\tfrac{1}{\epsilon})$ in order to obtain an small precision of $\epsilon$. This is in contrast to our algorithm where running time only scales by $O(\log(\tfrac{1}{\epsilon}))$ to get the precision $\epsilon$ (and our experiments validate this fact as well). Another important factor in practice is the ignored constants of the big O notation in the running time. We again argue that comparing to our algorithm Vaidya’s algorithm has much larger constants in its running time, as suggested by our experiments and carefully implementing the algorithm following the recipe in Section 2.3 of \citealp{bubeck2015convex}). As a result, even obtaining a point with a rather large precision of $\epsilon=0.1$ requires a long processing time.

To better understand the reasons behind the precision sensitivity, which are also true for other barrier-based cutting plane methods such as \citealp{lee2015faster},  consider the following. The Vaidya’s algorithm has three main parameters that govern the precision of its final solution (and therefore its running time). The first parameter is the stopping threshold for the volume of the search polytope (measured indirectly using the volumetric barrier), below which we stop reducing the feasible region and terminate. The second parameter governs how small the leverage score of a hyperplane associated to a face of the search polytope should be to be removed (see Section 2.3.2 of \citealp{bubeck2015convex}, line (1) in the description of the algorithm for more details). Finally the last parameter is the standard floating point precision which is used for floating point operations in practice. Our experiments (and further playing around with Vaidya's algorithm to optimize its running time) suggest that Vaidya’s running time is very sensitive to the choice of the first and second parameters above, mostly due to the logarithm function used in the volumetric barrier.

\section{Future directions and open problems}
\label{apx:openproblems}
Here is a list of a few concrete theoretical and practical future directions:
\begin{enumerate}
\item Our fast core pricing algorithm (Algorithm~\ref{alg:bidderopt}) does not target any \emph{global} objective function, which is in contrast to other traditional methods such as \cite{day2007fair} (minimizing $\ell_1$ distance from VCG) and \cite{day2012quadratic} (minimizing $\ell_2$ distance from VCG). Can one use the way we explore the core polytope efficiently to optimize a tractable (maybe convex) objective function in a fast fashion?
\item Give the practical considerations in Section~\ref{sec:practical-considerations}, how can one adapt the state of the art cutting plane methods such as \cite{lee2015faster} to work for the sale of ad space problem? Our simulations suggest these algorithms are very sensitive to different precision parameters. Designing a robust version of them sounds like a roadmap to approach this challenge.
\item In sponsored search, advertisers either use sophisticated online learning algorithms, or are relying on automated bidding softwares provided by the platform~\citep{aggarwal2019autobidding}. What can we say about the way they respond to a core selecting auction, in comparison to GSP auction and other auction formats that are common in sponsored search?
\item Designing auctions for video ads is another combinatorial setting that can potentially benefit from a combinatorial auction such as core selecting. An interesting open problem is to use the techniques in our paper to tackle this multi-billion dollar industry problem.
\end{enumerate}
\section{More experimental results for various line counts}
\label{appendix:experiments}
We report the revenue, fairness, running time and query complexity of different algorithms in our experiments for line counts 25, 30, 35, and 45. For revenue results, see Figures~\ref{fig:revenue-25}, \ref{fig:revenue-30}, \ref{fig:revenue-35}, and \ref{fig:revenue-45}. For running time results, see Figures~\ref{fig:runtime-25}, \ref{fig:runtime-30}, \ref{fig:runtime-35}, and \ref{fig:runtime-45}. For query complexity results, see Figures~\ref{fig:query-25}, \ref{fig:query-30}, \ref{fig:query-35}, and \ref{fig:query-45}. Finally, for fairness results, see Figures~\ref{fig:fairness-25}, \ref{fig:fairness-30}, \ref{fig:fairness-35}, and \ref{fig:fairness-45}.

\begin{figure}[h]
\hspace*{-2cm}
\includegraphics[width=6.3in]{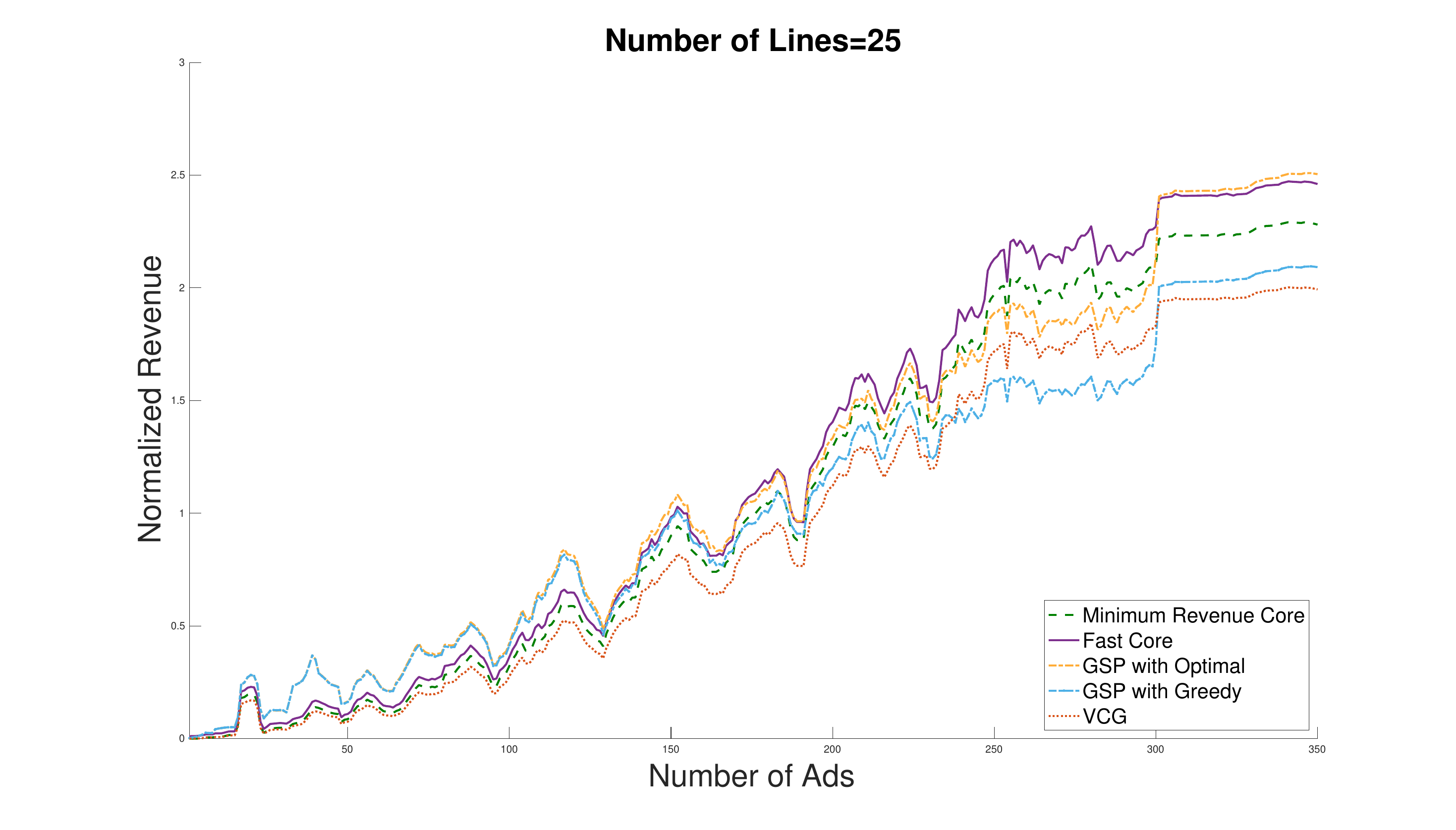}
\centering
\caption{Normalized revenues versus total number of ads for line count=25.\label{fig:revenue-25}}
\end{figure}

\begin{figure}[h]
\hspace*{-2cm}
\includegraphics[width=6.3in]{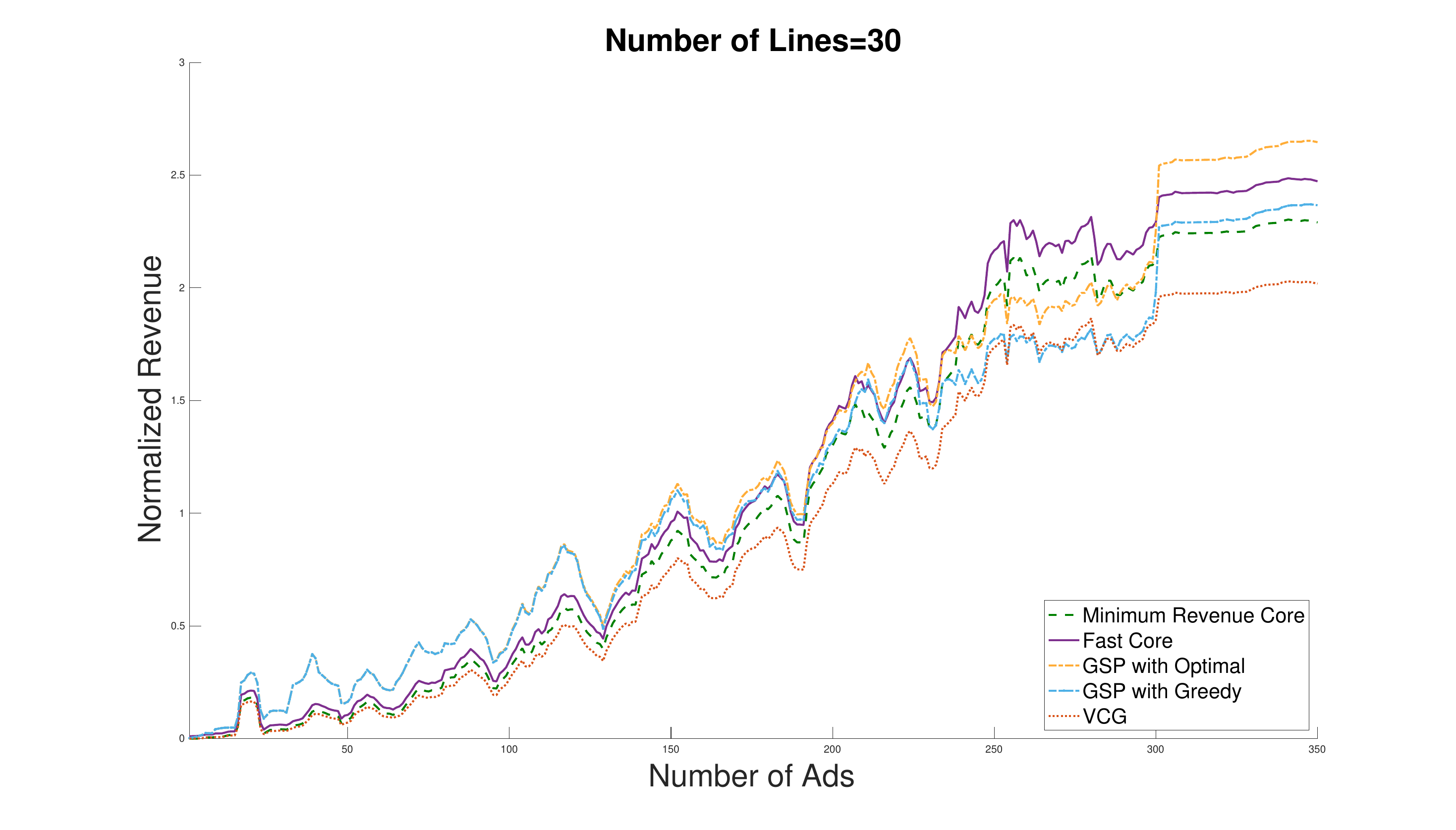}
\centering
\caption{Normalized revenues versus total number of ads for line count=30.\label{fig:revenue-30}}
\end{figure}

\begin{figure}[h]
\hspace*{-2cm}
\includegraphics[width=6.3in]{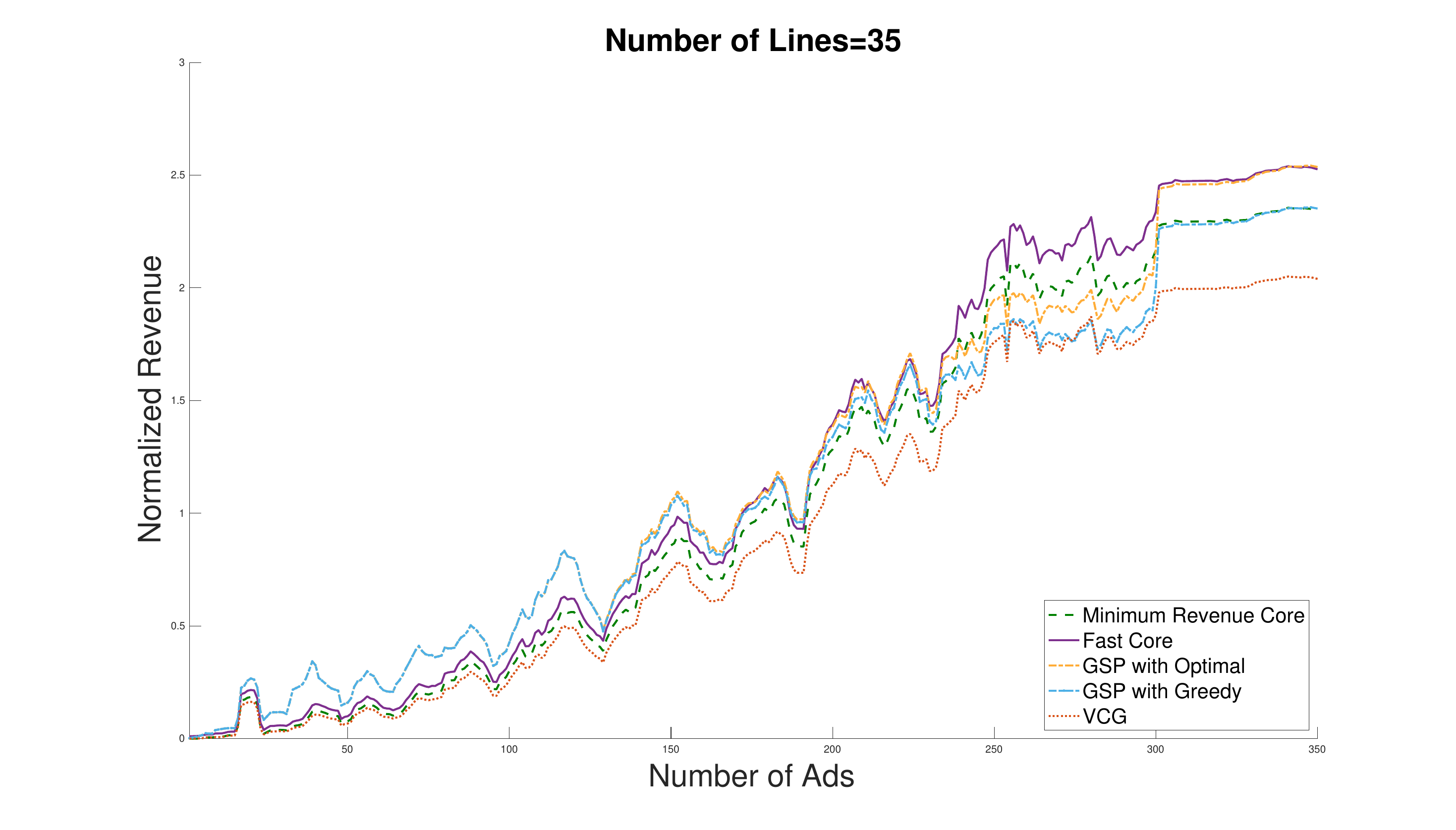}
\centering
\caption{Normalized revenues versus total number of ads for line count=35.\label{fig:revenue-35}}
\end{figure}

\begin{figure}[h]
\hspace*{-2cm}
\includegraphics[width=6.3in]{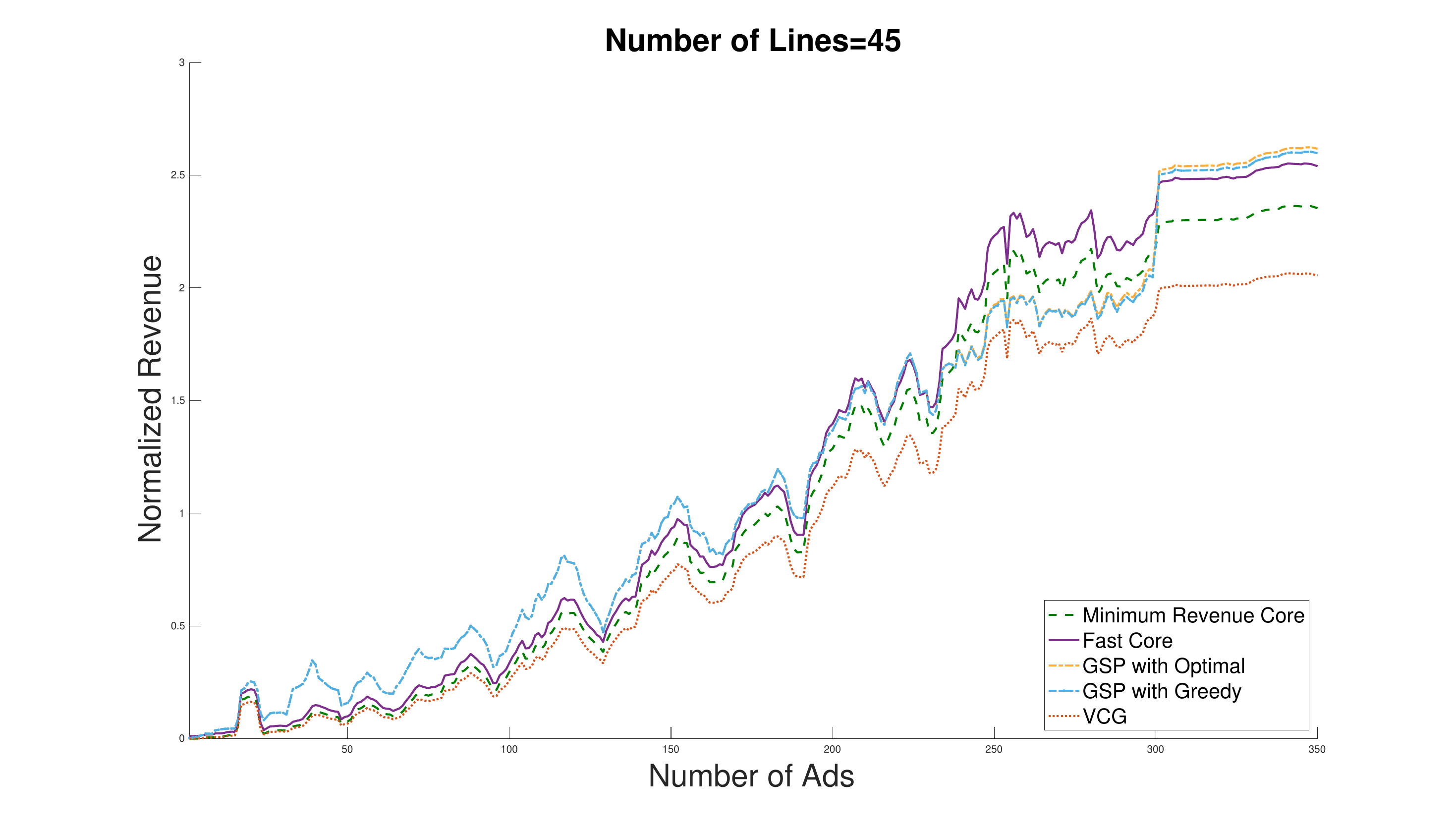}
\centering
\caption{Normalized revenues versus total number of ads for line count=45.\label{fig:revenue-45}}
\end{figure}

\begin{figure}[h]
\hspace*{-2cm}
\includegraphics[width=6.3in]{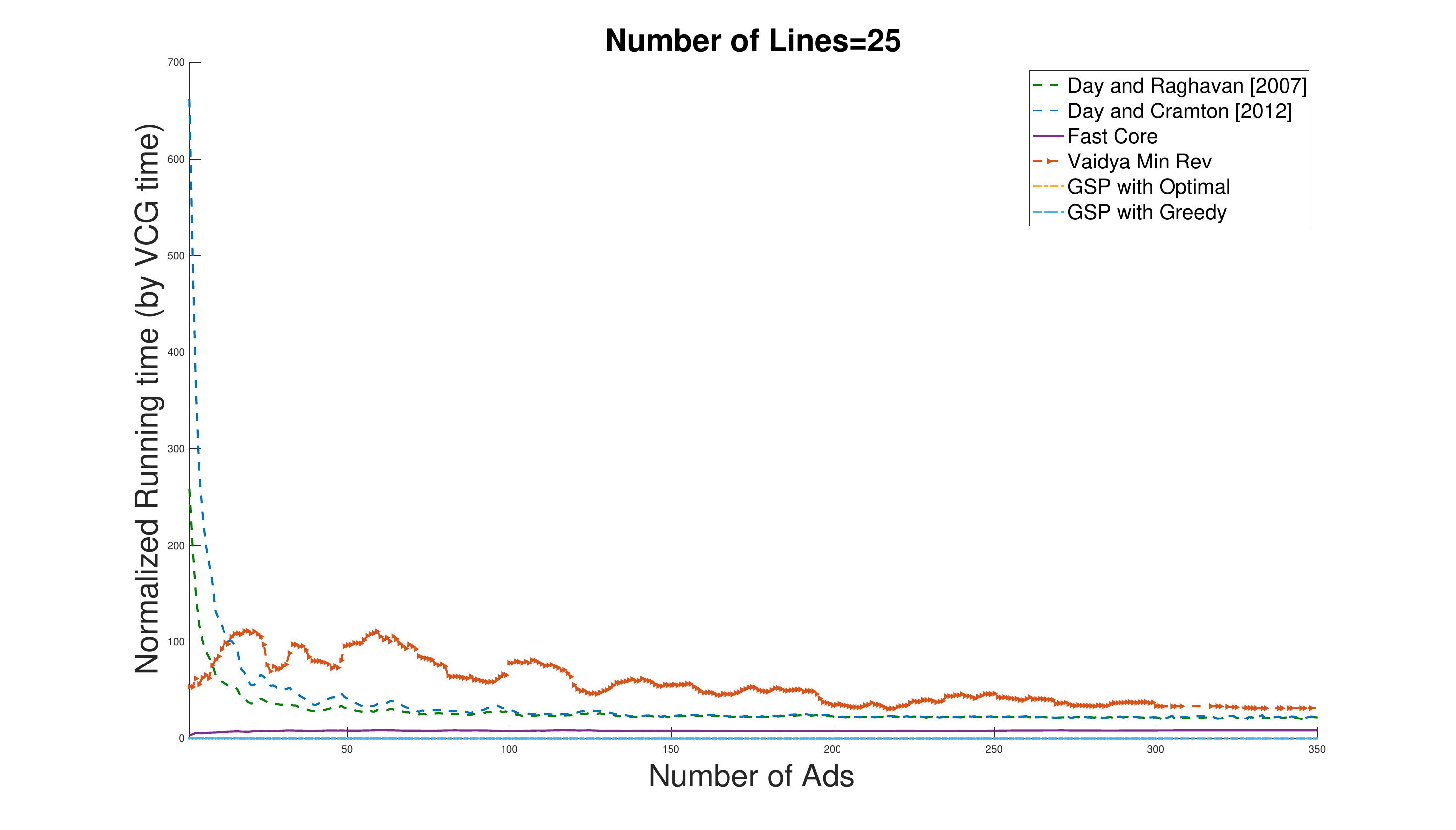}
\centering
\caption{Normalized running times versus total number of ads for line count=25.\label{fig:runtime-25}}
\end{figure}
\begin{figure}[h]
\hspace*{-2cm}
\includegraphics[width=6.3in]{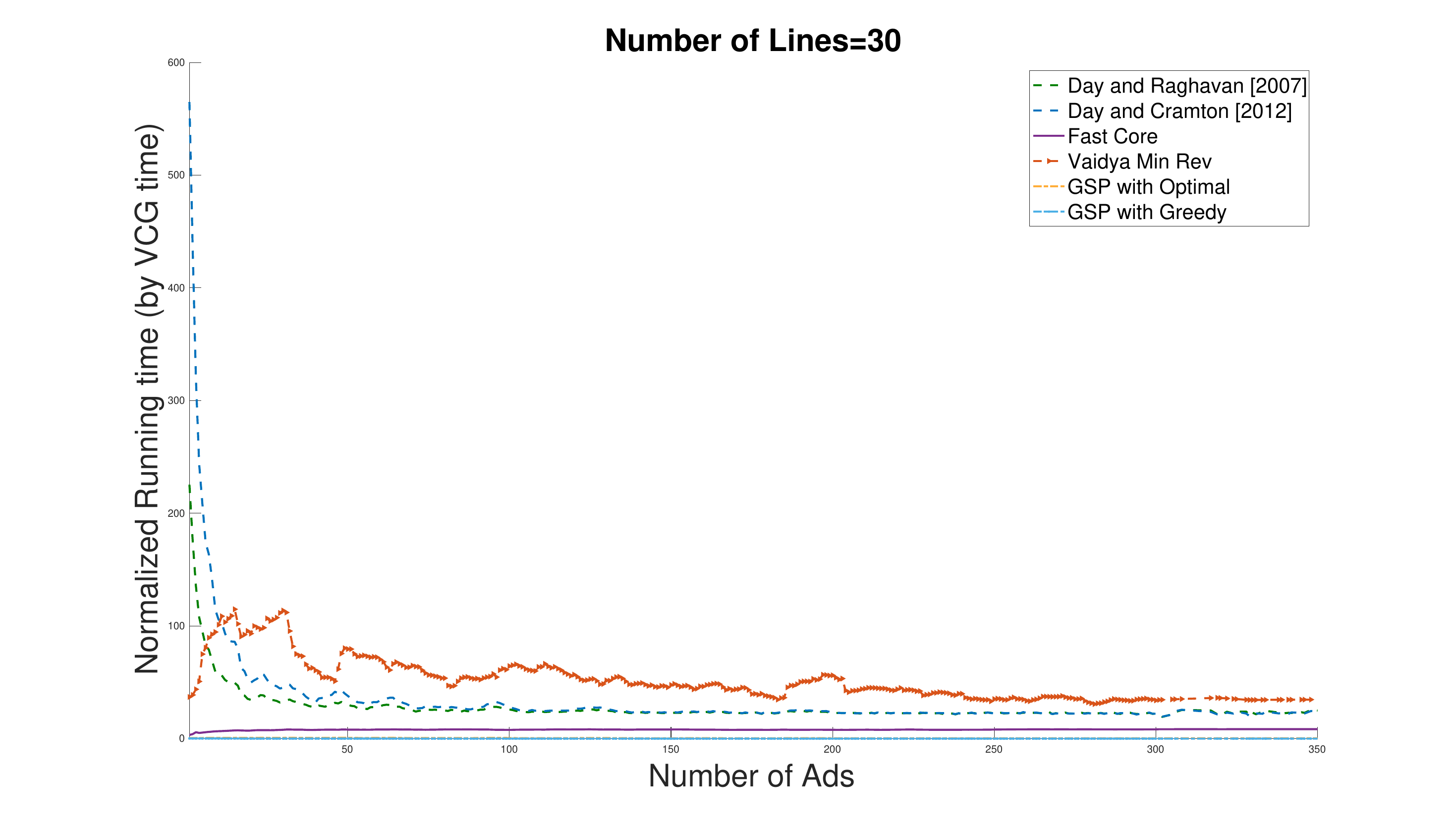}
\centering
\caption{Normalized running times versus total number of ads for line count=30.\label{fig:runtime-30}}
\end{figure}

\begin{figure}[h]
\hspace*{-2cm}
\includegraphics[width=6.3in]{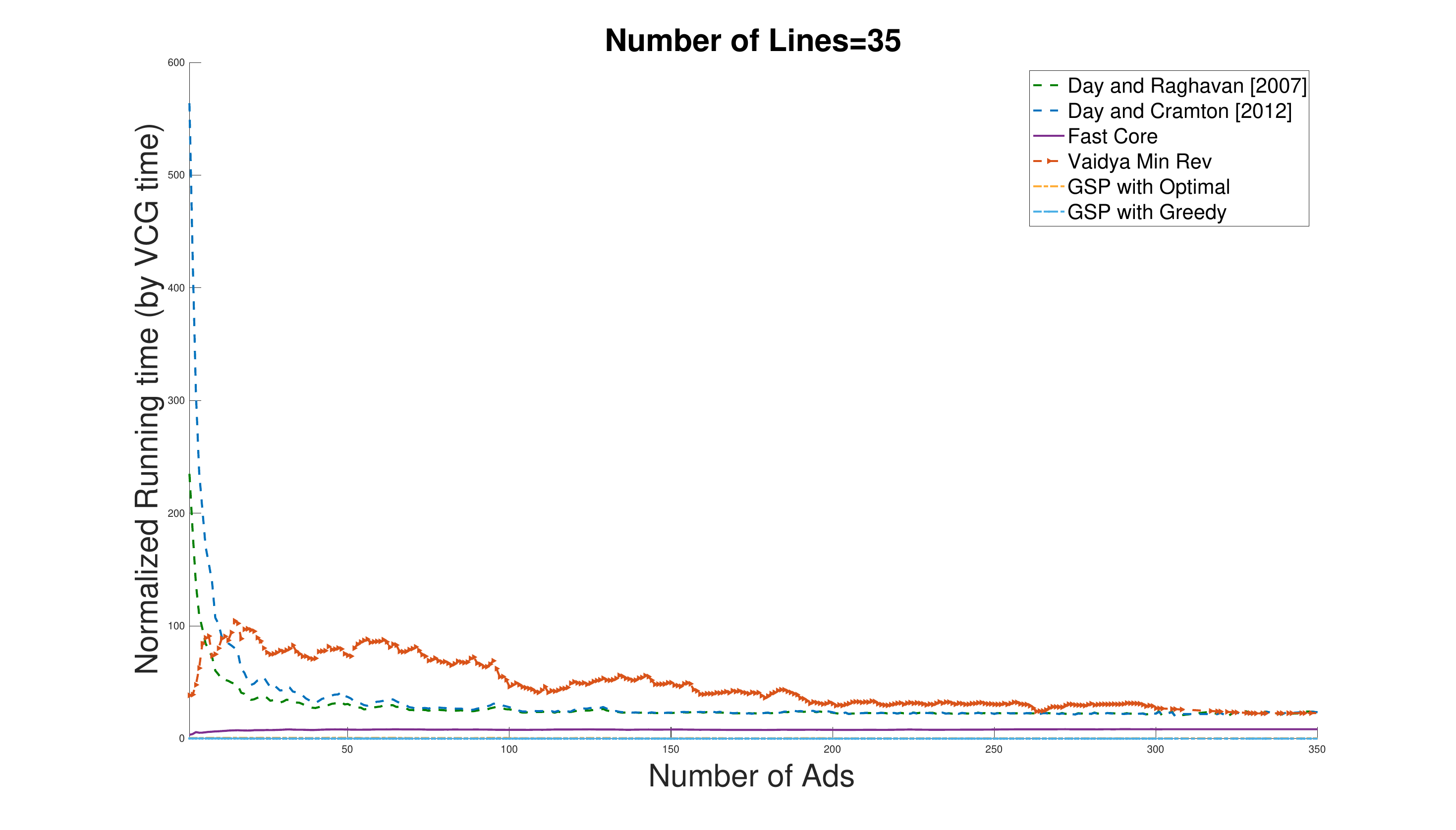}
\centering
\caption{Normalized running times versus total number of ads for line count=35.\label{fig:runtime-35}}
\end{figure}
\begin{figure}[h]
\hspace*{-2cm}
\includegraphics[width=6.3in]{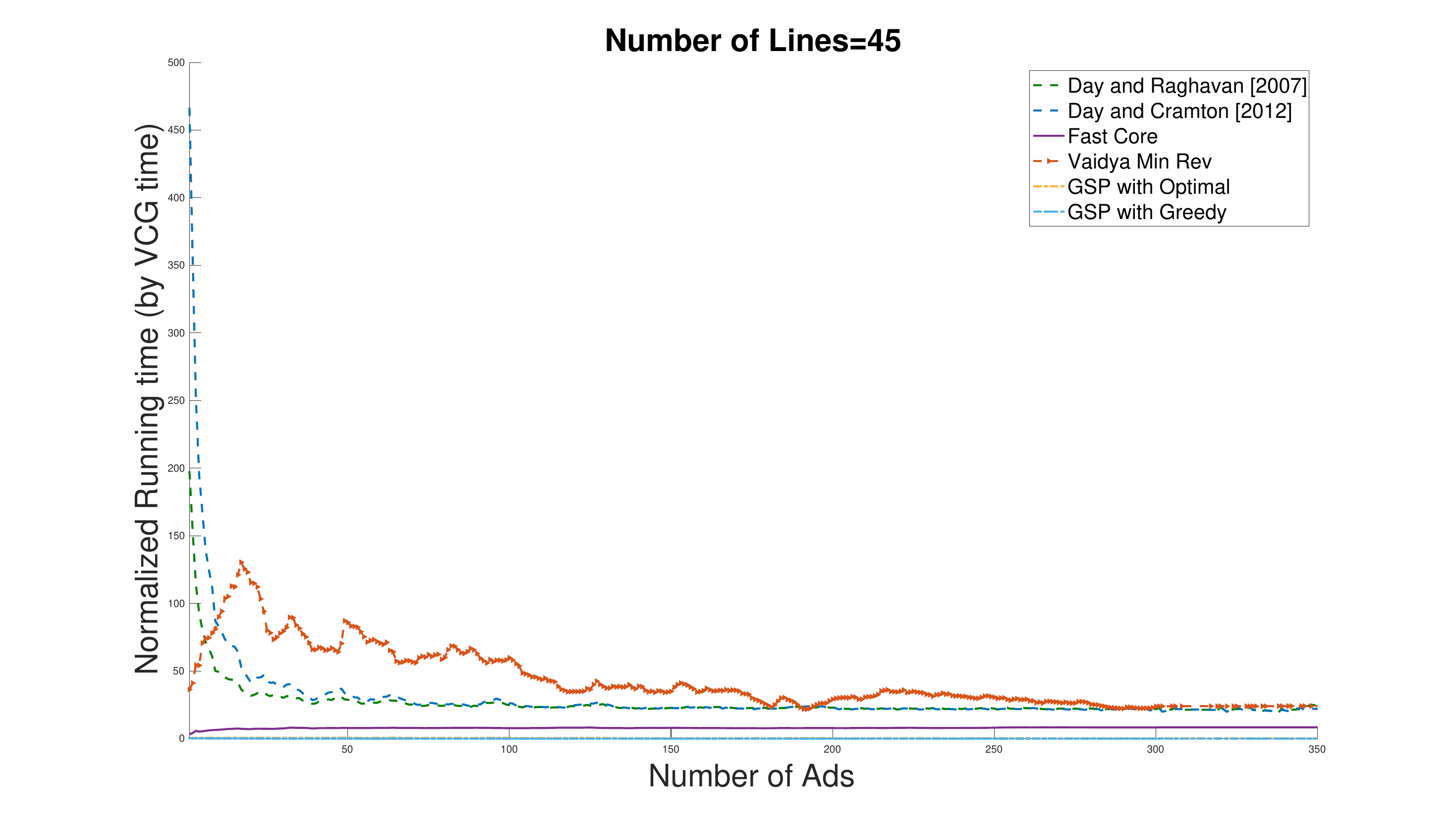}
\centering
\caption{Normalized running times versus total number of ads for line count=45.\label{fig:runtime-45}}
\end{figure}

\begin{figure}[h]
\hspace*{-2cm}
\includegraphics[width=6.3in]{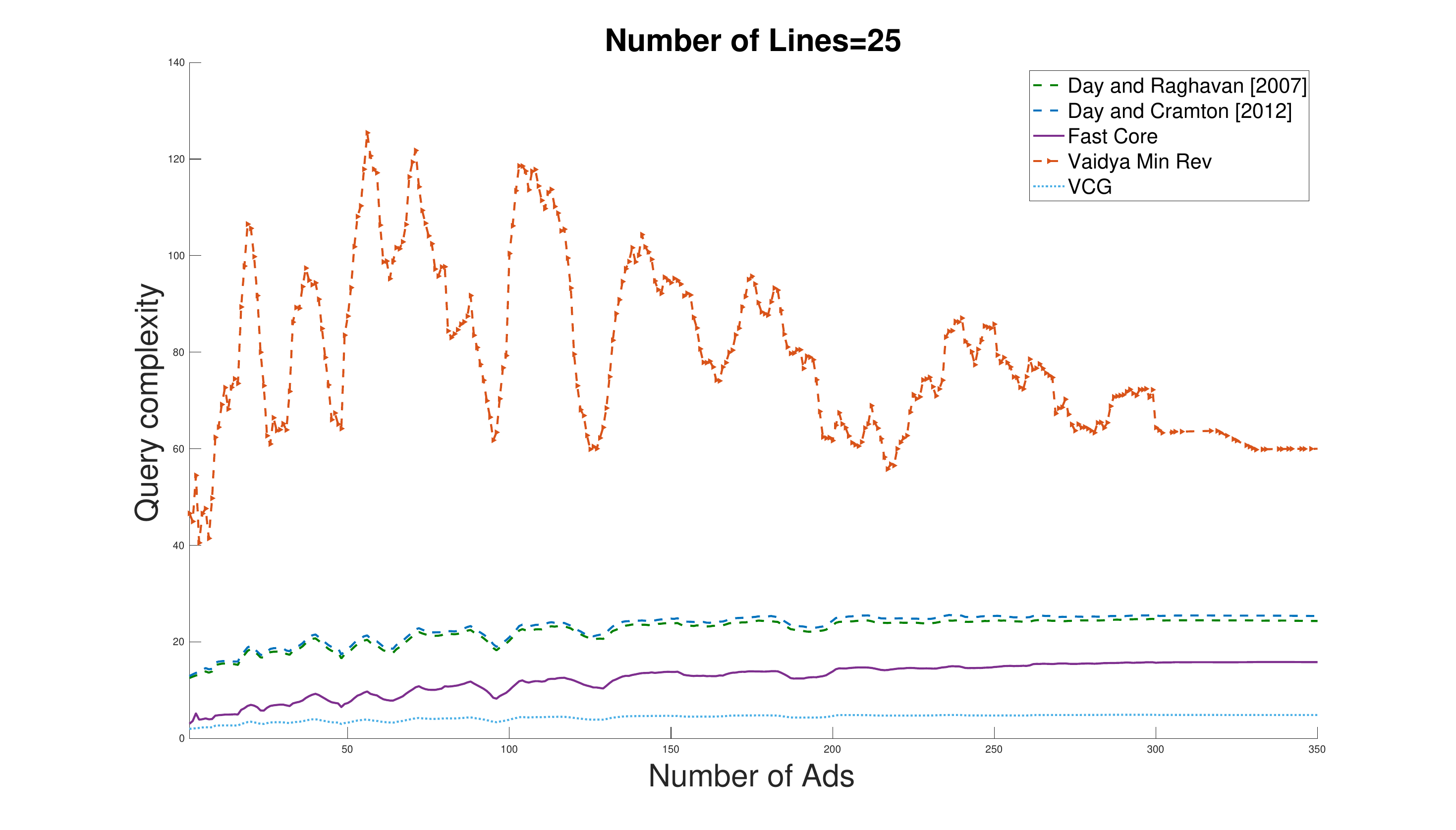}
\centering
\caption{Query complexity versus total number of ads for line count=25.\label{fig:query-25}}
\end{figure}

\begin{figure}[h]
\hspace*{-2cm}
\includegraphics[width=6.3in]{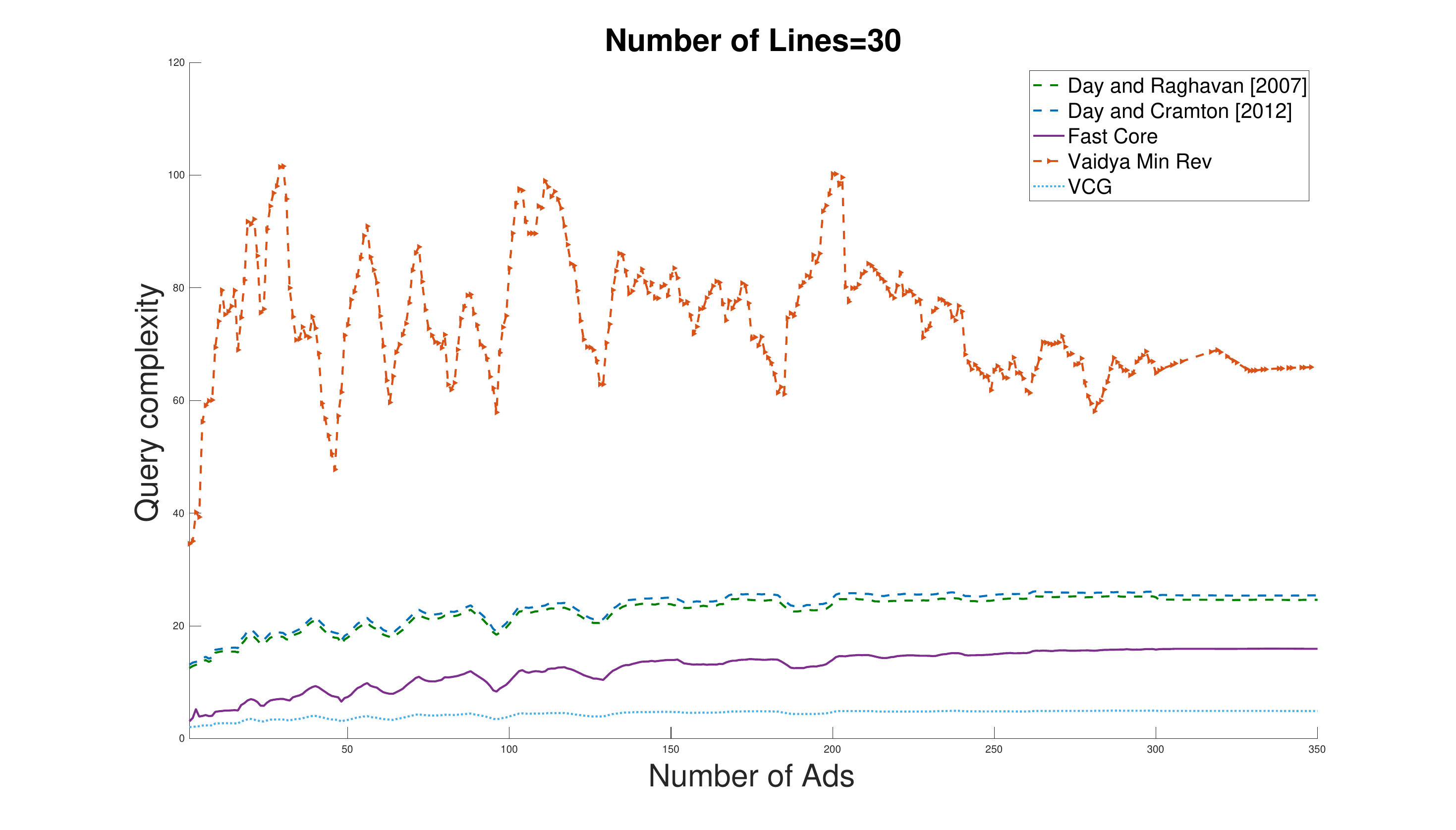}
\centering
\caption{Query complexity versus total number of ads for line count=30.\label{fig:query-30}}
\end{figure}

\begin{figure}[h]
\hspace*{-2cm}
\includegraphics[width=6.3in]{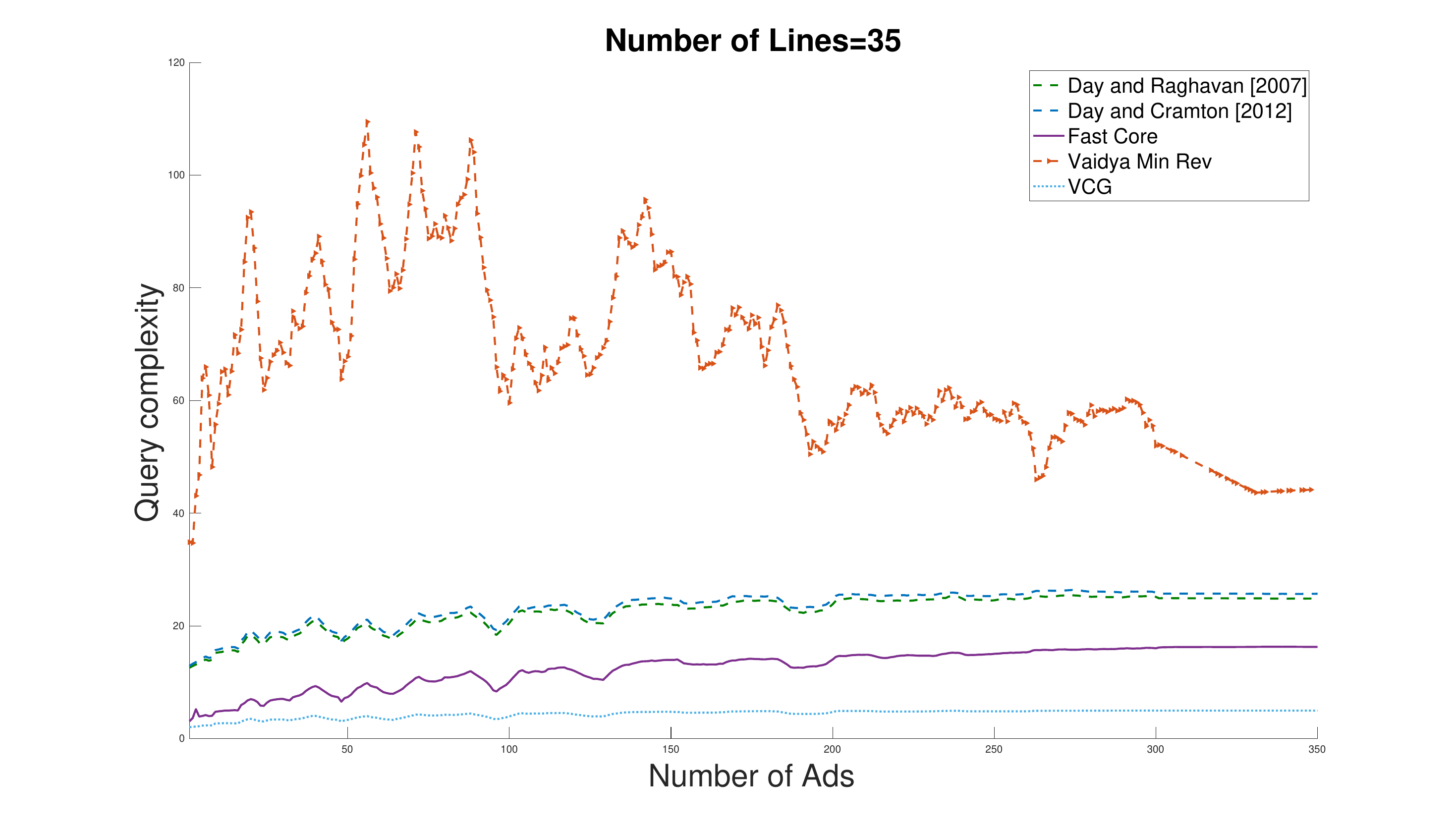}
\centering
\caption{Query complexity versus total number of ads for line count=35.\label{fig:query-35}}
\end{figure}

\begin{figure}[h]
\hspace*{-2cm}
\includegraphics[width=6.3in]{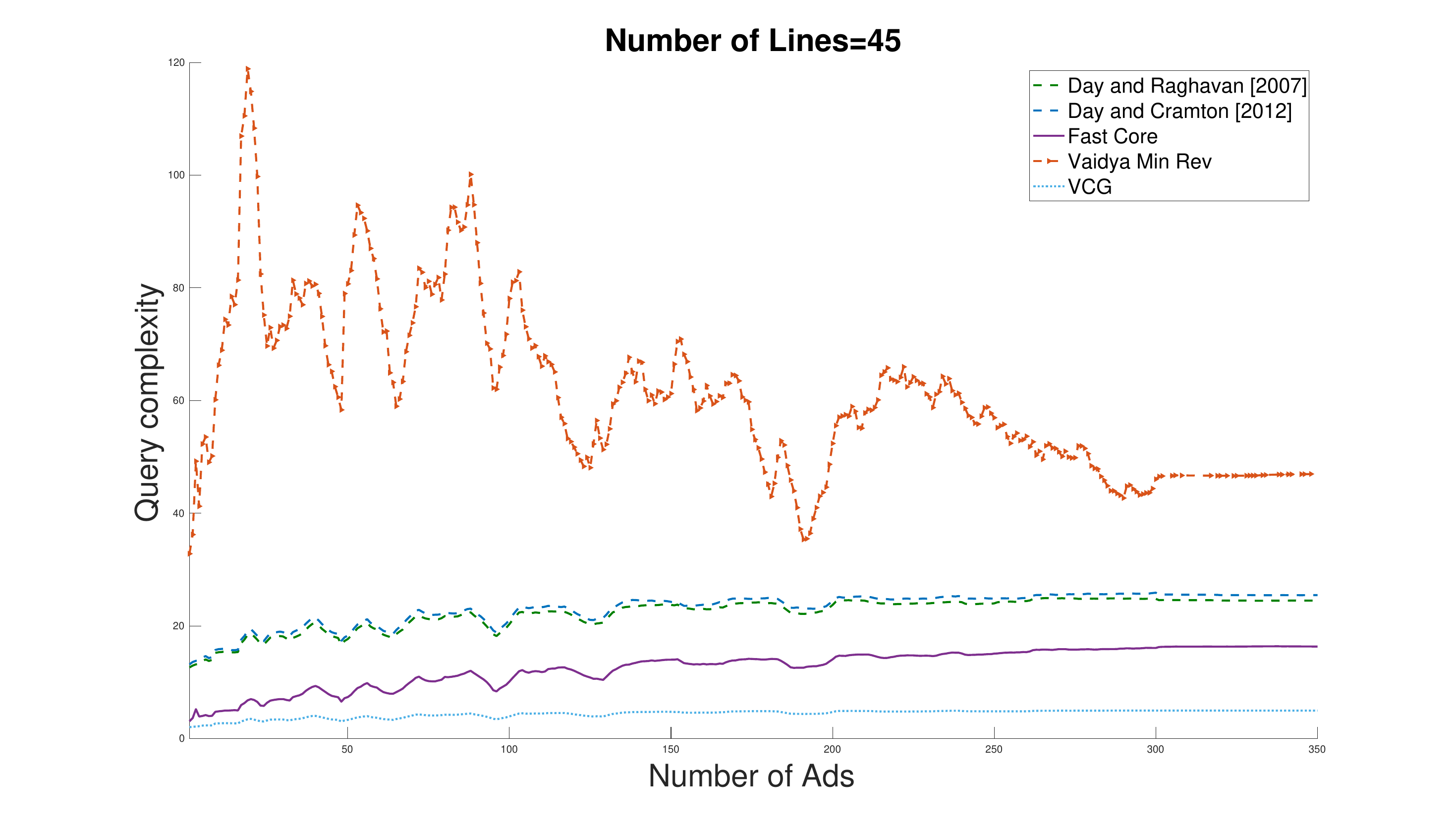}
\centering
\caption{Query complexity versus total number of ads for line count=45.\label{fig:query-45}}
\end{figure}

\begin{figure}[h]
\hspace*{-2cm}
\includegraphics[width=6.3in]{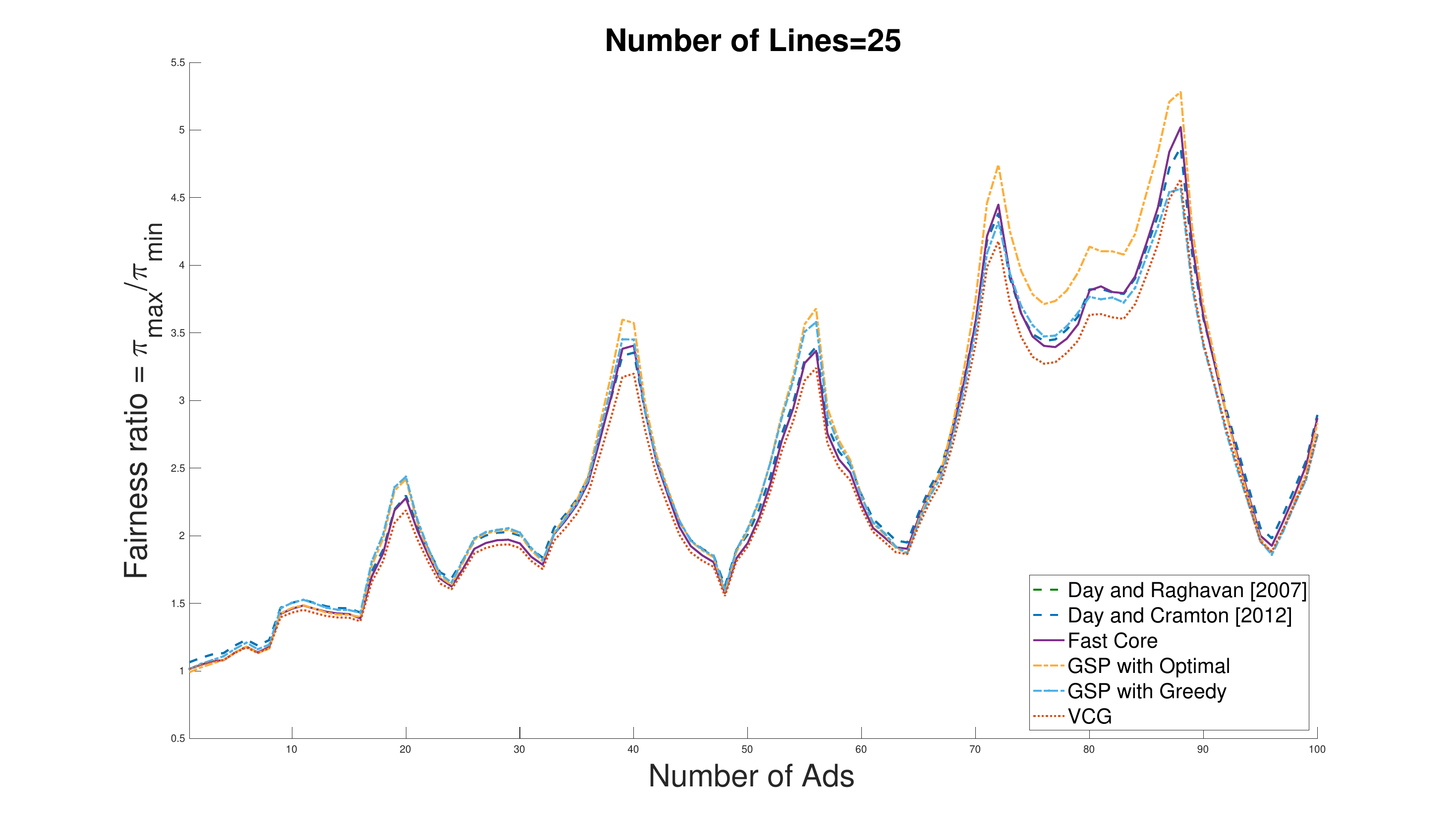}
\centering
\caption{Fairness ratio versus total number of ads for line count=25.\label{fig:fairness-25}}
\end{figure}

\begin{figure}[h]
\hspace*{-2cm}
\includegraphics[width=6.3in]{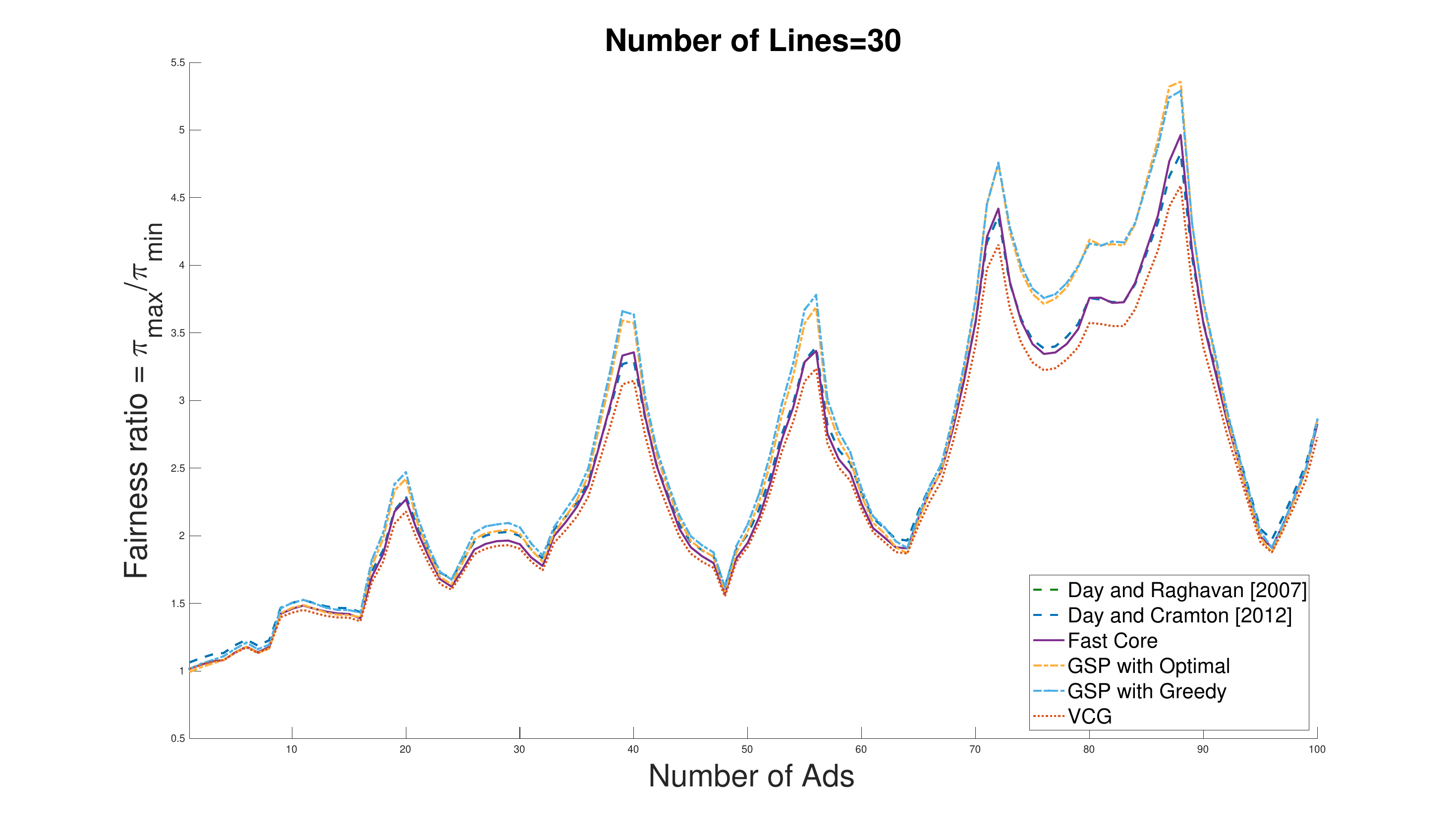}
\centering
\caption{Fairness ratio versus total number of ads for line count=30.\label{fig:fairness-30}}
\end{figure}
\begin{figure}[h]
\hspace*{-2cm}
\includegraphics[width=6.3in]{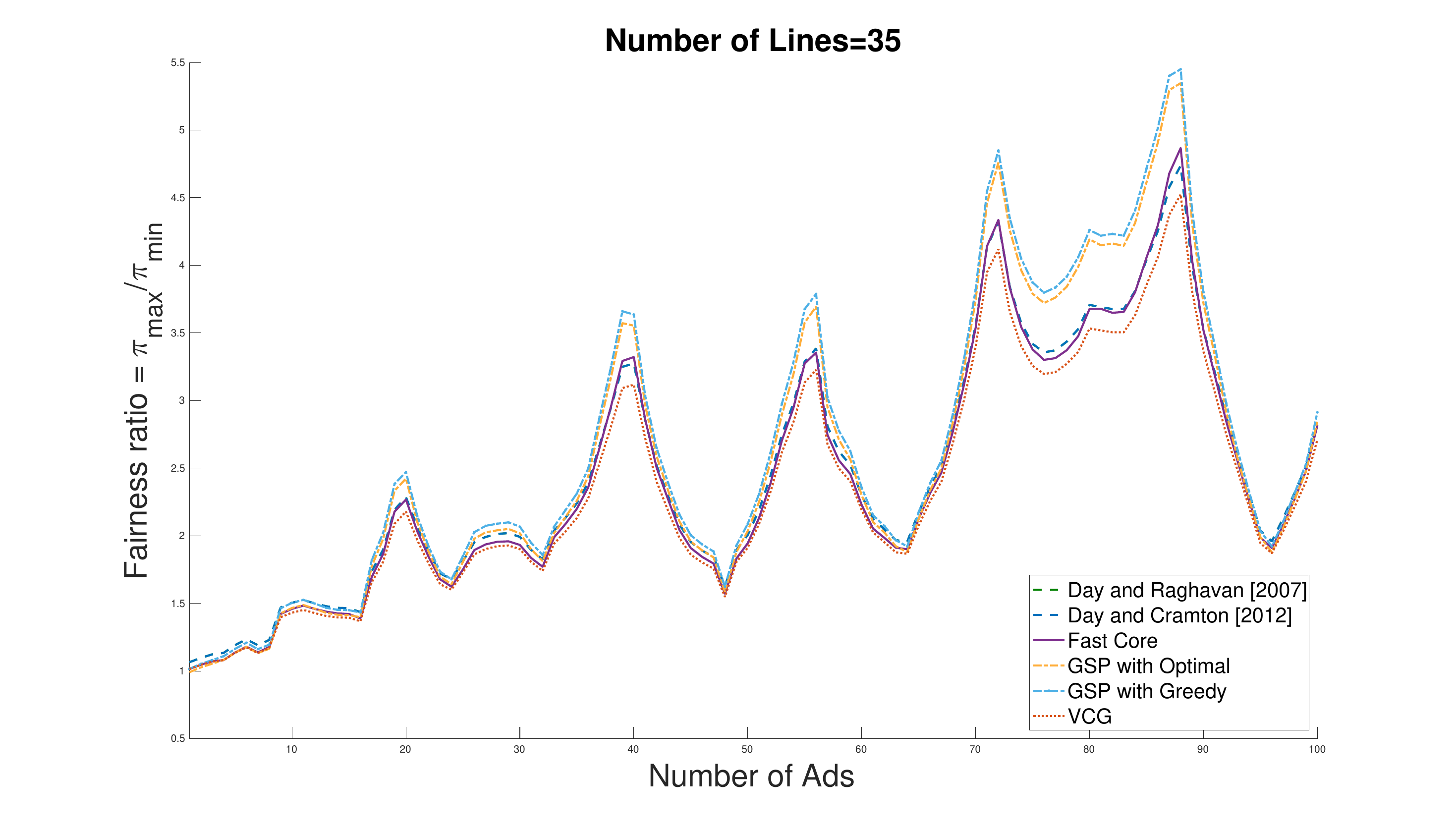}
\centering
\caption{Fairness ratio versus total number of ads for line count=35.\label{fig:fairness-35}}
\end{figure}
\begin{figure}[h]
\hspace*{-2cm}
\includegraphics[width=6.3in]{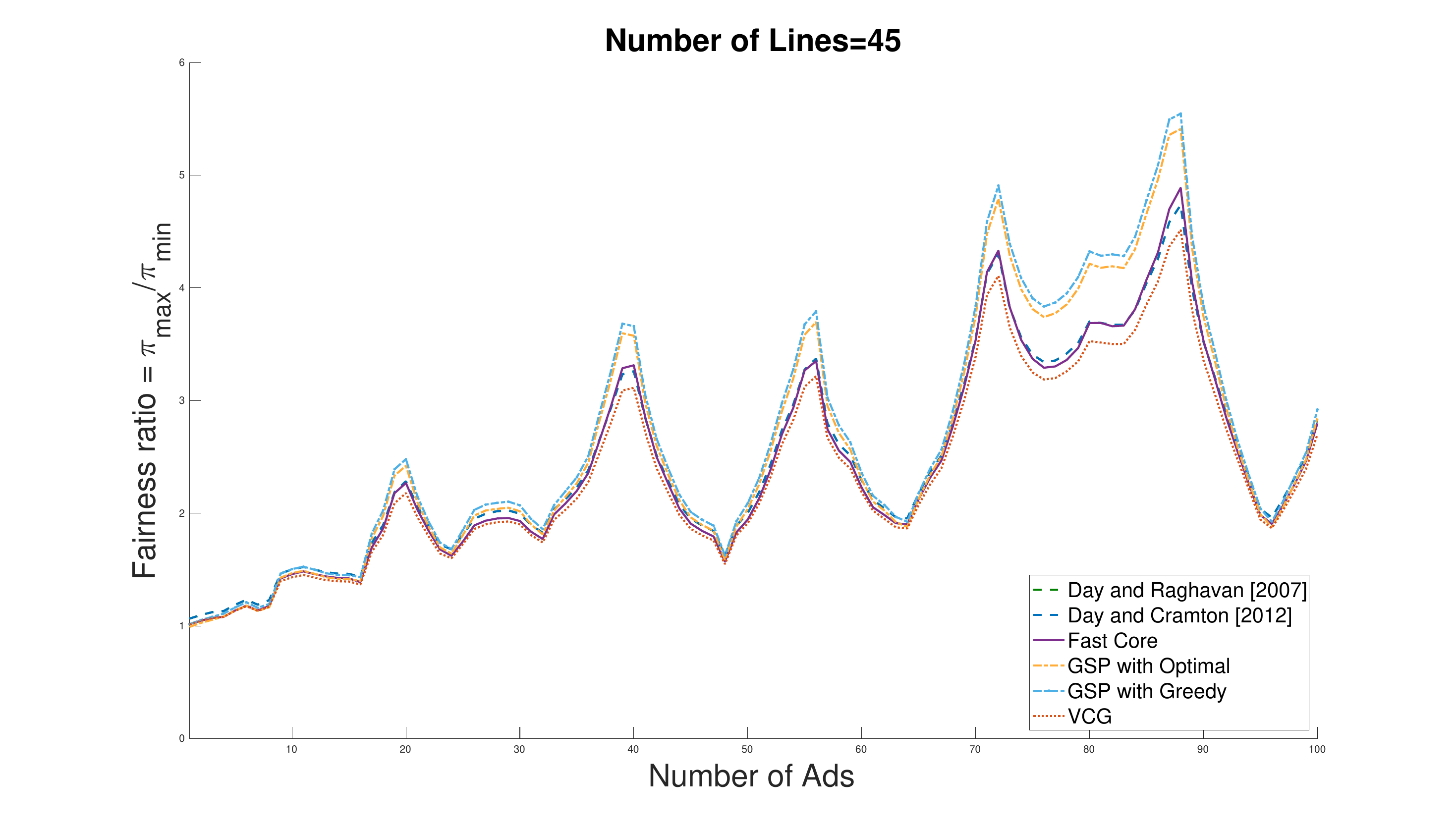}
\centering
\caption{Fairness ratio versus total number of ads for line count=45.\label{fig:fairness-45}}
\end{figure}

 \end{APPENDICES}
%%%%%%%%%%%%%%%%%
\end{document}